%% file: arxiv.tex
\documentclass[runningheads]{llncs}
\usepackage[T1]{fontenc}
\usepackage{graphicx}
\usepackage{url}
\usepackage{graphicx}
\usepackage{todonotes}
\usepackage{amsmath}
\usepackage{float}
\usepackage{tcolorbox}
\usepackage{xcolor}
\usepackage[n,advantage,operators,sets,adversary,landau,notions,logic,ff,mm,primitives,events,complexity,asymptotics,keys,probability]{cryptocode}

\usepackage{hyperref}
\usepackage{color}

\newcommand{\Comment}[1]{\textcolor{gray}{ \textsf{// #1}}}

\newcommand{\gray}[1]{\textcolor{gray}{#1}}
\newcommand{\white}[1]{\textcolor{white}{#1}}

\newcommand{\hidden}[1]{\textcolor{gray}{#1}}
\newcommand{\intent}[1]{\textcolor{gray}{#1}}

\newcommand{\alisanotes}[1]{{#1}}
\newcommand{\alisa}[1]{{#1}}
\newcommand{\lizanotes}[1]{{#1}}
\newcommand{\liza}[1]{{#1}}
\newcommand{\lizanew}[1]{{#1}}

\title{PRETTINESS - Privacy pResErving aTTrIbute maNagEment SyStem}

\newcommand{\env}{\mathcal{Z}}
\newcommand{\values}[1]{\mathcal{V}_{{#1}}}

\newcommand{\pin}{\mathsf{PIN}}
\newcommand{\uid}{\mathsf{uID}}

\newcommand{\aid}{\mathsf{aID}}

\newcommand{\aids}{\overrightarrow{\aid}}
\newcommand{\aidsd}{\aids_{\textsf{dscl}}}
\newcommand{\iid}{\mathsf{IID}}
\newcommand{\cid}{\mathsf{cID}}
\newcommand{\cids}{\overrightarrow{\cid}}

\newcommand{\rtag}{\mathsf{tag}}
\newcommand{\req}{\mathsf{req}}
\newcommand{\cred}{\mathsf{cred}}
\newcommand{\creds}{\overrightarrow{\cred}}
\newcommand{\pres}{\mathsf{pres}}
\newcommand{\pcred}{\pres}
\newcommand{\ok}{\mathsf{ok}}

\newcommand{\rev}{\mathsf{rev}}

\newcommand{\atr}{\mathsf{atr}}
\newcommand{\atrs}{\overrightarrow{\atr}}
\newcommand{\atrsd}{\atrs_{\textsf{dscl}}}
\newcommand{\ver}{\mathsf{Ver}}
\newcommand{\present}{\mathsf{Present}}

\newcommand{\id}{\mathsf{id}}

\newcommand{\store}{\mathsf{ST}}
\newcommand{\sid}{\mathsf{sid}}
\newcommand{\idfun}{\mathcal{F}}
\newcommand{\ifun}{\mathcal{F}_{\textsf{ideal}}}
\newcommand{\funs}{\mathcal{F}_{\textsf{sign}}}

\newcommand{\funsign}{\mathcal{F}_{\textsf{tsign}}}
\newcommand{\fundec}{\mathcal{F}_{\textsf{tdecrypt}}}
\newcommand{\funrev}{\mathcal{F}_{\textsf{rev}}}

\newcommand{\tiss}{\mathsf{T}_{\textsf{issue}}}

\newcommand{\funrevoke}{\funrev}

\newcommand{\system}{\pi_{\textsf{PRETTINESS}}}

\newcommand{\zenv}{\mathcal{Z}}
\newcommand{\sidset}{\mathcal{S}_{ID}}
\newcommand{\sids}{S_{ID}}

\newcommand{\tsign}{\mathsf{T}_{\textsf{sign}}}
\newcommand{\tenc}{\mathsf{T}_{\textsf{enc}}}

\newcommand{\tisign}{\mathsf{T}_{\textsf{Isign}}}

\newcommand{\ctr}{\textsf{ctr}}

\newcommand{\user}[1]{\mathcal{U}_{{#1}}}
\newcommand{\revoker}[1]{\mathcal{R}_{{#1}}}

\newcommand{\revoked}{X^{-}}

\newcommand{\leak}[2]{\mathcal{L}^{{#2}}_{{#1}}}
\newcommand{\tokens}{\mathcal{X}}
\newcommand{\parties}{\mathcal{P}}
\newcommand{\simm}{\mathcal{S}}
\newcommand{\corrupted}{\mathcal{C}}
\newcommand{\issuersim}{\textit{Issuer}}
\newcommand{\usersim}{\textit{User}}

\newcommand{\send}{\longrightarrow}

\newcommand{\pirkeys}{\mathcal{X}_{\mathsf{id}}}

\newcommand{\db}[1]{\mathsf{DB}_\mathrm{{#1}}}

\newcommand{\verdict}{\mathsf{blame}}

\newcommand{\mal}[1]{\overline{{#1}}}

\newcommand{\presentedv}{\mathsf{pres}(V)}
\newcommand{\revokedu}{\mathsf{notif}(U)}
\newcommand{\revsids}[1]{\mathsf{rev\_sids}({{#1}})}

\newcommand{\ch}{\mathsf{ch}}

\newcommand{\issuereq}{\rtag_{issue}}
\newcommand{\getcidsreq}{\rtag_{cids}}
\newcommand{\revokereq}{\rtag_{rev}}
\newcommand{\presentreq}{\rtag_{pres}}
\newcommand{\verifyreq}{\rtag_{verify}}

\newcommand{\hcred}{h_{\cred}}

\newcommand{\brokencons}{\mathsf{brokenCons}}

\newcommand{\symenc}[2]{\mathsf{SE}({#1},{#2})}
\newcommand{\symdec}[2]{\mathsf{SD}({#1},{#2})}
\newcommand{\foutput}[2]{(\mathsf{Output},{{#1}},{{#2}})}
\newcommand{\ciphertxtset}{\mathsf{C}}
\newcommand{\plaintxtset}{\mathsf{M}}

\begin{document}
\title{PRETTINESS - Privacy pResErving aTTrIbute maNagEment SyStem}
%
%
\author{Jelizaveta Vakarjuk\inst{1,2} \and Alisa Pankova\inst{1}\\
\email{\{jelizaveta.vakarjuk, alisa.pankova\}@cyber.ee}}
\institute{Cybernetica AS, Mäealuse 2/1, 12618 Tallinn, Estonia \and
Tallinn University of Technology, Akadeemia tee 15a, 12618, Tallinn, Estonia}
\authorrunning{Vakarjuk and Pankova}
%
%

%
\maketitle              

\begin{abstract}
 Recent advancements in the European regulatory frameworks (eIDAS2, GDPR) as well as increasing awareness of user privacy needs, have fostered a growing demand for identity management solutions that prioritise privacy. Attribute-based credential systems offer a solution, allowing users to prove ownership of their attributes to relying parties while selectively sharing specific attributes. While many systems have been proposed, and some systems have practical implementations, these solutions often store attributes locally on user devices.

In this work, we propose an attribute management system, where the user attributes are stored in encrypted form by the attribute management service (AMS). AMS helps to offload the storage from the user's mobile device and manage all the operations with credentials, including revocation. Our system is built by carefully balancing between the centralised and decentralised identity system approaches. The system relies on AMS, at the same time offering privacy against AMS, and ensuring integrity even if the AMS is corrupted. We prove security of the proposed system in the universal composability model. 
\end{abstract}



\keywords{attribute-based credentials, privacy, universal composability}


\section{Introduction}

With the changes in the regulatory framework of eIDAS2, there is an increasing demand in the research of identity management solutions with privacy guarantees. eIDAS2 defines the new European Digital Identity (EUDI) Wallet concept that aims to provide a user with the way to manage and present their credentials to the relying parties (RP), while trying to ensure interoperability of solutions between the EU member states. The regulation~\cite{regulation} lists security and privacy requirements for wallets, mentioning \emph{selective disclosure of data}, \emph{generating pseudonyms and storing them encrypted}, \emph{accessing a log of all transactions} (Article 5a, §4)
as well as \emph{not allowing any party, after the issuance of the attestation of attributes, to obtain data that allows transactions or user behaviour to be tracked, linked or correlated unless explicitly authorised by the user} and \emph{enabling privacy preserving techniques which ensure unlinkability} (Article 5a, §16). The currently proposed solution is defined in The Architecture and Reference Framework (ARF, \lizanew{version 2.7.3})~\cite{wallet-arf}. \lizanew{Additionally, the ARF defines wallet functionalities, including ability of creating backup of a list of user's attributes, attestations, and configurations as well as revocation by the issuer.} In our view, the currently proposed solution defined in The Architecture and Reference Framework (ARF), is lacking some of the functionality and security properties. Similar concerns were raised in the Cryptographers' Feedback on the EU Digital Identity’s ARF (\lizanew{version 1.4.0})~\cite{cryptographers}. However, we look at it from the different angle, trying to balance between the practical usability of the system and providing reasonable security and privacy guarantees. 

According to the system description, the user is responsible for managing and storing the attributes themselves on their (mobile) device. \alisa{However, to ensure a safe backup of credential data,}
it is beneficial to have an Attribute Management Service (AMS) that would be responsible for storing user's credentials. \alisa{Such a service would also} help with the process of issuing new credentials and revoking old ones. \lizanew{In this work, we introduce PRETTINESS -- Privacy pResErving aTTrIbute maNagEment
SyStem.} In design of \lizanew{Prettiness}, we are trying to find the balance between the centralised and decentralised identity management systems. We ensure that the user is in charge of their credentials, allowing selective disclosure and revocation by the user. At the same time, a centralised AMS server helps user with the storage and management of those credentials. \lizanew{Additionally, while details of logging user transactions are omitted from the description of our protocol, this functionality can be added to the system by saving user's signatures that are created to authorise each transaction with their credential.} \lizanew{One more detail that we omit from the current definition of the protocol is the use of pseudonyms. However, we note that this functionality can be added to the system through the user identifiers $\uid$ being used as pseudonyms.}

In this work, we use revocation functionality that works through AMS and allows RP to verify the validity of presented credential without the need to request any additional information from the issuer, \alisa{which is important to ensure that presentations cannot be tracked by the issuer}. On the other hand, the AMS does not learn which credential is being revoked \alisa{ or presented}. We allow two types of credential revocation: by the user and by the issuer, where revocation by the user does not require communicating with the credential issuer.

\alisa{In our system, the user should be able to decrypt stored credentials and generate signatures.} We ensure the protection of the cryptographic keys that need to be stored on the user's mobile device by using server-supported threshold cryptography for signing and decryption, \alisa{sharing the secret key between the user device and a supporting server. The same AMS can take the role of supporting server, storing the second key share and assisting in blind server-supported decryption and signing.}

We describe the resulting system in terms of an ideal functionality, construct a protocol that implements that functionality, and provide a security proof in the universal composability~\cite{UC} model.


This paper is organized as follows. The background and related work is given in Sec.~\ref{sec:background}. The building blocks that will be used by the proposed systems are described in Sec.~\ref{sec:bb}. The system itself is presented in Sec.~\ref{sec:main}. The efficiency of the system is evaluated in Sec.~\ref{sec:efficiency}. The results are concluded in Sec.~\ref{sec:conclusion}.

\section{Background}\label{sec:background}

\subsection{Related work}
There are several well-known attribute-based credential (ABC) schemes that maintain privacy of users, allowing users to prove possession of credentials without disclosing any additional information. Among these schemes are Idemix~\cite{idemix} and U-Prove~\cite{u-prove}. Idemix is provably unlinkable and unforgeable, and its security relies on the strong RSA problem. There are several real-world implementations of Idemix as PrimeLife\footnote{\url{http://primelife.ercim.eu/about/factsheet}, last visited December 2025}, and Yivi app with IRMA backend system\footnote{\url{https://docs.yivi.app/what-is-yivi/}, last visited December 2025}. U-Prove is more efficient compared to Idemix, as it can be implemented using elliptic curves, but it does not provide unlinkability. There is a number of other attribute-based systems that utilise bilinear pairings~\cite{atr-with-pairings,atr-with-pairings-2,atr-with-pairings-3}, where performing operations with pairings is more expensive than performing exponentiations on an elliptic curve.

Additionally, there is a line of works focusing on the cloud-based ABC systems, where a dedicated cloud service helps to present the users’ credentials to the relying party. This approach is useful when there are efficiency constraints on the user's device. In~\cite{towards-cloud} authors have proposed a scheme, where the computations have been outsourced to a semi-trusted cloud, and the main underlying cryptographic construction is proxy re-encryption. Subsequent work~\cite{breaking-fixing} focus on improvements of the proposed scheme and mitigating possible attacks.  

\lizanew{The Request For Comments (RFC)~\cite{sd-jwt} defines} a mechanism that allows the user to selectively disclose attributes from the JSON Web Token (SD-JWT) for the presentation. The idea behind the solution is to create hash commitments to the attribute values that can be later opened by the user by providing the attribute value itself with the salt used to create commitments.
\liza{Additionally, the RFC defines Key Binding JWT, where the user's signature over SD-JWT is 
\alisa{included.}
This signature servers as proof of possession of the user's private key to the verifier of the credential presentation. More precisely, the user is required to sign an issuer-signed JWT and corresponding disclosures selected for the presentation to the verifier. The verifier uses the user's public key contained in SD-JWT to verify this signature.}
This format is referenced in the EUDI Wallet ARF~\cite{wallet-arf} as one of a few suitable standardised formats for releasing electronic attestations of attributes. In this work, we follow this approach as it describes an option of selectively disclosing attributes. 

Another approach referenced in the EUDI Wallet ARF is mobile driver's license (mDL)~\cite{mDL,mDL-whitepaper}. It is a digital representation of physical driving license, stored in the user's mobile device. The format of mobile credential allows to use it for various types of user attributes, not only as a driving license. Similarly to JWT, it uses hash commitments with random salt to allow selective disclosure of attributes. A particular attribute management system CACS (Centralized Attribute Collection Service)~\cite{cacs} supports interaction of users, issuers, and relying parties, following mDL standards.

In this work, we follow the main idea of~\cite{cacs}, letting a centralized system help to manage user's credentials. However, while~\cite{cacs} assumes a trusted server that is allowed to learn the user's data, in this work we are protecting the user's data from the server, improving the way credentials are stored and handled by the server. We also provide an integration of revocation mechanism, which has been missing in~\cite{cacs}. \liza{The usage of centralised server is motivated by the following reasons:
\begin{itemize}
    \item EUDIW ARF defines the need in backup of a list of the user's attributes, attestations, and configurations. Thus, the attributes should be stored in a backup server by default.
    \item \alisa{The server can assist in managing revocations.}
    \item \alisa{We need a centralised server for server-supported decryption and signing.} This approach ensures that:
        \begin{itemize}
        \item \lizanew{The adversary cannot decrypt attributes even if the adversary copies the mobile device's memory in between decryption sessions;} 
        \item \lizanew{If the memory of the mobile device gets leaked during the decryption session, a special mechanism of the server-supported signature schemes (clone detection) allows to decrypt attributes for at most one presentation before the server stops to communicate.}
        \end{itemize}
\end{itemize}}

\subsection{System requirements}
The requirements that we pose on our system are mainly motivated by~\cite{reqs} and~\cite{cryptographers}. In~\cite{cryptographers} (2024), Carsten Baum et al. have given a feedback to the European Digital Identity ARF regulation, pointing out important properties that a system should achieve by law. The survey by Maximilian Richter et al.~\cite{reqs} (2023) is more technical, and it lists some properties of a system that are classified as "Essential" and "Important", and references to possible solutions that allow to achieve these properties.

Let us first go through the \emph{essential} properties of~\cite{reqs}. We mark with $+$ properties that we address in \textsf{Prettiness} system, and with $-$ that we do not.
\begin{itemize}
\item[+] \emph{Completeness.} A valid credential will be accepted.
\item[+] \emph{Soundness.} An invalid credential will be rejected.
\item[+] \emph{Unforgeability.} A valid credential can only be created by the corresponding issuer. This can be achieved by letting the issuer sign a credential.
\item[+] \emph{Freshness of the presentation.} Every presentation has to be created anew for every verification. This can be achieved by letting the verifier produce a challenge that the user should respond to (e.g. in the proof of key ownership).
\item[+] \emph{Impossibility to duplicate credentials.} No one (except the issuer) may create a valid copy of a credential. This includes the ability of one user to present the same credential from multiple devices. 
In this work, we propose to use a threshold key signature scheme with a clone detection mechanism~\cite{splitkey} which gives a designated semi-trusted server the ability to detect that the signing protocol was initiated from different devices.

We note that the problem of \emph{non-transferability} of the credentials is deeper, as we eventually cannot prevent a user from sharing their device with a friend. Using all-or-nothing approach~\cite{idemix} or one-time credentials that are not easy to obtain~\cite{clonewars} would be possible, but we do not discuss such possibilities in this work, and leave these as optional additional mechanisms.

\item[+] \emph{Revocation} is an essential feature that is difficult to implement efficiently while preserving privacy. An alternative approach to a separate revocation functionality is to set short validity time for the credentials, requiring credentials to be frequently re-generated. However, we believe that introducing separate revocation functionality is essential for the usability of the attribute management system in practice. \alisa{While the user should be responsible for managing and storing their
attributes, ARF also requires that the issuer should be able to revoke their issued credential at any time.} 
Assume that the user has a driver licence issued and stored in the system. After an event requiring the police to revoke driver's licence occurs, police should be able to revoke the user's driver's licence \alisa{by  contacting the issuer.}
\end{itemize}

Let us now go through non-essential, but \emph{important} properties of ~\cite{reqs}.
\begin{itemize}
\item[+] \emph{Selective disclosure}. The user should be able to present only certain attributes of a credential. In the survey of~\cite{ramic2024selective}, the methods for this are classified to hash-based, signature-based, and zero-knowledge based. In this work, we use a kind of hash-based disclosure. 

\item[-] \emph{Unlinkability of presentations}. The verifier should not be able to correlate several presentations of the same credential.
In this work, we are considering the setting where the relying party \emph{should} learn the user's identity, so we are not aiming for this property. \alisa{This is in accordance with (Article 5a, §16), as in this case the user \emph{explicitly} authorises linkability. We see our system in the context of legal transactions by officially recognised issuers, like presenting driving/fishing licence, or getting access to a building. We do not aim to use it in anonymous transactions (like shop purchases).}

\item[-] \emph{Repudiation}. RP should not be able to prove the authenticity of verified statements to a third party. In this work, we are not aiming for this property. \alisa{In contrary, we believe that we may need non-repudiation for later accountability of transactions (although we leave it out of scope of this paper).}
\end{itemize}

\subsection{Participants and Threat model}\label{sec:threatmodel}
In this section, we discuss the participants and the threat model for our system, including constraints on what a corrupted party should be allowed to see or do, motivated by the list of requirements on credential management systems~\cite{reqs}.

\underline{User} is the party to whom the attributes (and, hence, credentials) belong. A user \alisanotes{can use their mobile device} to get credentials from an issuer, present them to a relying party (RP), and revoke credentials previously issued to them. User can be \emph{dynamically} corrupted, and \emph{corruption level} $c_{\mathsf{U}}$ may increase. \lizanew{There are the following corruption models for the user:
\begin{itemize}
    \item $c_{\mathsf{U}} = 0$ (no corruption)
    \item Starting from $c_{\mathsf{U}} = 0$ (no corruption), the corruption level may rise to $c_{\mathsf{U}} = 1$ (partial corruption, after the key generation). This can advance further, moving to $c_{\mathsf{U}} = 2$ (partial corruption \emph{and} learning user's $\pin$)
    \item $c_{\mathsf{U}} = 3$ (full corruption, before key generation)
\end{itemize}}
The motivation behind introducing $c_{\mathsf{U}} = 1$ is that we want to ensure privacy of the user data even if its mobile device memory gets leaked \alisa{(e.g. the device gets malware)}, but the secret key
\lizanew{stored encrypted under the user's $\pin$} is not leaked. For $c_{\mathsf{U}} = 2$, \lizanew{the attacker learns all the device memory, including the the secret key}. This can happen if adversary guessed correct user's $\pin$. The final case, $c_{\mathsf{U}} = 3$ means that adversary has fully corrupted the user's device, including access to memory and ability to issue requests. In this case, the attacker learns everything, and there are no privacy guarantees for the user.

\lizanew{The adversary can corrupt the user before key generation, reaching level $c_{\mathsf{U}} = 3$. Otherwise, $c_{\mathsf{U}} = 0$ is assumed during the setup phase, and the adversary may increase the corruption levels later.} \lizanew{The adversary can clone device memory, reaching level $c_{\mathsf{U}} = 1$ (the main focus of security analysis).} \alisa{If the adversary guesses $\pin$, it proceeds to $c_{\mathsf{U}} = 2$, and there are no privacy guarantees for the user anymore.}

\lizanew{The reason for analysing our system in such corruption model is that static corruption of the user (from the beginning of the protocol) does not allow us to rely on specific properties introduced by server-supported signing protocol. To make use of the user's key protection mechanisms, we need to assume that the user has not been corrupted during the key generation and adversary can copy device's memory.}

\underline{Attribute management service (AMS)} is responsible for helping the user to manage storage and interactions with their credentials. AMS server can be \emph{statically} corrupted. In particular, we assume a \emph{covertly}~\cite{covert} (explained in more details in Sec.~\ref{sec:covert}) corrupted server, which may deviate from the prescribed routines, but will not do it if the probability of being caught is sufficiently large. Such an assumption is reasonable for AMS established by an institution which cares for its reputation.

AMS server is allowed to see credential issuance requests submitted by the user, including the issuer identifier (as AMS is responsible for establishing connection). It is also allowed to see the \emph{identifiers} of requested attribute, as these likely depend on the issuer identity anyway. However, AMS server should not learn the \emph{values} of attributes that are a part of issued credential.

If AMS has not seen the credential yet (e.g. from an issuer or a RP), it should not be able to learn when that credential has been revoked. Also, it should not be able to track which credentials have been presented, and to which RP \liza{(unless RP is also corrupted)}.

\underline{Issuer} is responsible for issuing credentials to the user, signed with the issuer's private key. Additionally, the issuer has authority to submit credential revocation requests to the AMS for the law enforcement purposes. Issuer can be \emph{statically} corrupted. The issuer should not be able to track presentations \liza{(unless colluded with RP)} and revocations by the user \liza{(unless colluded with AMS)} of its credentials. 

\underline{Relying Party (RP)} is a party requesting to see certain user’s attributes to authenticate user or to give access to certain services. RP can be \emph{statically} corrupted. From presentation, RP should not be able to learn anything except the attributes than the user has explicitly shown to it. However, we do not aim for unlinkability of presentations, as this can be tracked through the issuer's signature. \alisa{Differently from the anonymous credentials of Idemix~\cite{idemix}, we are assuming standard signatures like RSA and ECDSA (for compatibility with current standards), which cannot be updated for every presentation without the issuer. Nevertheless, we see the possibility of unlinkability using proofs of ECDSA signatures in zero-knowledge as in~\cite{frigo2024anonymous}. This addition is out of scope of this paper.} \lizanew{Similarly to the EUDIW defined in ARF~\cite{wallet-arf}, our solutions supports methods proposed in ARF (section 7.4.3.5.2) for dealing with RP linkability. For the \emph{once-only attestations}, user would need to run issuing protocol several times to obtain batch of the credentials (for the same attributes). Then each credential is used only once. For the \emph{limited-time attestations}, issuer will need to include additional entry to the credential, defining the validity period of credential. However, even with this method we still allow to use long-term attestations with revocation. For the \emph{rotating-batch attestations}, user would need to run issuing protocol several times to obtain batch of the credentials. During the presentation, user would randomly choose which credential from the batch to present. This method does not restrict credential to be used only once. In this case, random selection of credential for presentation and resetting the batch happens in the environment. Finally, \emph{Per-Relying Party attestations} method can be achieved through parametrising $\ifun$ with RP. Meaning that the user will use different instance of $\ifun$ for each RP. }

\subsection{\alisa{Covert Adversary}}\label{sec:covert}

In this work, we will make some assumption on the corruption of the AMS server. Since AMS is assumed to be a reputable entity, one option would be to assume a honest-but-curious server, which will not deviate from prescribed protocols. However, in practice it could be a too strong assumption, as even a reputable entity could be capable of silent cheating, which would not be detected, but could potentially harm user privacy. To ensure that such silent attacks are not possible, we make use of the notion of a \emph{covert} adversary, proposed by Aumann and Lindell~\cite{covert}. Intuitively, a covertly corrupted party will not cheat if it will be caught with a certain non-negligible probability.

In this work, we strengthen this assumption. We assume that the AMS will not cheat if it will be \emph{possible} to prove the fact of cheating to a third party. In particular, while AMS will be technically able to conceal the fact of revocation from the verifier, the revoker and the verifier together will be able to detect the problem and prove to a third party (e.g. external judge) that AMS has cheated.



\subsection{Universal Composability}

Universal composability (UC) model~\cite{UC} follows the real-ideal world paradigm. In the real protocol execution, honest parties run the protocol $\pi$ over a network controlled by an adversary $\adv$, who can also corrupt some parties by sending a special corruption message directly to the party. In the ideal world, honest parties are viewed as \emph{dummy parties} who forward their inputs directly to the ideal functionality $\mathcal{F}$ that internally executes the main task of the protocol, allowing the adversary to send certain corruption messages to a special interface of $\mathcal{F}$, whose effect on the outcome of $\mathcal{F}$ is specified explicitly. \alisa{In hybrid model, ideal functionalities can in turn be used by parties of the real protocol, and the adversary may communicate with them directly via adversary interface.}

There is an environment $\env$, representing the whole world outside of the protocol, which provides inputs to the parties and receives corresponding outputs. The protocol $\pi$ securely realises ideal functionality $\mathcal{F}$ if for every real adversary $\adv$ attacking the protocol there exists an ideal adversary (simulator) $\mathcal{S}$, such that no environment $\mathcal{Z}$ can distinguish whether it is interacting with the protocol $\pi$ and real adversary $\adv$ or with ideal functionality $\mathcal{F}$ and simulator $\simm$. Intuitively, any attack on the environment that is possible in the real world, is also possible in the ideal world.

\section{Building Blocks}\label{sec:bb}


\subsection{Signing functionality}\label{sec:bb:tsign}
For the digital signatures that are needed to sign issued attributes, we propose to use any of the standardised signature schemes like ECDSA, RSA. For the security analysis, we use digital signature ideal functionality $\funs$ proposed by Canetti \liza{~\cite{UC-cert}, depicted in Figure~\ref{fig:ideal-func-sign}. In contrast to the regular signing functionalities, it does not have the key generation interface. This functionality binds produced signatures to the identity of the signer (issuer). Intuitively, in the real world, it represents sending signatures with \emph{certificates} that bind the verification process to the identity of signer, not the public key.} \alisa{In this functionality, the signature $\sigma$ is generated by the adversary. For signatures generated outside of the ideal functionality by a corrupted signer, there are no verification guarantees, and the adversary may decide the outcome of verification, with the only requirement that the outcome of verification should be consistent with previous verifications.}

\begin{figure}[ht!]
    \centering
    \begin{tcolorbox}[colback=white,arc=0.3mm, boxrule=0.3mm]
    \underline{\textsf{Signature Generation}.} Upon receiving $(\textsf{sign}, \sid, m)$ from $S$, send $(\textsf{sign}, \sid, m)$ to the $\adv$. Upon receiving $(\textsf{signature}, \sid, m, \sigma)$ from the $\adv$, verify that no entry $(m, \sigma, 0)$ is recorded. If it is, then output an error message to $S$ and halt. Else, output $(\textsf{signature}, \sid, m, \sigma)$ to $S$, and record the entry $(m, \sigma, 1)$.

    \underline{\textsf{Signature Verification}.} Upon receiving a value $(\textsf{verify}, \sid, m, \sigma)$ from some party $P$, send $(\textsf{verify}, \sid, m, \sigma)$ to the $\adv$. Upon receiving $(\textsf{verified}, \sid, m, \phi)$ from the $\adv$, do:
    \begin{itemize}
        \item If $(m, \sigma, 1)$ is recorded then set $f = 1$.
        \item Else, if the signer is not corrupted, and no entry $(m, \sigma', 1)$ for any $\sigma'$ is recorded, then set f = 0 and record the entry $(m, \sigma, 0)$.
        \item Else, if there is an entry $(m, \sigma, f')$ recorded, then set $f= f'$.
        \item Else, set $f= \phi$, and record the entry $(m, \sigma', \phi)$.
    \end{itemize}
    Output $(\textsf{verified}, \sid, m, f)$ to $P$.
    \end{tcolorbox}
    \caption{The signing functionality $\funs$}
    \label{fig:ideal-func-sign}
\end{figure}

Moreover, we use \liza{blind} threshold signatures on the user's side to protect the storage of private key material. We use functionality $\funsign$ described in~\cite{blind-splitkey} for the two-party threshold \liza{blind} signature scheme with clone detection (functionality is presented in Figures~\ref{fig:fideal-sign-1}, ~\ref{fig:fideal-sign-2} in App.~\ref{app:functionalities}). Since the scheme assumes the use-case where one share of the private key is stored on the user's mobile device (client), client's share of the private key in storage is protected with the knowledge factor ($\pin$ code) required from the user for each signing session.
\alisa{The other share is stored on the supporting server side, and we assume that AMS takes this role.}
\liza{The usage of \emph{blind} threshold signature ensures that, for the signatures created for the verifiable presentation, supporting server does not learn which credential is being presented. In this scheme, the message is blinded only from the supporting server. We note that the blindness requirement is only important for the presentation phase. For the signatures that are used to initialise other requests, AMS server is still expected to see and verify resulting (unblinded) signature.}




The functionality $\funsign$ lists the following routines -- Key generation, Signing, Verification, Corruption. \lizanew{Additionally, $\funsign$ has some triggers that are used internally by functionality and not accessible to the environment.}

\underline{Key generation} is initiated by both client and server. $\funsign$ outputs
\lizanew{public key $\pk$}, which is allowed to be chosen by adversary.  

\underline{Signing} is initiated by 
\alisanotes{both client and server.} \liza{Message to be signed is input by the client, and the server does not know the message.}
$\funsign$ verifies correctness of $\pin$ code provided by the user and calls clone detection mechanism. If both verification checks are successful, $\funsign$ outputs signature $\sigma$ (\liza{only to the client}), that is allowed to be generated by adversary.

\underline{Verification} can be queried by any party to verify the correctness of the signature. \alisa{Verification procedure is similar to the one presented in regular signing functionality $\funs$. A corrupted client cannot produce more valid signatures than the number of signing sessions approved by the server.}

\underline{Corruption} allows adversary to corrupt parties throughout the protocol. \cite{blind-splitkey} defines several corruption levels for the client and server components. Corruption levels for the sever are trivial consisting of just two options $\mathsf{c}_\mathrm{S} = 0$  (uncorrupted) or $\mathsf{c}_\mathrm{S} = 1$  (corrupted -- adversary is being in full control of the server). Corruption of the client consists of more levels: 
\begin{itemize}
    \item $\mathsf{c}_\mathrm{P} = 0$ --uncorrupted
    \item $\mathsf{c}_\mathrm{P} = 1$ --adversary learned phone’s memory \emph{between} the signing sessions (encrypted memory). \lizanew{It means that all the sensitive values that are recomputed for signing are deleted from memory, and only the values in encrypted form or less sensitive values can be cloned by adversary.} 
    \item $\mathsf{c}_\mathrm{P} = 2$ --adversary learned phone’s memory \emph{during} a signing session (unencrypted memory). \lizanew{It means that adversary can access the values that are recomputed for signing, including private key share.}
    \item $\mathsf{c}_\mathrm{P} = 3$ --adversary has full control over the phone. \lizanew{It means that adversary learns all the updated values in the memory and can initiate interactions.}
\end{itemize} 
$\funsign$ is internally keeping track of the corruption levels of parties, and the source of the last signing query to support additional security properties (clone detection). 

\lizanew{\begin{table}[ht!]
    \centering
    \begin{tabular}{|p{1.7cm}|p{3cm}|p{4.4cm}|p{3.1cm}|} \hline
        \textbf{corruption level} & \textbf{explanation} & \textbf{consequences} & \textbf{how to achieve?} \\ \hline
         $\mathsf{c}_\mathrm{P} = 0$ & not corrupted & nothing is leaked to adversary & - \\ \hline 
         $\mathsf{c}_\mathrm{P} = 1$ & memory between signing sessions (encrypted memory) is leaked & adversary learns values stored in device's memory, including \emph{encrypted} share of the private key & clone memory of the device after key generation, in between signing sessions \\ \hline 
         $\mathsf{c}_\mathrm{P} = 2$ & memory during the signing session (unencrypted memory) is leaked & adversary learns values stored in device's memory as well as values generated during signing session, including \emph{unencrypted} share of the private key & clone memory of the device during signing process \textbf{OR} guess the user's $\pin$ after learning encrypted memory  \\ \hline 
         $\mathsf{c}_\mathrm{P} = 3$ & device is fully corrupted & adversary gains full access to the device, including learning its memory and being able to initialise some commands & get full access to the device, including during the key generation phase\\ \hline 
    \end{tabular}
    \caption{Corruption levels}
    \label{tab:corruption-levels}
\end{table}}

\lizanew{Since in \textsf{Prettiness}, we only assume static corruptions after the key generation phase, we do not need to use all the previously defined corruption levels for client. When client gets corrupted after key generation, the corruption level is set to $\mathsf{c}_\mathrm{P} = 1$. During the interaction with protocol adversary may guess client's $\pin$, which will increase corruption level to $\mathsf{c}_\mathrm{P} = 2$. Since we do not allow adaptive corruptions, adversary cannot get to level $\mathsf{c}_\mathrm{P} = 3$, which would mean gaining full control over the user's device. However, adversary can corrupt user to level $\mathsf{c}_\mathrm{P} = 3$ before the key generation, but in this case we cannot provide any security guarantees for the user.}




\subsection{Threshold decryption}
For \lizanew{decryption of the data stored at the AMS server},
we propose to use privacy-preserving server-supported decryption from~\cite{threshold-decryption}. The scheme ensures privacy against the server, meaning that when the client initiates decryption session, it sends a blinded ciphertext so the server is not able to learn which ciphertext exactly is being decrypted. We use functionality $\fundec$ defined in~\cite{threshold-decryption} (presented in Figures~\ref{fig:fideal-dec-1} and Figure~\ref{fig:fideal-dec-2} in App.~\ref{app:functionalities}).

\underline{Key generation} is initiated by the client and the server, its functionality is similar to functionalities for public-key encryption, \lizanew{defined by Canetti et al.~\cite{canetti-pke}}, where the public key $\pk$ is chosen by the adversary.

\underline{Encryption} can be queried by any party, providing a plaintext message $m$ as input. The ciphertext $c$ is created by adversary, who does not receive plaintext $m$ as an input. Pair $(m,c)$ is saved to the internal table $\mathsf{T}$.

\underline{Decryption} functionality has more nuanced details, but the main idea for $\fundec$ is to find the plaintext $m$ from a ciphertext $c$ that was created using the key $\pk$. This functionality is initiated by both parties, client and server with the client being the one who provides the ciphertext $c$ as input. A corrupted client cannot decrypt more ciphertexts than the number of decryptions approved by the server.

Additionally, we introduce 
a new interface to the functionality $\fundec$ -- verification of decryption. After receiving signed encrypted credential and corresponding disclosure, RP can use it to verify that the disclosure indeed corresponds to the ciphertext from signed credential. 

\underline{Verify decryption} requires inputs from the user and some other verifier party (in our case, the RP). Given user consent, $\fundec$ tells to the verifier whether the pair $(m,c)$ provided as input has been saved in table $\mathsf{T}$ of previously encrypted ciphertexts.

We propose to extend threshold decryption protocol from~\cite{threshold-decryption} with this interface as follows. In the protocol of~\cite{threshold-decryption} that UC-realises $\fundec$, the ciphertext is of the form $c = (\symenc{k}{m}, H'(k,m))$ for some hash function (random oracle) $H'$ where $\symenc{k}{m}$ is symmetric encryption of the message $m$ with the key $k$, the second component $H'(k,m)$ makes the decryption unique, and the single-use key $k$ is derived through KEM (key exchange mechanism) between the client and the server.
\begin{enumerate}
    \item Together with the ciphertext $c$, the user reveals the message $m$ and the single-use KEM key $k$ to decrypt $c$. 
    \item The verifier parses $c$ as $(c_1,c_2)$ and performs the following checks:    \begin{itemize}
        \item check that $m = \symdec{k}{c_1}$ (symmetric decryption of $c_1$ with $k$).
        \item check that $H''(k,m')=c_2$ (\alisanotes{note that this check does not allow the user to reveal a different $k$ and make $c$ open to a different message}).
    \end{itemize}
\end{enumerate}

\subsection{Revocation mechanism}\label{sec:bb:revocation}

For revocation of credentials, we use functionality of~\cite{funrevoke}. This functionality allows to extend the credential with a special \emph{revocation token} in such a way that only a designated revoking party will be able to revoke that token. A credential that is linked to a revoked token is also considered revoked. We will assign the revoker role to \emph{both} the user and the issuer, creating \emph{two} distinct revocation tokens for one credential. This ensures that the issuer will not learn that the user has revoked the credential issued by that issuer. We use functionality $\funrevoke$ defined in~\cite{funrevoke} (presented in Figures~\ref{fig:funrevoke:1} and~\ref{fig:funrevoke:2} in App.~\ref{app:functionalities}).

In this functionality, we have the server $S$ (in practice, the AMS server) who maintains the set $\revoked$ of revoked tokens, the revoker $R$ (the issuer or the user) who gets the issued token and may revoke it later, the user $U$ who has the right to query token revocation status, and the verifier $V$ (in practice the RP) who learns $x$ and whether $x \in \revoked$. 


\underline{Initialisation.} The adversary chooses the universal set $\tokens$ of tokens. 
The size of $\tokens$ is given as a parameter to $\funrevoke$, so it is not allowed to be too small.

\underline{Issuing.} $\funrevoke$ samples a fresh token $x \sample \tokens$ that it outputs to the user and to the revoker. In our case, the revoker party can be the user or the issuer. The issuer may then include $x$ into a credential as a special attribute (this will happen in the embedding protocol, outside of $\funrevoke$). 

\underline{Revocation.} A party $R$ may revoke $x$ if $\funrevoke$ has previously declared $R$ as revoker of $x$. A revoked token is added to the set $\revoked$ of revoked tokens.

\underline{User notification.} The user $U$ may ask from $\funrevoke$ whether a token, that belongs to the user $U$, has already been revoked. Such a notification would be called just before verification, to ensure freshness of revocation status.


\underline{Verifier notification.} 
$\funrevoke$ remembers the current state of revocation database for future reference, without leaking anything to $V$. Such a notification could be triggered just before verification, but having it as a separate routine can be more practical.

\underline{Verification.} The user presents a token $x$. The verifier learns $x$ and whether $x$ has been revoked at the moment when $V$ made its last \emph{verifier notification} request.

\underline{Cheating detection.} The functionality $\funrevoke$ proposes interfaces for cheating detection, which allow the revoker and the verifier together to figure out whether any revocation by that revoker has been silenced from the verifier. This interface is important, assuming that the covertly corrupted server will not cheat if it will be accountable for cheating. \alisa{We formalise this assumption in Definition~\ref{def:covert:formal}.}

\begin{definition}[covertly corrupted AMS]\label{def:covert:formal}
We say that AMS is \emph{covertly} corrupted if the adversary will not interact with the protocol in a way that allows an honest revoker $R$ and a honest verifier $V$ to make cheating detection interface of $\funrevoke$ report AMS as cheater. 
\end{definition}

\section{Attribute management system}\label{sec:main}
\label{sec:system}

\subsection{Ideal functionality}\label{sec:props}
We start with defining system functional and security requirements in the form of ideal functionality. The ideal functionality $\ifun$ is inspired by the previous works~\cite{delegatable,bind} with some adjustments to fit design principles and requirements,
following the model defined in Section~\ref{sec:threatmodel}. 
We can view $\ifun$ as an ideal credential storage, which stores issued credentials, and also allows to revoke them, and present to third parties. Let us use $\store$ to denote the storage, which in practice will be realised by a single AMS server. Ideal functionality provides the following interfaces -- \textsf{Initialisation}, \textsf{Issue}, \textsf{Present}, \textsf{Revoke}, \textsf{Verify}, \textsf{Corrupt}. \alisa{Every call to such an interface includes a session identifier $\sid$, which is needed to separate calls from each other. It does not contain any private data by itself, but leaking $\sid$ to $\adv$ allows it to match the session against a certain call from the environment.}


\underline{\textsf{Initialisation}} is used to initialise tables $\textsf{T}_a, \textsf{T}_r, \textsf{T}_p, \textsf{T}_v$,  for storing issued attributes, revocation information, credential presentations and verification information correspondingly. Additionally, $\mathrm{c}_{\mathsf{U}} = 0$ gets initialised, storing information about the corruption level of user $\mathsf{U}$, as in Section~\ref{sec:threatmodel}. 
Next, we let adversary define $\present$ and $\ver$ functions. $\present$ generates credential presentation for the presentation phase, and $\ver$ is the corresponding verification function. \alisa{We need these functions to be able to output a particular presentation to $\env$ to be shown later to a verifier, who can verify it using $\ifun$.} 

\underline{\textsf{Silent request}} is initiated by adversary at any point of time trying to query \textsf{Issue}, \textsf{Revoke} or \textsf{Present} interfaces on behalf of corrupted initiator (user or issuer). For a user, $\ifun$ firstly calls the authentication sub-routine which verifies user corruption level. If \emph{user is not fully corrupted ($\mathrm{c}_{\mathsf{U}} < 2$), adversary is not allowed to make those requests} to the $\ifun$. If the user is fully corrupted ($\mathrm{c}_{\mathsf{U}} \geq 2$), $\ifun$ proceeds with the requested interface using input provided by the adversary. Finally, $\ifun$ gives the user's output to the adversary. This way, even though the adversary will see all the inputs and outputs of partially corrupted users, it cannot initiate any routines on their behalf.
 
\underline{\textsf{Issue}} is initiated by the user query containing information about the requested attribute(s). If nobody is corrupted, adversary just gets notification that the communication is happening, and is able to interrupt the communication. We are not hiding issuer's identity and the issued $\aids$, from a corrupted storage ($\store$). This is a functional requirement: to make querying by credential identifier $\cid$ usable in practice, we need that it would be self-explanatory, \alisa{and we also need these values for the transaction log.} Actually, since in our model, $\store$ knows the issuer identifier, it will likely anyway be able to guess which kinds of attributes have been requested by the user.

The attribute request is sent to the issuer and, if not interrupted by the adversary, the issued credential consisting of attributes $\atrs$ is recorded to the table $\textsf{T}_a$. Every record in table $\textsf{T}_a$ serves as an issued credential. While credential identifier $\cid$ is chosen by the adversary, $\ifun$ checks that it is unique for an honest user \alisa{and an honest storage}. After attributes are received from the issuer, \emph{corrupted storage learns only $\cid$, but not the values of attributes}. 

\underline{\textsf{Revoke}} is initialised either by the user or issuer query containing identifier $\cid$ of the credential to be revoked. 
$\ifun$ verifies that a credential with $\cid$ exists. 
The adversary learns $\cid$ provided by an honest $P$ if either (1) the user of $\cid$ is corrupted and hence is able to trace revocation, or (2) $\store$ and some RP to which $\cid$ has been presented before, are both corrupted (in that case, adversary already learned the link between the credential and corresponding revocation token). Otherwise, $\ifun$ just notifies adversary that the revocation process has been initiated. \alisa{If the revoking party is an issuer, $\ifun$ also leaks to corrupted storage the identity of the user, owner of the revoked credential. In practice, this is needed purely for logging purposes, so that AMS could show to the user only those revocation logs which concern that user.}

After receiving approval from the adversary, $\ifun$ records that $\cid$ has been revoked by user or issuer in table $\textsf{T}_r$ and confirms successful revocation to the revoker.  

\underline{\textsf{Present}} is initiated by the user query containing information about the credential $\cid$ and attribute(s) $\aidsd$ that need to be presented. First of all, $\ifun$ verifies whether the requested credential is in the table $\textsf{T}_a$. 
$\ifun$ sends only the \alisa{user identifier} and the number of attributes to the adversary, 
even if AMS and the issuer are corrupted. 
If presentation is approved by the adversary, $\ifun$ verifies that credential has not been revoked, by consulting table $\textsf{T}_r$. 
If the check passes, $\ifun$ generates presentation for the requested credential using $\present$ function.
The obtained credential presentation is output to the user. 

\underline{\textsf{Verify}} is initiated by the user and RP. If the AMS server (with or without the issuer of presented credential) is corrupted, adversary does not learn neither RP identifier nor credential identifier. It helps to capture the fact that \emph{AMS server does not learn RP identifier or which credential is being presented by the user or which attributes are being disclosed during the presentation}. $\ifun$ verifies that presented credential has not been revoked, and has been recorded by the issuer. Note that, as far as no revocations took place between a presentation and a revocation, both checks always verify for the presentations generated through the $\ifun$. 

\underline{\textsf{Corrupt}} allows adversary to submit corruption requests identifying which party is being corrupted. User corruption is handled as described in Sec.~\ref{sec:threatmodel}.

\begin{figure}[htbp]
\begin{tcolorbox}[colback=white,arc=0.3mm, boxrule=0.3mm]
\underline{\textsf{Initialisation}.} On input $(\textsf{setup}, \sid)$ from $\adv$:

\begin{itemize}
    \item Create empty tables $\textsf{T}_a, \textsf{T}_r, \textsf{T}_p, \textsf{T}_v$, set flag $\mathsf{c}_{\mathsf{U}} = 0$.
    \item Send $(\textsf{setup-ok}, \sid)$ to $\adv$ and wait for  $(\textsf{setup-ok}, \sid, \ver, \present)$ from $\adv$.
\end{itemize}

\underline{\alisanotes{\textsf{Corrupted inputs}}.} At any time when $(\textsf{routine},\sid,x)$ is called from $\mathcal{Z}$ by a party $P \in \corrupted$, send $(\textsf{input},P,\textsf{routine},\sid,x)$ to $\adv$. Wait for $(\textsf{input},P,\textsf{routine},\sid,x')$ from $\adv$ and run \underline{\textsf{routine}} with input $x'$ instead of $x$.

\underline{\textsf{Silent request}.} At any time, on input $(\textsf{invoke},\textsf{routine},\sid,P,\textsf{inputs})$ from $\adv$:

\begin{itemize}
\item \alisanotes{If $\store \in \corrupted$, or $P = \mathsf{I}$, authentication is not required.}
\item Otherwise $\textsf{authenticate}(\sid,\textsf{U})$, proceed if successful.    
\end{itemize}

Wait for  $(\textsf{input},P,\textsf{routine},\sid,x)$ from $\adv$ and follow \underline{\textsf{routine}} interface $x$ as input of party $P$. Return output of \underline{\textsf{routine}} to $\adv$.

\underline{\textsf{Issue}.} On input $(\textsf{issue-req},\sid,\aids,\textsf{I})$ from \textsf{U}: 

\begin{itemize}
    \item If 
    $\textsf{I} \in \mathcal{C}$ or $\store \in \mathcal{C}$ send $(\textsf{issue-req},\sid,\alisa{\textsf{U},\textsf{I}},\aids)$ to $\adv$.\Comment{$\textsf{I}$ and $\store$ learn $\aids$}
    
    Otherwise, send $(\textsf{issue-req},\sid,\alisa{\textsf{U},\textsf{I}},|\aids|)$ to $\adv$.
    
    \item 
    Wait for $(\textsf{issue-ok},\sid,\alisanotes{\cid})$ from $\adv$. \alisa{Check that $\cid$ is unique.} 
    
    Send $(\textsf{issue-req},\sid,\alisa{\textsf{U}},\alisanotes{\cid},\aids)$ to \textsf{I}.
    \item Upon receiving $(\textsf{issue-atr},\sid,\atrs)$ from \textsf{I}:
    \begin{itemize}
    \item If 
    $\mathsf{U} \in \corrupted$ send  message $(\textsf{issue-atr},\sid,\atrs)$ to $\adv$.\Comment{$\textsf{U}$ learns $\atrs$}
    \item Otherwise, send message $(\textsf{issue-atr},\sid)$ to $\adv$.
    \end{itemize}
    
    \item 
    Wait for $(\textsf{atr-ok},\sid)$ response from $\adv$.
    
    \item Record $(\mathsf{U},\alisanotes{\mathsf{I}},\cid,\aids,\atrs)$ to table $\textsf{T}_a$. Send $(\textsf{issue-resp},\sid,\cid)$ to \textsf{U} \alisanotes{and \textsf{I}}. 
\end{itemize}

\underline{\alisanotes{\textsf{Get all credential identifiers}}.} On input $(\textsf{getcids-req},\sid)$ from $\mathsf{U}$:
\begin{itemize}
\item Send $(\textsf{getcids-req},\sid,\alisa{\mathsf{U}})$ to $\adv$. Wait for $(\textsf{getcids-ok},\sid)$ from $\adv$
\item Let $\cids$ be all identifiers stored in $\mathsf{T}_a$. Output $(\textsf{getcids-resp},\sid,\cids)$ to $\mathsf{U}$.
\end{itemize}

\underline{\textsf{Revoke}.} On input $(\textsf{revoke-req},\sid,\cid)$ from $P$:

\begin{itemize}

    \item Find an entry $(\mathsf{U},\alisanotes{\mathsf{I}},\cid,\_,\_) \in \textsf{T}_a$, s.t. $P \in \set{\mathsf{I},\mathsf{U}}$. 
    
    \alisa{If there is no such entry, send $(\textsf{revoke-resp},\sid,\bot)$ to the party $P$.} 
    
    \item Produce leakage to the adversary:
    \begin{itemize}
    \item If $\store \in \mathcal{C}$ and $(\textsf{RP},\cid) \in \mathsf{T}_p$ for $\textsf{RP} \in \mathcal{C}$,
    send $(\textsf{revoke-req},\sid,P,\cid)$ to $\adv$.

    \Comment{credential has been presented to the corrupted RP before}
    
    
    
    \item If $\textsf{U} \in \mathcal{C}$ send $(\textsf{revoke-req},\sid,P,\cid)$ to $\adv$. 
    
        \Comment{corrupted user learns if their credential is revoked}

    \item Otherwise, send $(\textsf{revoke-req},\sid,P,\alisa{\textsf{U}})$ to $\adv$.
    \end{itemize}

    \item 
    Wait for $(\textsf{revoke-ok},\sid)$ response from $\adv$. 

    \item If there is no entry $(\mathsf{U},\cid,P)$ in table $\textsf{T}_r$, add $(\mathsf{U},\cid,P)$ to the table $\textsf{T}_r$.
    
    \item Send $(\textsf{revoke-resp},\sid,\ok)$ to the party $P$.
\end{itemize}

\end{tcolorbox}
\caption{\alisanotes{Ideal functionality $\ifun$ (Part 1)}}
\label{fig:ideal-func-1}
\end{figure}


\begin{figure}[htbp]
\begin{tcolorbox}[colback=white,arc=0.3mm, boxrule=0.3mm]
\underline{\textsf{Present}.} On input $(\textsf{present-req},\sid,\cid,\aidsd)$ from \textsf{U}:

\Comment{values in $\aidsd$ are identifying attributes that will be disclosed to the RP}

\begin{itemize}

    \item Find an entry $\alisa{\cred=}(\mathsf{U},\alisa{\mathsf{I}},\cid,\aids,\alisa{\atrs}) \in \textsf{T}_a$ with $\aidsd \subseteq \aids$. 
    
    \alisa{If there is no such entry, send $(\textsf{present-resp},\sid,\bot)$ to \textsf{U}.}
    
    \item 
    Send $(\textsf{present-req},\sid,\alisa{\textsf{U}},|\aidsd|)$ to $\adv$. Wait for $(\textsf{present-ok},\sid)$ from $\adv$. 
        
        
    \item If $(\alisa{\mathsf{U}},\cid,\_) \in \textsf{T}_r$, send $(\textsf{present-resp},\sid,\bot)$ to \textsf{U}.
    
    \item 
    Generate $\pres = \present(\cred,\aidsd)$.    
    
    If $\ver(\cred,\pres) = 1$, add $(\cred,\pres,\lizanew{\aidsd},1)$ to $\textsf{T}_v$. 
    
    Otherwise, send $(\textsf{present-resp},\sid,\bot)$ to \textsf{U} 

    

    \item Send $(\textsf{present-resp},\sid,\pres)$ to \textsf{U}.
    
\end{itemize}


\underline{\textsf{Verify}.} On input $(\textsf{verify-req},\sid,\pres,\liza{\textsf{RP}})$ from \textsf{U}:

\begin{itemize}
    \item \alisa{If $\mathsf{U} \notin \corrupted$, find $(\cred=(\mathsf{U},\_,\cid,\_,\_),\pres,\aidsd,1) \in \textsf{T}_v$.} Add $(\mathsf{RP},\cid)$ to $\mathsf{T}_p$.

    \Comment{honest users only input presentations previously produced by $\ifun$}
    
    \alisa{If $\mathsf{U} \in \corrupted$, find $\cred=(\_,\_,\cid,\_,\_) \in \textsf{T}_a$ s.t. $\ver(\cred,\pres)=1$ and $(\_,\cid,\_) \notin \mathsf{T}_r$.} 

    \Comment{presentation of non-issued attributes cannot be forged}
    
    \Comment{revoked credentials cannot be presented}
    
    In both cases, if there is no such entry, send $(\textsf{verify-resp}, \sid, \bot)$ to $\mathsf{U}$ \alisa{and $\mathsf{RP}$}.

    \item Produce leakage to the adversary.
    \begin{itemize}
    \item If $\textsf{RP} \in \corrupted\ \text{and}\ \store \in \corrupted$, send $(\textsf{verify-req}, \sid, \mathsf{U}, \cid, \pres,\alisa{\aidsd},\sids)$ to $\adv$,\\ where
    $\sids \gets \set{\sid\ |\ (\textsf{revoke-req},\sid,\mathsf{U},\cid)\text{was input to $\ifun$}}$.
    
    \item Otherwise, if $\textsf{RP} \in \corrupted $ 
    , send $(\textsf{verify-req}, \sid, \mathsf{U},\alisanotes{\cid},\pres,\alisa{\aidsd})$ to $\adv$.
    
    \item Otherwise, send $(\textsf{verify-req}, \sid, \mathsf{U})$ to $\adv$.
    \end{itemize}
    
    \item  Wait for $(\textsf{verify-ok},\sid,\liza{\ch})$ from $\adv$. \alisa{If $\textsf{RP} \notin \corrupted$, check $\ch \in \mathcal{CH}$}. 
    
    
    
        

    \item \liza{Set $\pres' \gets (\pres, \ch)$. Set $\mathcal{CH} \gets \mathcal{CH} \setminus \{\ch\} $.}
    \item Send $(\textsf{verify-resp},\sid,\pres',v)$ to \textsf{RP} and to \textsf{U}. 
\end{itemize}

\underline{\textsf{Corrupt.}} 

\begin{itemize}
    \item     \Comment{corruption in the beginning of the protocol} 
    
    On input $(\textsf{corrupt-party}, P, \sid)$ from $\adv$ where $P \in \{ \textsf{I, RP}, \store, \textsf{U} \}$,
    
    set $P \in \mathcal{C}$ (if $P = \textsf{U}$, set $\mathsf{c}_{\mathsf{U}} = 3$) and send $(\textsf{corrupted},P,\sid)$ to $\adv$.
    \item  \Comment{corruption after initialisation for user}  
    
    On input $(\textsf{corrupt-party}, \mathsf{U}, \sid)$ from $\adv$:
    \begin{itemize}
        \item if $\mathsf{c}_{\mathsf{U}} = 0$, set $\mathsf{c}_{\mathsf{U}} = 1$ \alisanotes{and $\mathsf{U} \in \mathcal{C}$}; 
        \item if $\mathsf{c}_{\mathsf{U}} = 1$, set \alisanotes{$\mathsf{c}_{\mathsf{U}} = 2$ and send $(\textsf{full-leakage},\sid,\mathsf{U},(\textsf{T}^{\mathsf{U}}_a, \textsf{T}^{\mathsf{U}}_r))$ 
        to $\adv$, where $\mathsf{T}^{\mathsf{U}}$ is projection of table $\mathsf{T}$ to the user $\mathsf{U}$.} 
    \end{itemize}

\end{itemize}

\textsf{authenticate}($\sid,\mathsf{U}$):

\begin{itemize}
    \item Send $(\textsf{auth},\sid)$ to $\adv$ and wait for $(\textsf{auth-ok},\sid)$.
    \item If $\mathsf{c}_{\mathsf{U}} \in \{ 0,1\}$, ignore this query.  
    
    \Comment{with partial corruption, no queries allowed}     
    \item If \alisanotes{$\mathsf{c}_{\mathsf{U}} \in \{2,3\}$}, proceed with the protocol.
\end{itemize}

\end{tcolorbox}
\caption{Ideal functionality $\ifun$ (Part 2)}
\label{fig:ideal-func-2}
\end{figure}

\paragraph{Properties of $\ifun$.}\label{sec:props} 
In Table~\ref{tab:leakage-system}, we summarize which values are leaked to the adversary depending on the corruption and the queried interface of $\ifun$. \liza{During the issuing, supporting server learns attribute values only if it colludes with issuer. Otherwise, it learns only credential identifier and attribute identifiers. During the revocation, AMS server learns identifier of revoked credential only if it colludes with the RP to which this credential has been presented before. During the presentation, AMS server learns disclosed attributes only if it colludes with RP. The only coalition that can learn all the attributes from presented credential consists of both the issuer and the RP, since the issuer has all the information available about the user's attributes, \alisa{and the RP sees when the credential is being presented}. \alisa{Only AMS and the RP together may trace revocation status of presented credentials.}}



We claim that functionality $\ifun$ satisfies the following properties, formally stated and proved in below.

\begin{table}[t]
    \centering
    \begin{tabular}{l|p{0.7cm}|p{0.7cm}|p{0.7cm}|p{1.3cm}|p{2.0cm}|p{2.0cm}|p{0.8cm}}
         & \multicolumn{3}{c|}{\textbf{Issue}}
         & \textbf{Revoke}
         & \multicolumn{3}{c}{\textbf{Present}}\\ \hline 
         \textbf{corrupted parties} & $\cid$ 
         & $\aids$ 
         & $\atrs$ 
         & $\cid$ 
         & $\aids$ 
         & $\atrs$ 
         & $\cid$ 
         \\ \hline \hline
        issuer & \intent{yes} & \intent{yes} & \intent{yes} & no & no & no & no \\ \hline
         AMS & \intent{yes} & \intent{yes} & no & no & only $|\aidsd|$  & no & no \\ \hline
        RP & no & no & no &  no & \intent{only $\aidsd$} & \intent{only $\atrsd$} & yes \\ \hline
        issuer + RP & \intent{yes} & \intent{yes} & \intent{yes} & no & yes$^*$ (via $\cid$) & yes$^*$ (via $\cid$) & yes \\ \hline
         AMS + issuer & \intent{yes} & \intent{yes} & \intent{yes} & no & only $|\aidsd|$ & no & no \\ \hline
        AMS + RP & \intent{yes} & \intent{yes} & no & yes$^*$ & yes$^*$ (via $\cid$) & only $\atrsd$ & yes \\ \hline
        AMS + issuer + RP & \intent{yes} & \intent{yes} & \intent{yes} & yes & yes & yes & yes \\ \hline
    \end{tabular}
    \caption{User data disclosed to corrupted parties, where $\cid$ and $\aids$ are credential and attribute identifiers, $\atrs$ are attribute values, $\aidsd$ are the attributes used in a presentation, and $\atrsd$ is the disclosure of attribute values coming from that presentation. Disclosures marked by $^*$ are possible only for the given coalition of corrupted parties, and are \emph{not} derived from smaller coalitions. Highlighted disclosures (\intent{yes}) are intentional and needed for functionality.}
    \label{tab:leakage-system}
\end{table}

\begin{itemize}
\item The user and the AMS server cannot convince RP of possession of attributes that have not been issued to the user \alisa{using issuing interface of $\ifun$}. 
\item The user cannot convince RP of possession of the credential that has been revoked by the user themselves or by the issuer of that credential. 
\item An adversary who has only partially corrupted the user cannot make issuing, presentation, or revocation requests. 
\end{itemize}

The first two items are covered by Theorem~\ref{thm:two}. The third item is covered by Theorem~\ref{thm:three}.





\begin{theorem}
\label{thm:two} An adversary $\adv$ and an environment $\mathcal{Z}$ running in parallel with $\ifun$, are not able to make a relying party accept an output \alisa{$(\mathsf{verify\text{-}resp},\sid,(\pcred,\ch),1)$} if there is no $\cred=(\_,\_,\cid,\_,\_) \in \mathsf{T}_a$ that would satisfy $\ver(\cred,\pres)$ and $(\_,\cid,\_) \notin \mathsf{T}_r$.
\end{theorem}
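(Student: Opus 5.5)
The plan is to prove the statement by inspecting every code path of $\ifun$ that can send a message of the form $(\mathsf{verify\text{-}resp},\sid,\cdot,\cdot)$ to a relying party. Only the \textsf{Verify} interface does this. It can be reached in three ways: directly by an honest user, through \textsf{Corrupted inputs} by a corrupted user, or through \textsf{Silent request} on behalf of $\mathsf{U}$ when the corruption level permits. In the last two cases the input is replaced by an adversarially chosen one, but the body of \textsf{Verify} runs unchanged. So it suffices to show that every execution of \textsf{Verify} that ends in a response with $v=1$ implies the existence of the required $\cred\in\mathsf{T}_a$. Here $v$ is the flag set by the successful lookup; otherwise the response is $\bot$.

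\textbf{Case 1: $\mathsf{U}\in\corrupted$.} The first step of \textsf{Verify} explicitly searches for $\cred=(\_,\_,\cid,\_,\_)\in\mathsf{T}_a$ with $\ver(\cred,\pres)=1$ and $(\_,\cid,\_)\notin\mathsf{T}_r$. If no such entry exists, $\ifun$ sends $(\mathsf{verify\text{-}resp},\sid,\bot)$ and stops. Hence an accepting output implies exactly the condition in the theorem, evaluated at the moment of verification.

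\textbf{Case 2: $\mathsf{U}\notin\corrupted$.} Here $\ifun$ looks up $(\cred,\pres,\aidsd,1)\in\mathsf{T}_v$ with $\cred=(\mathsf{U},\_,\cid,\_,\_)$. I will prove the invariant that every entry of $\mathsf{T}_v$ was added in \textsf{Present}, and I will check that this holds after every interface is called. At the moment of insertion, \textsf{Present} had (i) found $\cred\in\mathsf{T}_a$, (ii) checked $(\mathsf{U},\cid,\_)\notin\mathsf{T}_r$, and (iii) checked $\ver(\cred,\pres)=1$. Facts (i) and (iii) persist, because no interface ever deletes from $\mathsf{T}_a$ and $\ver$ is fixed once at \textsf{Initialisation}.

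\textbf{Main obstacle.} Fact (ii) is the delicate one. A revocation could occur between \textsf{Present} and \textsf{Verify}, and the honest branch of \textsf{Verify} as written does not recheck $\mathsf{T}_r$; the paper only remarks that the checks always pass "as far as no revocations took place" in between. I would first argue that revocation entries are keyed by the owner $\mathsf{U}$ of $\cid$, which makes (ii) equivalent to the theorem's condition $(\_,\cid,\_)\notin\mathsf{T}_r$. I would then read the statement as quantified at presentation time, or equivalently assume that \textsf{Verify} rechecks $\mathsf{T}_r$ in both branches, as the remark "revoked credentials cannot be presented" intends. If the stricter reading is required, I would add the $\mathsf{T}_r$ check to the honest branch and note that it does not affect the leakage.

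Finally, the adversary's only influence in the \textsf{Verify} interface is the value $\ch$ in $\mathsf{verify\text{-}ok}$ and the choice of whether to deliver. Neither changes $v$. Likewise, \textsf{Silent request} cannot write to $\mathsf{T}_a$ or $\mathsf{T}_v$ except through the same checked interfaces. This completes the argument.
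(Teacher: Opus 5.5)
Your proposal is correct and follows the paper's own proof. It uses the same case split on whether $\mathsf{U}\in\corrupted$: the explicit $\mathsf{T}_a$/$\ver$/$\mathsf{T}_r$ checks settle the corrupted branch, and in the honest branch $\pres$ must come from a \textsf{Present} call that already performed those checks. The timing issue you flag is genuine, and the paper's proof implicitly takes your first option, since it establishes $(\_,\cid,\_)\notin\mathsf{T}_r$ only at the moment $\pres$ was created. Your alternative of re-checking $\mathsf{T}_r$ in the honest branch of \textsf{Verify} is less advisable: in the real protocol the RP checks only against the revocation state of its last \textsf{vnotify}, which may be stale, so that change could break the realization in Theorem~\ref{thm:system}.
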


\begin{proof}
Let us assume an adversary $\adv$ produced $\pres$, such that a relying party received an output \alisa{$(\textsf{verify-resp},\sid,(\alisanotes{\pres},\ch),\aidsd,1)$}, but there is no $\cred=(\_,\_,\cid,\_,\_) \in \mathsf{T}_a$ s.t. $\ver(\cred,\pres)=1$ and $(\_,\cid,\_) \notin \mathsf{T}_r$.

\alisa{If $U \in \corrupted$, then the checks $\cred=(\_,\_,\cid,\_,\_) \in \mathsf{T}_a$, $\ver(\cred,\pres)=1$, and $(\_,\cid,\_) \notin \mathsf{T}_r$ are performed explicitly.}

\alisa{If $U \notin \corrupted$,} then $\pres$ was produced as a response to the presentation command by $\ifun$. This implies that functionality has verified that there exists $\cred \in \mathsf{T}_a$ such that $(\_,\cid,\_) \notin \mathsf{T}_r$ before creating $\pres$. Passing subsequent verification check implies that $\ver(\cred,\pres)=1$. Altogether, this contradicts the assumption.



\end{proof}

\begin{theorem}
\label{thm:three} An adversary $\adv$ and an environment $\mathcal{Z}$ running in parallel with $\ifun$, corrupting user for at most $\mathsf{c}_{\mathsf{U}} = 1$, are not able to issue requests (issuing, user revocation, presentation) to the ideal functionality.
\end{theorem}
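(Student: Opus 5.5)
We argue directly from the description of $\ifun$ in Figures~\ref{fig:ideal-func-1} and~\ref{fig:ideal-func-2}, in the same style as the proof of Theorem~\ref{thm:two}. The idea is to enumerate every path through which a request to \textsf{Issue}, \textsf{Revoke} (on behalf of $\mathsf{U}$), or \textsf{Present} can reach $\ifun$ with $\mathsf{U}$ as initiator. We then show that, as long as $\mathsf{c}_{\mathsf{U}} \le 1$, each path is either unavailable to $\adv$ and $\env$ acting on the corrupted user's behalf, or is blocked by the \textsf{authenticate} sub-routine.

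First, observe how the corruption level can evolve. By the \textsf{Corrupt} interface, the only way to reach $\mathsf{c}_{\mathsf{U}}=3$ is corruption at the beginning of the protocol. Starting from $\mathsf{c}_{\mathsf{U}}=0$, each post-initialisation corruption request raises the level by exactly one, so $\mathsf{c}_{\mathsf{U}}=2$ requires a second such request. Under the hypothesis $\mathsf{c}_{\mathsf{U}}\le 1$, neither transition occurs, so throughout the execution $\mathsf{c}_{\mathsf{U}}\in\{0,1\}$.

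Next, we enumerate the entry points. The routines \textsf{Issue}, \textsf{Present}, and user-initiated \textsf{Revoke} are triggered either (i) by an input from $\mathsf{U}$ provided by $\env$, or (ii) via \textsf{Silent request} from $\adv$. Path (i) models the honest user's own (device owner's) queries. When $\mathsf{U}\in\corrupted$ at level $1$, the \textsf{Corrupted inputs} clause lets $\adv$ see and replace $x$ only for calls that $\env$ itself issued through $\mathsf{U}$. This does not let $\adv$ originate a request, which is the intended meaning of the partial-corruption model: the adversary sees inputs and outputs but cannot initiate. For path (ii), with $P=\mathsf{U}$, \textsf{Silent request} skips authentication only if $\store\in\corrupted$ or $P=\mathsf{I}$. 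Otherwise it calls $\textsf{authenticate}(\sid,\mathsf{U})$, which ignores the query whenever $\mathsf{c}_{\mathsf{U}}\in\{0,1\}$. Hence no routine is executed and nothing is returned.

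The main obstacle is the clause ``if $\store\in\corrupted$, authentication is not required'' in \textsf{Silent request}. Read literally, it would let a corrupted AMS invoke routines on behalf of a partially corrupted user. I would handle this by arguing that this clause is meant to cover only routines whose initiator is the storage or issuer itself (e.g.\ issuer revocation, or storage-side actions), not requests with $P=\mathsf{U}$. Correspondingly, in the real protocol any user-initiated action requires a signature via $\funsign$, which demands the $\pin$ and the client's key share; by the guarantees of $\funsign$, a corrupted server with a client at $\mathsf{c}_\mathrm{P}\le 1$ cannot obtain such signatures. If the clause must be taken literally, I would state the theorem for $\store\notin\corrupted$, or refine the functionality so that authentication is always required when $P=\mathsf{U}$. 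The proof then concludes by combining the invariant $\mathsf{c}_{\mathsf{U}}\in\{0,1\}$ with the case analysis over paths (i) and (ii).
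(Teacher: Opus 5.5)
Your core argument is the same as the paper's. The paper's proof simply asserts that issuing, user revocation and presentation requests on behalf of $\mathsf{U}$ all go through \textsf{authenticate}, which ignores them while $\mathsf{c}_{\mathsf{U}}\in\{0,1\}$, so any request that succeeds must come from the honest user. Your split into paths (i) and (ii), and your remark that the \textsf{Corrupted inputs} clause lets $\adv$ alter but not originate environment-issued calls, make explicit what the paper's ``called by the honest user'' leaves implicit. You are also right that the paper's assertion skips the $\store\in\corrupted$ exception in \textsf{Silent request}. Read literally, the statement does not hold in that case, even with $\mathsf{c}_{\mathsf{U}}=0$.

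The step that fails is your preferred repair. The bypass cannot be about storage- or issuer-initiated routines: $\ifun$ has no routine initiated by $\store$, and $P=\mathsf{I}$ is already listed separately, so that reading makes the clause vacuous. Your supporting claim, that under $\funsign$ a corrupted server with a client at $\mathsf{c}_\mathrm{P}\le 1$ cannot obtain signatures, is false for $\mathsf{c}_\mathrm{P}=1$. $\funsign$ has the trigger ``if $\mathsf{c}_\mathrm{S}=\mathsf{c}_\mathrm{C}=1$, set $\mathsf{c}_\mathrm{C}\gets 2$''. This is followed by leaking $\pin$ to the adversary, after which \textsf{Signing by adversary and server} succeeds subject essentially only to clone detection. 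The simulator propagates this escalation to $\ifun$ as $\mathsf{c}_{\mathsf{U}}=2$. It also uses \textsf{Silent request} when a corrupted AMS pushes revocations through $\funrevoke$. So the bypass is deliberately there to model the corrupted-AMS case.

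Only your fallback is sound: read the hypothesis as the adversary corrupting only the user, so $\store\notin\corrupted$, which is what ``corrupting user for at most $\mathsf{c}_{\mathsf{U}}=1$'' suggests. Under that reading, your invariant together with the rejection by \textsf{authenticate} is a complete proof, the same in substance as the paper's. You should drop the reinterpretation argument.
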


\begin{proof}
    Assume that $\adv$, who partially corrupted user ( $\mathsf{c}_{\mathsf{U}} = 1$) makes request on behalf of the user to the issuing, user revocation or presentation  interface. Each of the interfaces firstly calls \textsf{authenticate} sub-routine with the user identifier $\mathsf{U}$ as input. To proceed with the interface instructions, it should hold that either the user is fully corrupted by the adversary ($c_{\mathsf{U}} = 2$), or the interface has been called by the honest user, not the adversary, which contradicts the assumption. 
\end{proof}

\subsection{The protocol set}
In this section, we present a set of protocols of the proposed attribute management system. 
The AMS server uses 
$\funrevoke$ to manage revocations. 
It also serves as the supporting server for threshold decryption ($\fundec$) and signing ($\funsign$).

\alisa{The notation used in this section is given in Table~\ref{tab:notation}. The real protocol routines will need to internally store some additional values, which are not a part of input/output interface. 
For clarity, we \hidden{highlight} such values.}

\begin{table}[ht!]
    \centering
    \begin{tabular}{c | p{7cm}}
        \textbf{Notation} & \textbf{Explanation} \\ \hline \hline
         $\aid$ & attribute identifier \\ \hline
         $\aids = (\aid_1, \dots ,\aid_n)$ & list of attribute identifiers \\ \hline
         $\atr$ & attribute value \\ \hline
         $c_{\atr}$ & encrypted attribute value \\ \hline
         $\atrs = (\atr_1, \dots ,\atr_n)$ & list of attribute values \\ \hline
         $\cid$ & credential identifier \\ \hline
         $\cids = (\cid_1,\ldots,\cid_n)$ & list of credential identifiers \\ \hline
         $\cred = \langle (\aid_1, c_{\atr_1}), \dots (\aid_n, c_{\atr_n}) \rangle$ & credential \\ \hline
         $\pcred$ & credential presentation \\ \hline
         $\uid$ & user identifier \\ \hline
         $\iid$ & issuer identifier \\ \hline
         $\idfun(cmd,\sid,x)$ & run the subroutine $cmd$ of $\idfun$ with $\sid$ and input $x$ \\ \hline
    \end{tabular}
    \caption{Notation for Prettiness system}
    \label{tab:notation}
\end{table}

\subsubsection{Initialisation}
Parties initialise building block functionalities \lizanew{as well as perform generation of keypairs as depicted on Figure~\ref{fig:prettiness-init}.} There is an instance of \lizanew{$\funs^{\iid}$} for an issuer identified by $\iid$, and of \lizanew{$\funsign^{\uid}$, $\fundec^\uid$} for a user identified by $\uid$. \alisa{Technically, the issuers inform to the verifiers their public keys, and each user generates keys for signing and decryption, shared with the supporting server, getting a $\pin$ that it can use to access the signing functionality later.} Revocation functionality $\funrev$ is initialised by AMS for all users altogether. 

\begin{figure}[htbp]
\centering
\begin{pchstack}
\begin{pcvstack}
\pseudocode[head={$\textsf{Initialise}_{\mathsf{U}}$}]{
\Comment{get input from environment}\\
\env \send: (\textsf{initialise-req}, \sid);\\
\Comment{generate signing key pair}\\
\pk \gets \funsign^{\uid}(\textsf{keygen},\sid,\hidden{L,T_0,\pin}) \\
\Comment{generate decryption key pair}\\
\pk_{\textsf{c}} \gets \fundec^{\uid}(\textsf{KeyGen},\sid) \\
}
\end{pcvstack}
\pchspace
\begin{pcvstack}
\pseudocode[head={$\textsf{Initialise}_{\store}$}]{
\Comment{generate signing key pair}\\
\pk \gets \funsign^{\uid}(\textsf{keygen},\sid,\hidden{T_0}) \\
\Comment{generate decryption key pair} \\
\pk_{\textsf{s}} \gets \fundec^{\uid}(\textsf{KeyGen},\sid) \\
}
\end{pcvstack}
\end{pchstack}
\caption{Initialisation protocol (User, $\store$)}
\label{fig:prettiness-init}
\end{figure}

\paragraph{User authentication.} After the initialisation, each query to the AMS server by the user starts from authentication. 
For this, user creates a threshold signature $\sigma_{\req}$, using \alisa{$\funsign^{\uid}$}.
Threshold signature generation using $\funsign^{\uid}$ already includes user authentication to the server. \alisa{Technically, going into details of the protocol implementing $\funsign$, the user proves ownership of their secret key share, as well as a special randomness, which serves as a challenge for the next authentication}. Obtained signature serves as a consent from the user to perform certain actions with their data. \lizanew{The usage of threshold signatures instead of regular signatures is important to ensure user's key protection. Since the user's private key is stored on the mobile device, usage of threshold signatures adds level of protection to the private keys. Since the signature also servers as a consent to perform certain actions with the user's credentials, this signature should be verifiable for audit purposes, hence the security of the corresponding private key is crucial.}




\subsubsection{Issuing}


The protocol is presented in Figure~\ref{fig:prettiness-issue}. Upon successful user authentication, AMS server forwards attribute request to the issuer. \alisa{The request contains a fresh identifier $id_\cid$ for the credential. To avoid repetition attacks by a corrupted AMS, the issuer keeps all received signatures and tracks their repetitions.}

\alisa{AMS provides to the user a link that they can use to directly communicate with the issuer $\iid$ (i.e. links $\iid$ to a communication channel of $I$), which is needed to generate revocation tokens using $\funrevoke$. There will be two such tokens, one for the user, and one for the issuer. First, the user generates a token $\rev_U$ whose revoker will be the user. Then, the user and the issuer jointly generate a token $\rev_I$, whose revoker will be the issuer.}

\alisa{For each requested attribute $\aid_i$, the issuer comes up with a value $\atr_i$.} The issuer creates the credential by encrypting each $\atr_i$ under the user's public key of $\fundec^{\uid}$, including revocation token\alisa{s $\rev_I$ and $\rev_U$}. Next, the issuer signs the credential using $\funs^{\iid}$ and forwards it to the AMS server. AMS server verifies the issuer signature and forwards issued credential to the user for approval. User decrypts attributes from the credential, verifies that the attributes are not malformed \alisa{(i.e. that the value $\atr_i$ belongs to the set of valid attribute values $\values{\aid_i}$)}, that revocation token data is correct, and the \liza{correctness of} issuer signature. If all the verification steps succeed, AMS server saves the issued credential with identifier $\cid = \alisa{(id_{\cid},\textsf{summary}(\uid,\aids,\iid))}$, where  $id_{\cid}$ ensures uniqueness of $\cid$, and function $\textsf{summary}$ generates an intuitive short description of the credential. \alisa{The user includes credential data into a small checksum digest that it maintains locally.}

\begin{figure}[htbp]
\centering
\begin{pchstack}
\begin{pcvstack}
\pseudocode[head={$\textsf{Issue}_{\mathsf{U}}$}]{
\Comment{get input from environment}\\
\alisa{\env \send: (\textsf{issue-req}, \sid, \aids, \iid);}\\
\alisa{id_{\cid} \sample \pirkeys};\\
\text{\Comment{establish connection with the server}} \\
\alisa{\send \store: \issuereq, \sid, \uid;}\\
\text{\Comment{generate request message}} \\
m = (\issuereq,\alisa{id_{\cid}},\uid,\aids,\iid); \\
\text{\Comment{generate threshold signature}} \\
\sigma_{\req} \gets \funsign^{\uid}(\textsf{sign},\sid,m,\hidden{\pin}); \\
\alisa{\longrightarrow \store : \liza{\sigma_{\req},m};}\white{\aids}\\
\\
\Comment{Get connected to the issuer $I$}\\
\alisa{\store \longrightarrow: I, \iid;}\\
\Comment{locally generate a revocation token for the user}\\
\alisa{\rev_U \gets \funrev(\textsf{issue},U,(\sid,0));\white{\aids}} \\
\alisa{\longrightarrow I: \rev_U;}\\
\Comment{jointly generate revocation token for the issuer}\\
\alisa{\rev_I \gets \funrev(\textsf{issue},\alisa{I},(\sid,1));} \\
\Comment{get encrypted credential from AMS}\\
\store \longrightarrow: (\cred,\sigma_{\cred}); \\
\alisanotes{\ok \gets \funs^{\iid}(\textsf{verify},\sid,\cred,\sigma_{\cred})};\\
\text{\Comment{decrypt and verify content}} \\
\alisa{(\rev'_I,\rev'_U)} \gets \fundec^{\uid}(\textsf{Decrypt},\sid,c_{\rev}); \\
\pcassert \alisa{(\rev_I,\rev_U) = (\rev'_I,\rev'_U)};\\
\pcfor \aid_i \in \aids \pcdo: \\
\quad \atr_i \gets \fundec^{\uid}(\textsf{Decrypt},(\sid,i),c_{\atr_i}); \\
\quad \pcassert \atr_i \alisa{\in \values{\aid_i}}; \\
\liza{\pcassert \atr_0 = \pk_\uid;} \\
\alisa{\cid \gets (\id_{\cid},\textsf{summary}(\uid,\aids,\iid));}\\
\alisa{\hidden{\hcred} \gets H(\hidden{\hcred}, (\cid,\cred,\sigma_\cred))};\\
\alisa{\send \env: (\textsf{issue-resp}, \sid, \cid);}
}
\end{pcvstack}
\pchspace
\begin{pcvstack}
\pseudocode[head={$\textsf{Issue}_{\store}$}]{
\\
\white{\aids}\\
\\ \Comment{wait for user connection}\\
\alisa{U \send: \issuereq, \sid, \uid;}\\
\\
\white{\aids}
\\
\text{\Comment{generate and verify threshold signature}} \\
\liza{\ok} \gets \funsign^{\uid}(\textsf{sign},\sid); \\
\alisa{\textsf{U} \longrightarrow: \liza{\sigma_{\req},m=(\issuereq,id_{\cid},\uid,\aids,\iid)};}\\
\liza{\ok \gets \funsign^{\uid}(\textsf{verify},\sid,m,\sigma_{\req});} \\
\text{\Comment{authenticate $I$ as $\iid$ to $U$}} \\
\alisa{\longrightarrow \textsf{U}: I, \iid;}\\
\text{\Comment{authenticate $U$ as $\uid$ to $I$}} \\
\alisa{\longrightarrow \textsf{I}: U, \alisa{id_{\cid}},\uid,\aids, \sigma_\req;} \\
\Comment{get encrypted credential from $I$}\\
\textsf{I} \longrightarrow : (\cred,\sigma_{\cred}); \\
\alisanotes{\ok \gets \funs^{\iid}(\textsf{verify},\sid,\cred,\sigma_{\cred})};\\
\Comment{send encrypted credential to $U$}\\
\longrightarrow \textsf{U} : (\cred,\sigma_{\cred}); \\
\\
\text{\Comment{participate in threshold decryption}} \\
\ok \gets \fundec^{\uid}(\textsf{Decrypt},\sid);\white{\rev'} \\
\\
\alisanotes{\pcfor \aid_i \in \aids \pcdo:}\\
\alisanotes{\quad ok \gets \fundec^{\uid}(\textsf{Decrypt},(\sid,i))}; \\
\\ \\
\alisa{\cid \gets (\id_{\cid},\textsf{summary}(\uid,\aids,\iid))};\\
\alisa{\db{S}[\uid,\cid] \gets (\cred,\sigma_{\cred});}\\
}
\end{pcvstack}
\end{pchstack}
\caption{Issuing protocol (User, $\store$)}
\label{fig:prettiness-issue}
\end{figure}

\begin{figure}[htbp]
\centering
\begin{pchstack}
\begin{pcvstack}
\pseudocode[head={$\textsf{Issue}_{\mathsf{I}}$}]{
\text{\Comment{receive an issue request from the server}} \\
\store \longrightarrow: \alisa{\sid, U, id_{\cid}},\uid,\aids, \sigma_\req; \\ 
\text{\Comment{verify the request}} \\
ok \gets \alisanotes{\funsign^{\uid}(\textsf{verify}, \sid, m=(\issuereq,\alisa{id_{\cid}},\uid,\aids,\iid), \sigma_{\req})}\\
\text{\Comment{verify freshness of the credential identifier}} \\
\cid \gets \alisa{(id_{\cid},\textsf{summary}(\uid,\aids,\iid))};\
\alisanotes{\pcassert \cid \notin \cids};\\
\alisanotes{\cids \gets \cids \cup \cid}\\
\text{\Comment{generate revocation tokens for the user and the issuer}} \\
\alisa{U \send: \rev_U;} \\
\alisa{\rev_I} \gets \funrev(\textsf{issue},\alisa{U},(\sid,1));\\
\Comment{request attribute values from the environment}\\
\alisa{\send \env: (\textsf{issue-req}, \sid, \uid, \cid, \aids);}\\
\alisa{\env \send: (\textsf{issue-atr}, \sid, \atrs);}\\
\Comment{verify attribute values obtained from the environment}\\
\alisa{\pcassert(\atr_1 \in \values{\aid_1},\ldots,\atr_n \in \values{\aid_n});}\\ 
\text{\Comment{encrypt revocation tokens and the user public key}} \\
c_{\rev} \gets \fundec^{\uid}(\textsf{Encrypt},\sid,\alisa{(\rev_I,\rev_U)}); \\
\liza{\atr_0 \gets \pk_\uid}; \\
\liza{c_{\atr_0} \gets \fundec^{\uid}(\textsf{Encrypt},\sid,\atr_0)}; \\
\Comment{encrypt attribute values}\\
\pcfor \aid_i \in \aids \pcdo\\
\quad c_{\atr_i} \gets \fundec^{\uid}(\textsf{Encrypt},(\sid,i),\atr_i) \\
\Comment{compose the credential}\\
\cred = \langle \liza{(\aid_{0}, c_{\atr_0})}, (\aid_1, c_{\atr_1}), \dots (\aid_n, c_{\atr_n}), (\aid_{\rev}, c_{\rev}) \rangle \\
\sigma_{\cred} \gets \funs^{\iid}(\textsf{sign},\sid,\cred); \\
\longrightarrow \store : (\cred,\sigma_{\cred});\\
\Comment{remember revocation token}\\
\alisa{\db{I}[\cid] = \rev_I;}\\
\alisa{\send \env: (\textsf{issue-resp}, \sid, \cid);}\\
}
\end{pcvstack}
\end{pchstack}    
\caption{Issuing protocol (Issuer)}
\label{fig:prettiness-issue-2}
\end{figure}

\subsubsection{Get all $\cid$-s} The protocol of Figure~\ref{fig:prettiness-getcids} allows to fetch all the previously issued credential identifiers $\cids$, as well as all the stored credentials $\creds$. While user has already received all this data during the issuing phase, the server storage can be used as a backup. Verification of the received data against the previously stored checksum ensures integrity against corrupted AMS. While issuer's signature already protects integrity of a credential, the checksum prevents AMS from being silent about the existence of some credential. At this step, the user only downloads \emph{encrypted} credentials, and will need access to the decryption functionality to get their contents.

Since we want the server to serve as a backup, we need to be sure that user has all relevant data to further \emph{use} the credentials. Besides raw credential data, the user also needs the  following protocol-specific data:
\begin{itemize}
\item The ciphertexts of $\fundec^{\uid}$ should be decryptable, and the signatures of $\funsign^{\uid}$ generatable. We assume that the user has not lost their secret keys corresponding to these functionalities.
\item To keep integrity against the corrupted server, the user needs to keep the small checksum digest. As this digest is not sensitive (it it is computed from encrypted credentials), it can be copied to any other device.
\item The revocation protocol of~\cite{funrevoke} that realises functionality $\funrevoke$ assumes that the user knows a small secret $r$ that corresponds to the token $x$, which is needed for both revocation and presentation. Technically, $r$ can also be stored on the server in form of a ciphertext $c_r$, which can (but does not have to) be included into the credential. This additional overhead has been included into the benchmarks of Sec.~\ref{sec:efficiency}.

\item The user, acting as a revoker, should be able to prove that AMS has ignored their revocation. \alisa{In the revocation protocol of~\cite{funrevoke},} the user should be able to present a confirmation signature of AMS. Such signatures cannot be stored by AMS. Here we assume that the number of such emergency-revoked credentials is not large for a single user, so they can afford to store them elsewhere.
\end{itemize}

\begin{figure}[htbp]
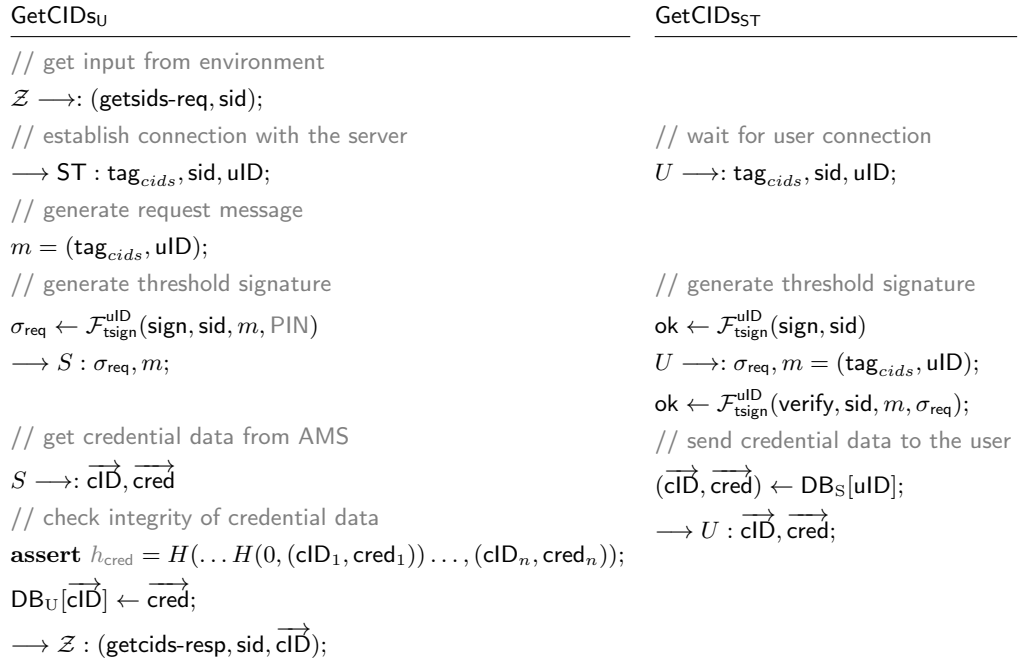

\centering
\begin{pchstack}
\begin{pcvstack}
\pseudocode[head={$\textsf{GetCIDs}_{\mathsf{U}}$}]{
\Comment{get input from environment}\\
\alisa{\env \send: (\textsf{getsids-req},\sid);}\\
\text{\Comment{establish connection with the server}} \\
\alisa{\send \store: \getcidsreq, \sid, \uid;}\\
\text{\Comment{generate request message}} \\
m = (\alisa{\getcidsreq},\uid);\\
\text{\Comment{generate threshold signature}} \\
\sigma_{\req} \gets \funsign^{\uid}(\textsf{sign},\sid,m,\hidden{\pin}) \\
\liza{\send S: \sigma_{\req},m;}\\
\\
\text{\Comment{get credential data from AMS}} \\
\alisa{S \send : \cids,\creds}\\
\text{\Comment{check integrity of credential data}} \\
\alisa{\pcassert \hidden{h_\cred} = H(\ldots H(0,(\cid_1,\cred_1))\ldots,(\cid_n,\cred_n));}\\
\alisa{\db{U}[\cids] \gets \creds;}\\
\alisa{\send \env: (\textsf{getcids-resp},\sid,\cids);}
}
\end{pcvstack}
\pchspace
\begin{pcvstack}
\pseudocode[head=$\textsf{GetCIDs}_{\store}$]{
\\ \\
\Comment{wait for user connection}\\
\alisa{U \send: \getcidsreq, \sid, \uid;}\\
\\ \\
\text{\Comment{generate threshold signature}} \\
\liza{\ok}
\gets \funsign^{\uid}(\textsf{sign},\sid) \\
\liza{U \send: \sigma_{\req},m =(\alisa{\getcidsreq},\uid);}\\
\liza{\ok \gets \funsign^{\uid}(\textsf{verify},\sid,m,\sigma_{\req});} \\
\text{\Comment{send credential data to the user}} \\
\alisa{(\cids,\creds) \gets \db{S}[\uid];}\\
\alisa{\send U: \cids,\creds;}\\
\\
\\
}    
\end{pcvstack}
\end{pchstack}
\caption{Protocol for downloading \alisa{credential backup from the server}}
\label{fig:prettiness-getcids}
\end{figure}

\subsubsection{Revocation}

Revocation can be initiated either by the user (Figure~\ref{fig:prettiness-revoke-u}) who wants to revoke their credential, or by the issuer (Figure~\ref{fig:prettiness-revoke-i}). Issuer revocation is initiated by the issuer's signed revocation request, which confirms that it is an authorised revocation. 
Both parties revoke $\cid$ by revoking the corresponding revocation token ($\rev_U$ or $\rev_I$) using $\funrevoke$.

Revocation procedure may result in a double revocation of the same credential, by the user and by the issuer. During verification, as RP does the check for $\rev_I$ and $\rev_U$ separately, this could leak to RP the fact that some credential has been rejected twice. \alisa{Moreover, if the token has been revoked once, RP could learn whether it has been revoked by the user or by the issuer.} As far as the AMS is honest, since the user will call revocation notification just before the final verification, this leakage is only possible if the server collaborates with a corrupted RP.


\begin{figure}[htbp]
\centering
\begin{pchstack}
\begin{pcvstack}
\pseudocode[head={$\textsf{Revoke}_{\mathsf{U}}$}]{
\Comment{get input from environment}\\
\alisa{\env \send: (\textsf{revoke-req},\sid,\cid);}\\
\text{\Comment{select credential to revoke}} \\
\alisa{(\cred, \sigma_{\cred}) \gets \db{U}[\cid];} \\
\text{\Comment{establish connection with the server}} \\
\alisa{\send \store: \revokereq, \sid, \uid;}\\
\text{\Comment{generate request message}} \\
\alisa{c_\cid \gets \fundec^{\uid}(\textsf{Encrypt},sid,\cid);}\\
m = (\alisa{\revokereq},\uid,\alisa{c_\cid});\\
\text{\Comment{generate threshold signature}} \\
\sigma_{\req} \gets \funsign^{\uid}(\textsf{sign},\sid,m,\hidden{\pin}) \\
\liza{\send S: \sigma_{\req},m;}\\
\\
\text{\Comment{decrypt revocation token data}}\\
\alisa{(\rev_I,\rev_U) \gets \fundec^{\uid}(\textsf{Decrypt},\sid,c_{\rev})}; \\
\text{\Comment{revoke the selected token}} \\
\ok \gets \funrev(\textsf{revoke},\sid,\alisa{\rev_U});\\
\alisa{\send \env: (\textsf{revoke-resp},\sid,\ok);}
}
\end{pcvstack}
\pchspace
\begin{pcvstack}
\pseudocode[head={$\textsf{Revoke}_{\store}$}]{
 \\ \\
 \\ \\
\Comment{wait for user connection}\\
\alisa{U \send: \revokereq, \sid, \uid;}\\
\\ \\ \\
\text{\Comment{generate and verify threshold signature}} \\
\liza{\ok} \gets \funsign^{\uid}(\textsf{\sign},\sid) \\
\liza{U \send: \sigma_{\req},m=(\revokereq,\uid,c_\cid);}\\
\liza{\ok \gets \funsign^{\uid}(\textsf{verify},\sid,m,\sigma_{\req});} \\
\text{\Comment{participate in threshold decryption}} \\
\ok \gets \fundec^{\uid}(\textsf{Decrypt},\sid); \\
\text{\Comment{confirm revocation}} \\
\ok \gets \funrev (\textsf{revoke},\sid);\\
}    
\end{pcvstack}
\end{pchstack}
\caption{Revocation protocol initiated by user}
\label{fig:prettiness-revoke-u}
\end{figure}

\begin{figure}[htbp]
\centering
\begin{pchstack}
\begin{pcvstack}
\pseudocode[head={$\textsf{Revoke}_{\mathsf{I}}$}]{
\Comment{get input from environment}\\
\alisa{\env \send: (\textsf{revoke-req},\sid,\cid);}\\
\text{\Comment{select credential to revoke}} \\
\alisa{\rev_I \gets \db{I}[\cid];}\\
\text{\Comment{establish connection with the server}} \\
\alisa{\send \store: \revokereq, \sid, \iid;}\\
\text{\Comment{extract $\uid$ from $\cid$}} \\
\alisa{\uid \gets \cid;}\\
\text{\Comment{generate request message}} \\
\alisa{c_\cid \gets \fundec^{\uid}(\textsf{Encrypt},sid,\cid);}\\
\alisa{m = (\revokereq,\iid,c_\cid)};\\
\sigma_{\req} \gets \funs^{\iid}(\textsf{sign},\sid,m); \\
\longrightarrow \store: (m,\sigma_{\req}); \\
\\
\text{\Comment{revoke the selected token}} \\
\ok \gets \funrev(\textsf{revoke},\sid,\alisa{\rev_I});\\
\alisa{\send \env: (\textsf{revoke-resp},\sid,\ok);}
}
\end{pcvstack}
\pchspace
\begin{pcvstack}
\pseudocode[head=$\textsf{Revoke}_{\store}$]{
 \\ \\
 \\ \\
\Comment{wait for issuer connection}\\
\alisa{I \send: \revokereq, \sid, \iid;}\\
\\ \\ \\ \\ \\ \\
\textsf{I} \longrightarrow: (m\alisa{=(\revokereq,\iid,c_\cid)},\sigma_{\req}) \\
\ok \gets \funs^{\iid}(\textsf{verify},\sid,m,\sigma_{\req}) \\
\text{\Comment{confirm revocation}} \\
\ok \gets \funrev(\textsf{revoke},\sid);\\
}
\end{pcvstack}
\end{pchstack}
\caption{Revocation protocol initiated by issuer}
\label{fig:prettiness-revoke-i}
\end{figure}

\subsubsection{Presentation}
The protocol of Figure~\ref{fig:prettiness-present} allows the user to generate presentation of credential with identifier $\cid$ and selectively disclosed attributes $\aidsd$. 
\alisa{First of all, the user looks for a} credential $\cred$ identified by $\cid$. User proceeds with authentication, followed by decrypting revocation token pair $\rev=(\rev_I,\rev_U)$ and the chosen attributes indexed by the identifiers $\aidsd$ using threshold decryption of $\fundec$. The user calls $\funrev$ to verify that revocation tokens $\rev_I$ and $\rev_U$ have not been revoked. The attributes are decrypted even if the credential is revoked, to prevent leaking the fact of revocation to the server. The obtained credential presentation $\pres$ contains \alisa{only the disclosed attributes}. All auxiliary data that will be needed to verify its validity will be provided in the verification phase. 


\begin{figure}[htbp]
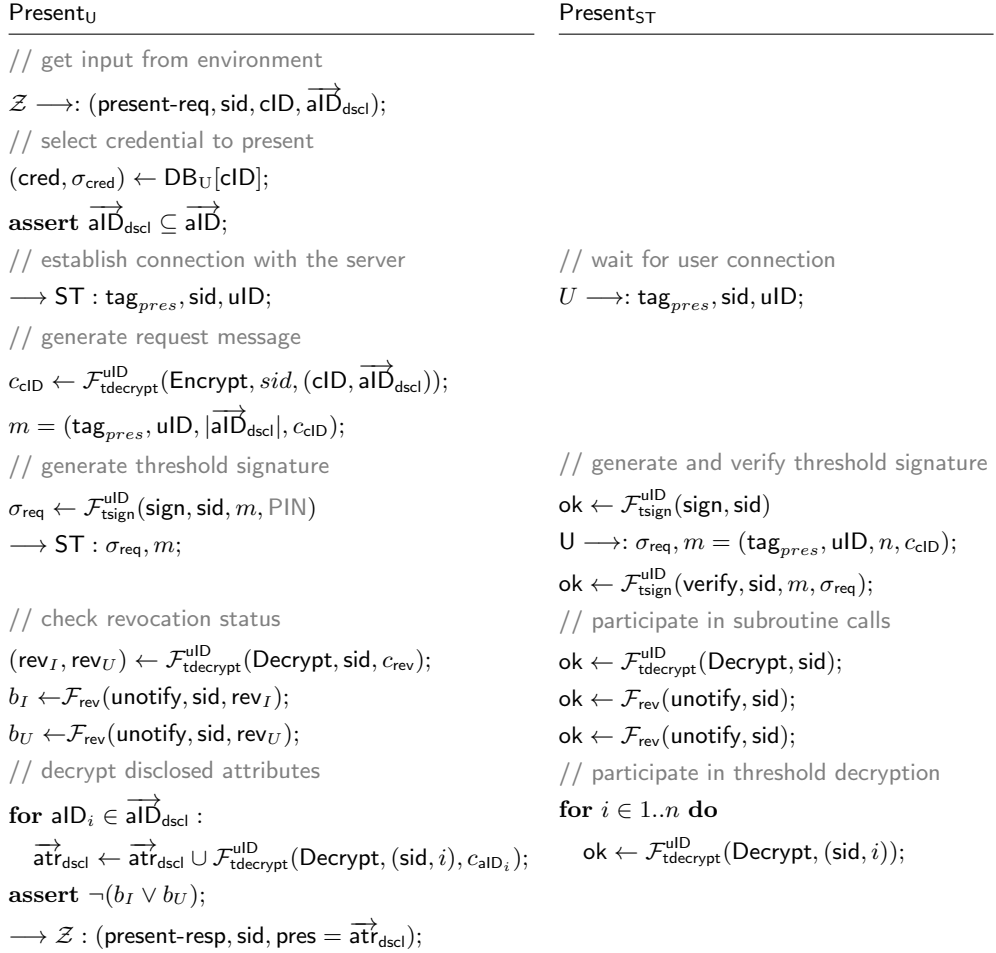

\centering
\begin{pchstack}
\begin{pcvstack}
\pseudocode[head={$\textsf{Present}_{\mathsf{U}}$}]{
\Comment{get input from environment}\\
\alisa{\env \send: (\textsf{present-req},\sid,\alisanotes{\cid},\aidsd);}\\
\text{\Comment{select credential to present}} \\
\alisa{(\cred,\sigma_\cred) \gets \db{U}[\cid];}\\
\alisa{\pcassert \aidsd \subseteq \aids;}\\
\text{\Comment{establish connection with the server}} \\
\alisa{\send \store: \presentreq, \sid, \uid;}\\
\text{\Comment{generate request message}} \\
\alisa{c_\cid \gets \fundec^{\uid}(\textsf{Encrypt},sid,(\cid,\aidsd));}\\
\alisa{m=(\presentreq,\uid,|\aidsd|,c_\cid)};\\
\text{\Comment{generate threshold signature}} \\
\sigma_{\req} \gets \funsign^{\uid}(\textsf{sign},\sid,m,\hidden{\pin}) \\
\liza{\send \mathsf{\store}: \sigma_{\req},m;} \\
\\
\Comment{check revocation status}\\
\alisa{(\rev_I,\rev_U)} \gets \fundec^{\uid}(\textsf{Decrypt},\sid,c_{\rev}); \\
\alisa{b_I \gets } \funrev(\textsf{unotify},\sid,\alisa{\rev_I});\\
\alisa{b_U \gets } \funrev(\textsf{unotify},\sid,\alisa{\rev_U});\\
\Comment{decrypt disclosed attributes}\\
\pcfor \aid_i \in \aidsd: \\
\quad \atrsd \gets \atrsd \cup \fundec^{\uid}(\textsf{Decrypt},(\sid,i),c_{\aid_i}); \\
\alisa{\pcassert \neg(b_I \vee b_U);}\\
\alisa{\send \env: (\textsf{present-resp},\sid,\alisa{\pres=\atrsd});}
}
\end{pcvstack}
\pchspace
\begin{pcvstack}
\pseudocode[head=$\textsf{Present}_{\store}$]{
 \\ \white{\aids}\\
 \\ \\ \white{\aids}\\
\Comment{wait for user connection}\\
\alisa{U \send: \presentreq, \sid, \uid;}\\
\\ \white{\aids}\\ \white{\aids}\\
\text{\Comment{generate and verify threshold signature}} \\
\liza{\ok} \gets \funsign^{\uid}(\textsf{sign},\sid) \\
\liza{\mathsf{U} \send: \sigma_{\req},m=(\presentreq,\uid,n,c_\cid);} \\
\liza{\ok \gets \funsign^{\uid}(\textsf{verify},\sid,m,\sigma_{\req});} \\
\text{\Comment{participate in subroutine calls}} \\
\ok \gets \fundec^{\uid}(\textsf{Decrypt},\sid); \\
\ok \gets \funrev(\textsf{unotify},\sid); \\
\ok \gets \funrev(\textsf{unotify},\sid); \\
\text{\Comment{participate in threshold decryption}} \\
\alisanotes{\pcfor i \in 1..n \pcdo} \\
\alisanotes{\quad \ok \gets \fundec^{\uid}(\textsf{Decrypt},(\sid,i))};\\
\\
}    
\end{pcvstack}
\end{pchstack}
\caption{Presentation with selective disclosure}
\label{fig:prettiness-present}
\end{figure}

\subsubsection{Verification by relying party} The protocol is presented on Figure~\ref{fig:prettiness-verify}. Before verification, RP receives updated state of the revocation functionality using notification interface of $\funrev$. If it is done immediately before the presentation, we cannot guarantee the unlinkability of presentation to the RP by AMS server, so it should be run independently. \liza{RP generates fresh challenge and sends it to the user for signature. This step is required to guarantee freshness of the presentation. User generates signature using $\funsign$, and blindness property ensures that the AMS server does not learn anything about credential presentation,} \alisa{except the fact that the user has presented some credential to some party.} \lizanew{We include $\cred$ and $\sigma_{\cred}$ to the message that gets signed, to achieve structure more similar to SD-JWT presentation protocol.} RP receives credential presentation from the user and verifies \liza{the user} and the issuer signatures. It uses verification interface of $\funrevoke$ to ensure that the credential is not revoked. Then it verifies whether the selectively disclosed attributes are indeed contained in the signed credential.


\alisa{Using threshold signing for the challenge means that the supporting server should be online upon verification. If we want to make verification possible offline, we will need to use ordinary signature like $\funs$, which means weaker guarantees for the RP if the user device gets stolen together with the secret key, or cloned. If server-supported signature is used, then the actual user may contact the server to stop generating signatures in case of stealth, and clone detection protects against silent cloning of the user device.}

Selective disclosure is achieved by decrypting only the values of selected $\aidsd$. For each such $\aid$, user calls verification interface of $\fundec$ to prove to RP that the presented plaintext value indeed corresponds to the ciphertext from the signed credential $\cred$. 
Figure~\ref{fig:comparison} presents comparison of the proposed solution with the selective disclosure for JWT specification with the trivial solution of protecting credentials in the storage. The main difference is that in the case of SD-JWT for the presentation, the user needs to decrypt all the attributes. Therefore, if the device memory leaks to the adversary at this point, all the attribute plaintext values will be disclosed. In our solution, only the attributes that are disclosed for the specific presentation are decrypted and can potentially leak. 

\begin{figure}
    \centering
    \includegraphics[width=\linewidth]{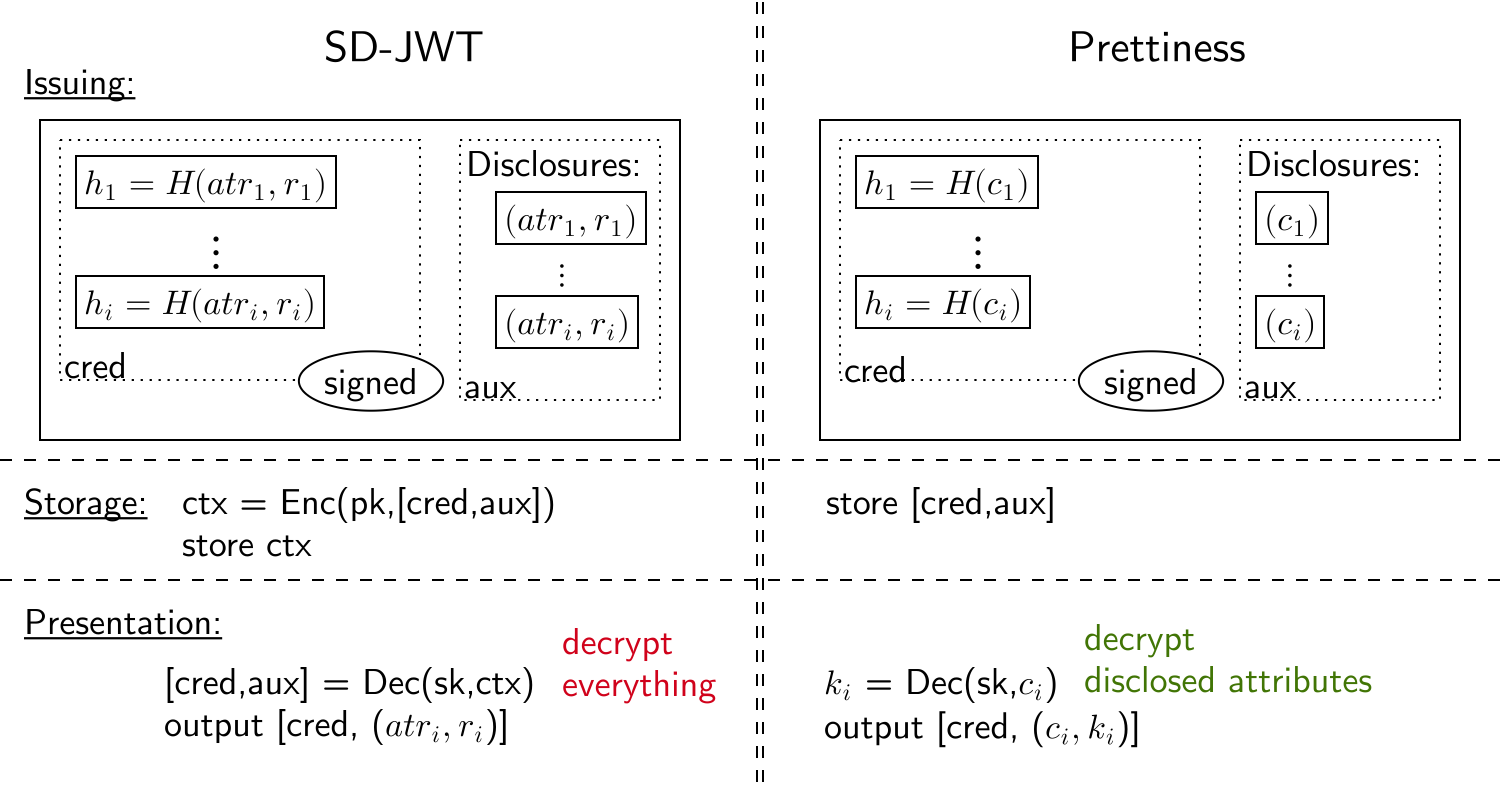}
    \caption{Comparison of Prettiness approach with Selective Disclosures for JWT}
    \label{fig:comparison}
\end{figure}

 
\begin{figure}[htbp]
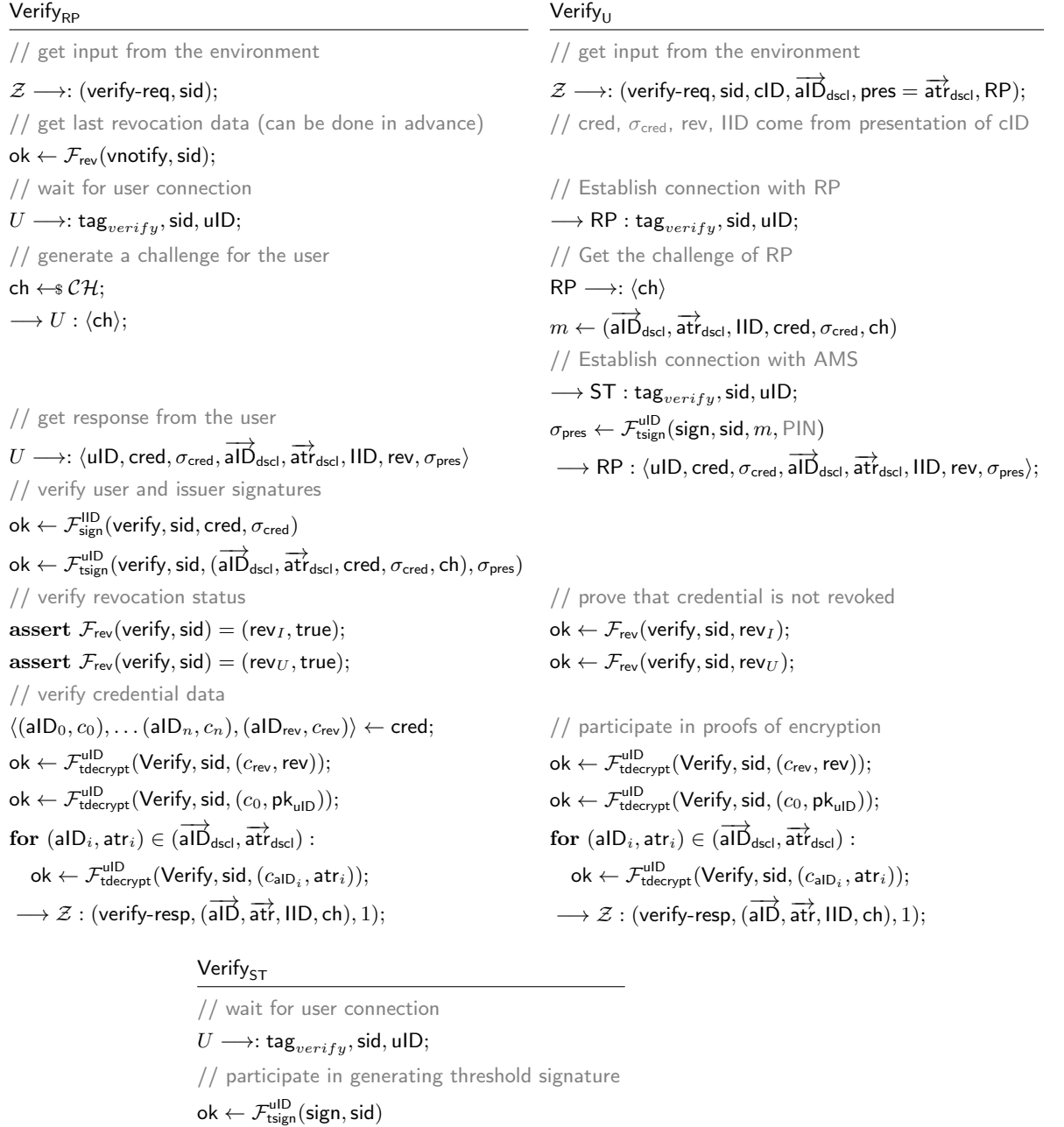

\centering
\begin{pchstack}
\begin{pcvstack}
\pseudocode[head={$\textsf{Verify}_{\mathsf{RP}}$}]{
\Comment{get input from the environment}\\
\alisa{\env \send: (\textsf{verify-req},\sid);}\white{\aidsd}\\
\Comment{get last revocation data (can be done in advance)}\\
\alisanotes{\ok \gets \funrev(\textsf{vnotify},\sid);}\\
\Comment{wait for user connection}\\
\alisa{U \send: \verifyreq, \sid, \uid;}\\
\Comment{generate a challenge for the user}\\
\liza{\ch \sample \mathcal{CH};} \\
\liza{\send U: \langle \ch \rangle;} \\
\\
\\
\Comment{get response from the user}\\
U \send : \langle \alisa{\uid}, \cred, \sigma_{\cred}, \alisa{\aidsd}, \alisa{\atrsd}, \iid,  \alisa{\rev}, \liza{\sigma_{\pres}} \rangle\\
\Comment{verify user and issuer signatures}\\
\ok \gets \funs^{\iid}(\textsf{verify},\sid,\cred,\sigma_{\cred}) \\
\liza{\ok \gets \funsign^{\uid}(\textsf{verify},\sid,\alisa{(\aidsd, \atrsd,\cred,\sigma_\cred,\ch)},\sigma_{\pres})} \\
\Comment{verify revocation status}\\
\pcassert \funrev(\textsf{verify},\sid) = (\alisa{\rev_I},\true); \\
\pcassert \funrev(\textsf{verify},\sid) = (\alisa{\rev_U},\true); \\
\Comment{verify credential data}\\
\langle (\aid_0, c_0), \dots (\aid_n, c_n), (\aid_{\rev}, c_{\rev}) \rangle \gets \cred; \\
\ok \gets \fundec^{\uid}(\textsf{Verify}, \sid, (c_{\rev},\rev));\\
\alisa{\ok \gets \fundec^{\uid}(\textsf{Verify}, \sid, (c_0,\pk_\uid));}\\
\pcfor (\aid_i,\atr_i) \in \alisa{(\aidsd,\atrsd)}: \\
\quad \ok \gets \fundec^{\uid}(\textsf{Verify}, \sid, (c_{\aid_i},\atr_i));\\
\send \env: (\textsf{verify-resp},(\aids, \atrs,\iid,\ch), 1);
}
\end{pcvstack}
\pchspace
\begin{pcvstack}
\pseudocode[head={$\textsf{Verify}_{\mathsf{U}}$}]{
\Comment{get input from the environment}\\
\alisa{\env \send: (\textsf{verify-req},\sid, \alisanotes{\cid}, \aidsd, \alisanotes{\pres=\atrsd}, \mathsf{RP});}\\
\Comment{$\cred$, $\sigma_{\cred}$, $\rev$, $\iid$ come from presentation of $\cid$}\\
\\
\Comment{Establish connection with RP}\\
\alisa{\send \mathsf{RP}: \verifyreq, \sid, \uid;}\\
\Comment{Get the challenge of RP} \\
\liza{\mathsf{RP} \send: \langle \ch \rangle} \\
\alisa{m \gets (\aidsd,\atrsd,\iid,\cred,\sigma_{\cred},\ch)} \\
\Comment{Establish connection with AMS}\\
\alisa{\send \store: \verifyreq, \sid, \uid;}\\
\liza{\sigma_{\pres} \gets \funsign^{\uid}(\textsf{sign},\sid,\alisa{m},\hidden{\pin})} \\
\send \mathsf{RP}: \langle \alisa{\uid}, \cred, \sigma_{\cred}, \alisa{\aidsd}, \alisa{\atrsd}, \iid, \alisa{\rev}, \liza{\sigma_{\pres}} \rangle;\\
\\ 
\\ \\
\Comment{prove that credential is not revoked}\\
\ok \gets \funrev(\textsf{verify},\sid,\alisa{\rev_I});\\
\ok \gets \funrev(\textsf{verify},\sid,\alisa{\rev_U});\\
\\
\Comment{participate in proofs of encryption}\\
\ok \gets \fundec^{\uid}(\textsf{Verify},\sid,(c_{\rev},\rev));\\
\alisa{\ok \gets \fundec^{\uid}(\textsf{Verify},\sid,(c_0,\pk_\uid))};\\
\pcfor (\aid_i,\atr_i) \in \alisa{(\aidsd,\atrsd)}: \\
\quad \ok \gets \fundec^{\uid}(\textsf{Verify},\sid,(c_{\aid_i},\atr_i));\\
\send \env: (\textsf{verify-resp},(\aids, \atrs,\iid,\ch), 1);
}
\end{pcvstack}
\end{pchstack}
\begin{pchstack}
\pseudocode[head={$\textsf{Verify}_{\mathsf{\store}}$}]{
\Comment{wait for user connection}\\
\alisa{U \send: \verifyreq, \sid, \uid;}\\
\Comment{participate in generating threshold signature}\\
\lizanew{\ok \gets \funsign^{\uid}(\mathsf{sign},\sid)} \\
}
\end{pchstack}
\caption{Verification of credential presentation. Here $\rev = (\rev_I,\rev_U)$}
\label{fig:prettiness-verify}
\end{figure}



\subsubsection{Logging.}
We assume that the server logs all request messages along with the time at which these requests have been made. The user can then get an overview of transactions performed to their data.  The list of logs and information that the user can extract from these is given in Table~\ref{tbl:log}.

In case of corrupted AMS, even if the user compares the log against some digest for integrity, the AMS may simply refuse to return the log to the user, so logging is useful assuming an \emph{honest-but-curious} server. Moreover, if the issuer is corrupted, then there are no guarantees that valid revoked $\cid$ has been recorded. \alisa{While accountability of AMS could potentially be enforced by requiring its signature on the digest, and the ciphertext $c_\cid$ of the issuer could contain information about the actual performed revocation (accompanied with a proof-of-encryption), these enhancements would require more details, and are out of scope of this paper.}

\begin{table}[ht!]
\begin{tabular}{l|l|l}
\textbf{Protocol} & \textbf{Logged record} & \textbf{What can be read by user} \\ \hline
Issue & $(\issuereq,id_{\cid},\uid,\aids,\iid)$ & issued $\cid$\\
Get $\cid$s & $(\getcidsreq,\uid)$ & fact of credential backup request \\
Revoke(user) & $(\revokereq,\uid,c_{\cid})$ & revoked $\cid \gets \fundec^{\uid}(\textsf{Decrypt},\sid,c_{\cid})$\\
Revoke(issuer) & $(\revokereq,\iid,c_{\cid})$ & revoker $\iid$ and $\cid \gets \fundec^{\uid}(\textsf{Decrypt},\sid,c_{\cid})$\\
Present & $(\presentreq,\uid,n,c_{\cid})$ & presented $\cid \gets \fundec^{\uid}(\textsf{Decrypt},\sid,c_{\cid})$\\
\end{tabular}
\caption{Logged records and their interpretation.}\label{tbl:log}
\end{table}

\subsubsection{\alisa{Cheating detection.}}

Revocation functionality $\funrevoke$ allows AMS to conceal the fact of revocation from a party $P$ (a user or a verifier). In this case, $P$ and $R$ together can detect cheating and prove it to the third party. However, some event needs to trigger such a cheating detection in practice. This depends on the impact that cheating may impose.

\paragraph{Cheating to a user.} If AMS silences some revocation from the user, it may result in a rejected legitimate verification. The user may consult the issuer to ask whether the issuer has revoked the credential. If it is indeed so, the user and the issuer together may prove the fact of cheating to a third party. If the credential has not been revoked, then the verifier has chosen not to accept the credential, but it is a legitimate decision of the verifier (in the scope of $\ifun$).

\paragraph{Cheating to a verifier.} If AMS silences some revocation from the verifier, it may result in an accepted illegitimate verification. The verifier and the issuer may resolve this problem, but the problem must be detected in first place. In practice, the verifier and the issuer could periodically run verification (with issuer acting as a presenter) on randomly selected tokens held by the issuers whom the verifier accepts, to detect cheating with some non-negligible probability.

\subsubsection{Expiration.}

Expired records need to be periodically removed (or archived). Let $\Delta$ be an overestimated time after which any attribute will "definitely expire". For both the credential and the revocation tables, the server removes an entry as soon as the time $\Delta$ passes after it has been added to the corresponding table. 

\subsection{\alisa{Security of the proposed protocol}}

\begin{theorem}\label{thm:system} Assuming a \emph{covertly} corrupted AMS (Definition~\ref{def:covert:formal}), the protocol set $\system$ of Sec.~\ref{sec:system} UC-realizes $\ifun$ in $\funs^{\iid}, \funsign^{\uid}, \fundec^{\uid}, \funrevoke$- hybrid model.
\end{theorem}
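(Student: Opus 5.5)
We construct, for every real-world adversary $\adv$ attacking $\system$ in the $\funs^{\iid}, \funsign^{\uid}, \fundec^{\uid}, \funrevoke$-hybrid model, a simulator $\simm$ interacting with $\ifun$. $\simm$ runs $\adv$ internally and emulates all four hybrid functionalities, so it sees every message $\adv$ sends to them and chooses their adversarial outputs: keys, signatures, ciphertexts, tokens, and the universe $\tokens$. Since the building blocks are ideal, the real and ideal executions can be made \emph{identically} distributed except on a few bad events. In the setup phase $\simm$ supplies $\present$ and $\ver$ to $\ifun$. We take $\present(\cred,\aidsd)$ to output the projection of $\atrs$ onto $\aidsd$, matching $\pres=\atrsd$ in the real protocol. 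We take $\ver(\cred,\pres)=1$ iff the disclosed values match the recorded attributes of $\cred$.

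The proof then proceeds case by case over the corruption patterns of Table~\ref{tab:leakage-system}: subsets of $\{\textsf{I},\store,\textsf{RP}\}$, combined with the user level $c_{\mathsf{U}}\in\{0,1,2,3\}$. For each routine we show the leakage $\simm$ receives from $\ifun$ suffices to produce the view $\adv$ expects.

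\emph{Issue.} With $\store$ corrupted, $\simm$ receives $\textsf{U},\textsf{I},\aids$ and chooses $\cid$ from the $id_\cid$ it fabricates. The ciphertexts $c_{\atr_i}$ are produced by emulated $\fundec^{\uid}$ without plaintexts, so dummy values suffice. When $\textsf{I}$ is corrupted, $\simm$ extracts $\atrs$ from $\adv$'s $\fundec$ encryption calls and feeds them as $\textsf{issue-atr}$.

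\emph{Revoke and Present.} $\simm$ learns only $\textsf{U}$ (and $|\aidsd|$) unless the extra leakage conditions hold. This matches the real world: $c_\cid$ is an $\fundec$ ciphertext, blind decryption and blind signing hide which ciphertext or message is processed, and $\funrevoke$ hides the token from $\store$ unless $\store$ has seen it, which happens exactly when a corrupted RP received it in a presentation. That is the first leakage condition of \textsf{Revoke}.

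\emph{Verify.} For a corrupted RP, $\simm$ gets $\cid,\pres,\aidsd$ and fabricates $\cred,\sigma_\cred,\rev$ consistent with earlier simulated values. If $\store$ is also corrupted it gets $\sids$, which lets it link earlier revocation sessions to $\rev$.

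\emph{User corruption.} Level $c_{\mathsf{U}}=1$ is mapped to $\mathsf{c}_\mathrm{P}=1$ of $\funsign$ and $\fundec$. At level $2$, $\simm$ uses the $\textsf{full-leakage}$ tables to populate the user's memory. Requests injected by $\adv$ with a fully corrupted user are translated into $\textsf{Silent request}$ calls. The extracted inputs are valid because $\funsign^{\uid}$ authenticated them, and the issuer refuses replayed $\sigma_\req$. Theorem~\ref{thm:three} is consistent with this: at level $1$, $\funsign^{\uid}$ rejects any signing attempt without the $\pin$, so no request reaches $\store$ or $\textsf{I}$ in the real world either.

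The main obstacle is the \emph{correctness} of RP outputs when $\store$ (and possibly $\textsf{U}$ and $\textsf{I}$) is corrupted. By Theorem~\ref{thm:two} the ideal RP rejects any presentation that does not match an unrevoked $\cred\in\mathsf{T}_a$. We must show the real RP rejects in the same cases, except with negligible probability. We argue over the RP's checks in turn:
\begin{itemize}
\item $\funs^{\iid}$ unforgeability with an honest issuer guarantees that $\cred$ was issued, so it appears in $\mathsf{T}_a$.
\item $\fundec^{\uid}$ verification guarantees that the disclosed values, and the binding $c_0=\pk_\uid$, are those encrypted in $\cred$, so $\ver=1$.
\item $\funsign^{\uid}$ with a fresh challenge $\ch$ rules out replay.
\end{itemize}
The remaining gap is a revoked $\rev_I$ or $\rev_U$ reported as unrevoked by a cheating $\store$. $\funrevoke$ allows this only through a deviation that its cheating-detection interface would expose to an honest revoker and verifier. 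By Definition~\ref{def:covert:formal} a covertly corrupted AMS does not deviate in that way, so real and ideal revocation answers coincide. I expect this step, together with the bookkeeping needed to keep simulated tokens and ciphertexts consistent across sessions, to take most of the work. The token-collision probability is bounded by $1/|\tokens|$ per issuance, which is negligible.
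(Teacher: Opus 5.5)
Your outline uses the same simulation paradigm as the paper, but the step you call the main obstacle does not work as you frame it. You want to show that the real RP rejects whenever $\ifun$ does, using unforgeability of $\funs^{\iid}$. That is fine for an honest issuer, and matches the paper's remark there, but your case explicitly includes a corrupted $\textsf{I}$. With $\textsf{I}$ corrupted and $c_{\mathsf{U}}\ge 2$, the adversary can build a credential entirely outside the issuing protocol:
\begin{itemize}
\item encrypt arbitrary values and $\pk_{\uid}$ with $\fundec^{\uid}$, whose encryption is open to any party;
\item pick any tokens not in $\revoked$, since verification in $\funrevoke$ only asserts $x\notin\revoked_V$;
\item sign the result with $\funs^{\iid}$ as a corrupted signer;
\item answer the fresh challenge via $\funsign^{\uid}$.
\end{itemize}
The real RP accepts, while $\ifun$ returns $\bot$ because nothing matching is in $\mathsf{T}_a$. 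The converse direction also fails, and your Theorem~\ref{thm:two} argument does not address it. With $\store$ corrupted, any party may revoke $x$ directly in $\funrevoke$ once some revoker of $x$ is corrupted. Real presentations or verifications of that credential then fail, while $\mathsf{T}_r$ has no record. A reduction cannot repair this. In the paper, $\simm$ runs the RP's checks on its emulated functionalities and then synchronises $\ifun$:
\begin{itemize}
\item adversarially minted credentials are registered through the silent-issue interface under a fresh $\cid$;
\item every token revoked in the emulated $\funrevoke$ is mapped to its $\cid$ via $\tiss$ and pushed into $\mathsf{T}_r$ by silent revoke;
\item otherwise $\ifun$ is forced to fail.
\end{itemize}

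Second, for honest revocations you only argue that the \emph{leakage} matches. But $\simm$ must also keep the \emph{state} of the emulated functionalities consistent, and when $\ifun$ reveals only $\textsf{U}$, it does not know which token is being revoked. That state becomes visible later. Once $c_{\mathsf{U}}=2$, $\textsf{unotify}$ and $\textsf{verify}$ on that token must report it as revoked. A corrupted $\store$ forces $c_{\mathsf{U}}=2$ as soon as $c_{\mathsf{U}}=1$, via the trigger $\mathsf{c}_\mathrm{S}=\mathsf{c}_\mathrm{C}=1$ in $\funsign$. Also, a later verification with corrupted RP and $\store$ leaks $\revsids{x}$. The paper handles this in two steps. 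It simulates the revocation without touching $\revoked$, using a fresh random token, which fails only if that token collides with an issued one. It then inserts the true $\rev_P$ into $\revoked$ retroactively once $\ifun$ releases the link, either through $\sids$ in Verify or through $\textsf{full-leakage}$.

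Your dummy plaintexts need the same equivocation. The table of $\fundec^{\uid}$ must be overwritten with the true $\atr_i$ when they are opened to a corrupted RP or leaked on corruption. Otherwise the RP's $\fundec^{\uid}$ Verify calls fail in the simulation. This is sound only because blind decryption never exposed the dummies. So $\textsf{full-leakage}$ must update the emulated $\funrevoke$ and $\fundec^{\uid}$, not just the user's memory. What you defer to ``bookkeeping'' is exactly where the simulation still has to be specified.
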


The full security proof is presented in \liza{supplementary material}. 
In this section, we present the intuition behind the simulation, considering the most interesting corruptions. 

\alisa{For the proof, we define a simulator $\simm$ mediating communication between the real adversary $\adv$ and the ideal functionality $\ifun$, trying to generate messages for $\adv$ and $\ifun$ in such a way that $\env$ gets convinced that it is communicating with the real protocol $\system$. The simulator plays roles of honest parties in $\system$, generating messages for corrupted parties, which are controlled by $\adv$. In addition, $\simm$ interacts with $\ifun$ to synchronise with its inputs and outputs. This shows that any attack of $\adv$ on $\system$ can be converted (by the simulator) to an analogous attack on $\ifun$. In particular, all leakages of $\system$ to the adversary are computable from the "intensional" leakage of $\ifun$ and the data that $\adv$ has provided for corrupted parties. We describe how $\simm$ behaves in each routine of $\ifun$.}

\underline{Initialisation:} $\simm$ defines the function $\present$ for $\ifun$, used to generate credential presentation. This function takes the attribute values $\atrs$ that correspond to $\cid$ and $\aids$ from the table $\mathsf{T}_a$ of issued attributes, as is done in the real protocol. $\simm$ defines $\ver$ that is used to verify credential presentation by checking that $\atrs$ have indeed been stored in $\mathsf{T}_a$ for given $\cid$ and $\aids$. $\simm$ also initializes all functionalities $\funsign^{\uid}$, $\fundec^{\uid}$ (a separate instance for each user of the system), $\funrevoke$ (one instance for the entire system) \liza{and $\funs^{\iid}$} (a separate instance for each issuer in the system).

\underline{Issuing:} $\simm$ ensures that the credential contents are only added to $\mathsf{T}_a$ if it has indeed been requested by the (honest) user and issued by the (honest) issuer. \alisa{On behalf of an honest issuer, the simulator treats credential request from a malicious user as successful only if it could indeed come up with some valid attribute values $\atrs$ for that user, and if generation of revocation token for that user has succeeded, so a credential will only be sent to the user if the issuer has possibility to revoke it later.} On behalf of an honest user, before accepting credential from a malicious issuer, simulator verifies that (1) it contains attribute from the valid set of the attributes, (2) that issued credential contains revocation token $\rev$ generated by the $\simm$ and (3) that credential contains valid signature. This ensures that successful issuing will be followed by successful presentation and revocation. 
\alisa{Finally, all credential data except the party identities and the attribute identifiers $\aids$ (which are explicitly leaked by $\ifun$) is encrypted, and can be simulated to a corrupted AMS.} 

\alisa{Whenever $\ifun$ adds a credential record $\cid$ to $\mathsf{T}_a$, the simulator adds a corresponding entry (with additional protocol-specific data, like signatures and revocation tokens) to its local table $\tiss$.}


\underline{Get all CID-s:} On behalf of honest AMS, $\simm$ simulates to a corrupted user all the $\cid$-s and credentials that have been issued to the corrupted user so far. \alisa{All such credentials have already been simulated to $\adv$ in the issuing phase, and recorded to $\tiss$.} On behalf of honest user, $\simm$ verifies the data provided by corrupted AMS against the digest representing the data that the AMS should have shown, \alisa{which ensures consistency with the output of $\ifun$.} 

\underline{Revocation:} On behalf of honest AMS, $\simm$ collaborates with the corrupted revoker, letting $\adv$ choose the token to revoke. It is not possible to illegitimately revoke a credential $\cid$ of an honest user without having rights on its revocation, which have previously been given by $\funrevoke$ only to the user and the issuer of $\cid$. On behalf of an honest revoker, unless the user of the revoked credential or the RP is corrupted (in these cases, $\simm$ gets $\cid$ from $\ifun$ and can use the same token $\rev$ that was issued for $\cid$), $\simm$ takes a random revocation token $\rev \gets \tokens$ (which will not coincide with any other token with probability $1/|\tokens|$), and simulates calls of $\funrevoke$ with that token. Adversary will not notice the difference \alisa{until the details of revocation gets leaked, which happens either upon reaching user corruption level $c_{\mathsf{U}} = 2$, or during presentation to a corrupted RP. In both cases, $\ifun$ will provide all relevant revocation data to $\simm$, so that it can add all relevant tokens $\rev$ to $\revoked$ of $\funrevoke$.} 

\alisa{Whenever $\ifun$ adds a credential record $\cid$ to $\mathsf{T}_r$, the simulator has added a corresponding revocation token $\rev$ to its local set $\revoked$ of revoked tokens, unless $\cid$ has not been leaked to $\simm$ yet. In the latter case, the user of $\cid$ is an honest party, and it will anyway not call verification of $\funrevoke$ with a revoked token.}

\underline{Presentation:} On behalf of honest AMS, $\simm$ collaborates with the corrupted user, letting $\adv$ choose the ciphertexts of their choice for decryption. 
On behalf of honest user, $\simm$ simulates presentation for a randomly chosen credential $\cid'$ from $\tiss$, which is consistent with $\mathsf{T}_a$ of $\ifun$. \alisa{If there is no such $\cid'$, presentation trivially fails, as there have been no credentials issued to that user.} At this point, 
$\adv$ only learns $|\aidsd|$, i.e. the number of threshold decryptions. $\fundec^{\uid}$ does not leak which ciphertexts exactly have been decrypted. 
\alisa{The simulator does not learn whether presentation of an honest user has eventually succeeded, but this fact does not need to be simulated to $\adv$.}

\underline{Verification:} If the user is honest and RP is corrupted, $\simm$ gets disclosed presentation data from $\ifun$, which is enough to construct all messages for RP, using additional data from $\tiss$. Obtained $\cid$ may be different from $\cid'$ previously chosen for the presentation phase, but the simulation has not depended on $\cid'$ anyway.

If the user is corrupted with level $c_{\mathsf{U}} \geq 2$ (with $c_{\mathsf{U}} = 1$, corruptions are passive), it may show a revoked or invalid credential to an honest RP. $\simm$ detects it using its internal tables for revocation, signing and decryption for that user. \alisa{If $c_{\mathsf{U}} = 3$, all these tables are valid for that user from the beginning.} Upon $c_{\mathsf{U}}=2$ corruption, $\ifun$ leaks internal records of that user to $\simm$, allowing it to update its tables accordingly. Hence, these tables are filled with data consistent with the internal tables of $\ifun$.

If both RP and $\store$ are corrupted, then the adversary expects to learn the sessions on which the presented credential $\cid$ has been revoked (if such a session exists). In this case, $\ifun$ sends a corresponding index of the revocation table to $\simm$, from which it can refer to the session id of that revocation.


\section{Efficiency}\label{sec:efficiency}

We have estimated how well Prettiness would perform in practice. First of all, we collected benchmarks for implementations of certain protocols UC-realising $\funrevoke$, $\fundec$, $\funsign$, and $\funs$. We run the benchmarks on Intel(R) Core(TM) i5-10210U CPU @ 1.60GHz, 4 cores, 2 threads per core, and 15GB RAM.

\subsection{Benchmarks for digest function.}

\alisa{We assume that the local database digest is computed using SHA256 hash function. We measured its computation times using \texttt{openssl speed sha256}, getting values below $7 \cdot 10^{-9}$ seconds per byte. While this small number is very small, the total time to compute the digest of the entire database grows with the number of user credenitals.}

\subsection{Benchmarks for $\funs$ and $\funsign$.}

 \alisa{We assume that $\funs$ is realised with an ordinary 2048-bit RSA signature. We measured the signature generation and verification times using \texttt{openssl speed rsa}, getting values below $0.001$ seconds. This small number has very little impact on the overall efficiency.}

The efficiency of a protocol UC-realizing $\funsign$ has been taken from~\cite{splitkey}, where it is based on RSA signature. \lizanew{Since protocol presented in~\cite{blind-splitkey} just adds blinding to the protocol presented in~\cite{splitkey},  the benchmarks will be similar.} Signing operation itself takes tens of milliseconds, which we for safety round up to $0.1$ seconds. \alisa{According to~\cite{splitkey}}, the most time overhead comes from the network delay to transmit the signature share from the client to the server, \alisa{which we do not include into the timing benchmark, instead giving a separate benchmark for communication.} 
The signature itself is an ordinary RSA signature, verified in a usual manner.

The communication has been estimated instantiating~\cite{splitkey} with 2048-bit RSA and 128-bit randomness. The communication consists of a message $\langle y,m,r,r' \rangle$ that the client sends to the server, and $\langle s,m \rangle$ that the server sends back. Here $y$ and $s$ are 2048-bit RSA signatures, and $r,r' \in \set{0,1}^\eta$. This gives $2 \cdot (2048 + 128 + |m|_{b}) = 4352 + 2 \cdot |m|_{b}$ bits, or $544 + 2\cdot|m|_B$ bytes, where $|m|_B$ is the message length in bytes.

\subsection{Benchmarks for $\fundec$.}

Protocols for $\fundec$ have been re-implemented in Go, \alisa{which gives better results compared to the Python implementation of~\cite{threshold-decryption}}, and provides a more precise estimation of sizes of different components. The underlying hash function is SHA-256. We got ca 282ms for encryption, 362ms for decryption, and 106B of total decryption communication.

\subsection{Benchmarks for $\funrevoke$.}

Efficiency of a protocol for $\funrevoke$ has been taken from~\cite{funrevoke}. The issuing and verification routines of this protocol do not depend on the total number of revoked tokens. However, the efficiency of notifications (and slightly of the revocations) also depends on the total number of revoked elements. We reproduce some benchmarks In Table~\ref{tbl:efficiency:funrevoke}. We will assume $m=1,000,000$ revoked tokens in this work, but we also reproduce the results for some smaller values of $m$ to demonstrate the efficiency in the beginning, when not so many credentials have been revoked yet.

\begin{table}[t!]
\begin{tabular}{|l|l|l| l|}
\hline
  & \textbf{Local comp. time (s)}	&	\textbf{Comm. (bytes)} & \textbf{Output size (bytes)} \\
\hline
SHA256(per byte) & 0.00000000682 & -- & 32 \\
\hline
$\funs(\mathsf{sign})$ & 0.001 & -- & 256 \\
$\funs(\mathsf{verify})$ & 0.001 & -- & -- \\
\hline
$\funsign(\mathsf{sign})$ & 0.1 & 544 + |\textsf{msg}| & 256 \\
$\funsign(\mathsf{verify})$ & 0.001& -- & -- \\
\hline
$\fundec(\mathsf{encrypt})$ & 0.282 & -- & 2618 + |\textsf{msg}| \\
$\fundec(\mathsf{decrypt})$ & 0.362 & 106 & -- \\
$\fundec(\mathsf{verify})$ & 0.00000131 & 32 & -- \\
\hline
\end{tabular}
\caption{Benchmarks for signing and threshold decryption protocols. Here the \emph{output size} is the signature size for signing routines, and ciphertext size for encryption routine.}\label{tbl:efficiency:bblocks}
\end{table}

\begin{table}[t!]
\begin{minipage}{0.62\textwidth}
\textbf{Local computation times (ms)}\\
\begin{tabular}{|l|l|l|l| l | l | }
\hline
m &		0	&	100	& 10,000 & 1,000,000 \\
\hline
$\funrevoke(\mathsf{issue})$ & 0.00447 & -- & -- & -- \\
$\funrevoke(\mathsf{revoke})$ & 0.108 & 16.5 & 23.8 & 743 \\
$\funrevoke(\mathsf{unotify})$ & 0.081 & 0.137  & 5.57 & 673 \\
$\funrevoke(\mathsf{vnotify})$ & 0.075 & 0.137 & 6.52 & 716 \\
$\funrevoke(\mathsf{verify})$ & $\leq $2300 & -- & -- & -- \\
\hline
\end{tabular}
\end{minipage}
\begin{minipage}{0.35\textwidth}
\textbf{Communication (bytes)}\\
\begin{tabular}{|l|l| l | l| }
\hline
0	&	100	&	10,000		 & 1,000,000 \\
\hline
48 & -- & -- & --  \\
160 & -- & -- &  --  \\
114 & 3,314 &  320,114 & 32,000,114 \\
114 & 3,314 &  320,114 & 32,000,114  \\
162 & -- & -- & --  \\
\hline
\end{tabular}
\end{minipage}
\caption{Benchmarks for different routines of revocation protocol. Notation "--" represent numbers that are the same as for $m=0$.}\label{tbl:efficiency:funrevoke}
\end{table}

\subsection{Estimating efficiency of the full Prettiness protocol}

\begin{table}[ht!]
\begin{tabular}{l|l|l|l|l|l|l|l}
 & \textbf{Issue} & \textbf{Get Creds} & \textbf{Revoke(I)} & \textbf{Revoke(U)} & \textbf{Present} & \textbf{DB update} & \textbf{Verify}\\ \hline
comm (B) & 90,716 & 3,906,474 & 3,355 & 6,688 & 7,590 & 32,000,114 & 122,170\\
time (s) & 8.114 & 0.127 & 1.027 & 1.487 & 5.71 & 0.716 & 4.702\\ \hline
\end{tabular}
\caption{Summarised efficiency of the full system for $N=100$ credentials per user, $n = 10$ attributes per credential, $m = 10^6$ revoked tokens.}\label{tbl:fullbench}
\end{table}

Benchmarks of the building block protocols can be found in Table~\ref{tbl:efficiency:bblocks} and Table~\ref{tbl:efficiency:funrevoke}. The times are measured for local computations only, without the network overhead, which is measured separately. Based on benchmarks of the building blocks, we estimated theoretical efficiency of the full system. In this estimate, we do \emph{not} include the overhead of establishing communication channel (TLS) between parties, as well as the actual sizes of network packets moving inside TLS connection.

Since efficiency depends a lot on the actual number of credentials and their sizes, we considered a particular use case with $N=100$ credentials per user, and $n = 10$ attributes per credential, and $m = 1,000,000$ total revocations. In practice, such an $m$ can be an overestimate if there would be not too many urgent revocations of active credentials (for which expiration time has not passed).

The summary of results is given in Table~\ref{tbl:fullbench}. \alisa{As the user and verifier notification requires a large download of the revocation database, which can be done in advance, we treat is as a separate routine \textbf{DB update} in our estimate, which has more or less the same efficiency for the user and the verifier.} More detailed benchmarks, explaining how exactly the times have been estimated based on the protocol descriptions, can be found in the supplementary material.

Let us discuss the bottlenecks of the full protocol efficiency.

\paragraph{Issuing.} The main overhead of issuing comes from letting the user decrypt the attributes to check their contents. Formally, we need this step to avoid selective failure attacks, where a further presentation of a malformed credential to RP surprisingly fails. 

\paragraph{Get all Credentials.} The communication overhead is caused by downloading the full backup from the server. Although the user could download only \emph{some} of the credentials, this would violate user privacy, giving to the server a hint of \emph{which} credential is going to be presented. We find this communication feasible if the number of credentials that have not been expired yet is not too large, as in the benchmarked $N=100$ case. 

\paragraph{DB Update.} In this case, communication overhead comes from the necessity to download the data for $m$ revoked tokens. \alisa{Assuming that RP updates the status of the revocation database periodically (e.g. once per hour), and it is not urgent, then a download of 32MB of data per update would not be an issue. It is more complicated for the user who may be located in a place with bad network connection. We propose that, as our protocol anyway does not hide from AMS the identity of the user of a revoked credential (to make logging easier), the user only needs to download $m_U$ of the $m$ entries of the revocation database, where $m_U$ is the number of revocations that concern the user $U$, which makes download size much smaller.}

Authenticating the revoker party allows to protect against denial-of-service attacks, where a revoker maliciously attempts to add to the revocation database too many (potentially non-existent) credentials. AMS could limit the number of allowed revocations for same user/issuer. E.g. a user is not allowed to revoke more credentials that have been issued \emph{to} it, and an issuer is not allowed to revoke more credentials that have been issued \emph{by} it. 


\section{Conclusion}\label{sec:conclusion}

This work proposes attribute management system, where the user is in control over their credentials. Credentials are stored 
on the supporting server side. To ensure secure storage of cryptographic material on the user's device, system relies on threshold cryptography (used for user's signatures and attribute decryption functionalities). Credentials can be revoked by both user and the issuer. 
We construct an ideal functionality that captures security guarantees provided by the system and prove that proposed protocol set securely realises this functionality. Finally, we provide an efficiency estimate of our system. Future work includes post-quantum version of the proposed system as well as possible optimisations.

%
%
%
\bibliographystyle{plain}
\bibliography{ref2}

\appendix

\section{Supporting ideal functionalities}\label{app:functionalities}

Figures~\ref{fig:funrevoke:1} and ~\ref{fig:funrevoke:2} define revocation functionality $\funrevoke$. The leakages of $\funrevoke$ for a particular revocation protocol of~\cite{funrevoke} are listed in Table~\ref{tbl:leaks:funrevoke}.

Figures~\ref{fig:fideal-sign-1},\ref{fig:fideal-sign-2},\ref{fig:fideal-sign-3} define blind server-supported signing functionality $\funsign$. Figures~\ref{fig:fideal-dec-1},\ref{fig:fideal-dec-2} define blind server-supported decryption $\fundec$. 


\begin{table}[t!]
\begin{tabular}{l|l|l|l}
\textbf{Routine} & \textbf{Leakage} & \textbf{Value} & \textbf{Comments} \\ \hline  \hline
\textbf{Issue} & $\leak{issue}{\emptyset}(U,R,x)$     & $U,R$ & Do not hide the party identities.  \\  \cline{2-4}
& $\leak{issue}{\set{R}}(U,R,x)$ & $x$ & The revoker learns $x$  \\  \cline{2-4}
& $\leak{issue}{\set{U}}(U,R,x)$ & $x$ & The user learns $x$  \\  \cline{2-4}
\hline
\textbf{Revoke} & $\leak{revoke}{\emptyset}(R,x)$ & $R$ & Do not hide the identity of the revoker $R$.  \\  \cline{3-4}
& & $(R \in \mathcal{R}_x)$ & Leak whether the request is legitimate  \\
& & $ \wedge (x \notin \revoked)$ & and $x$ has not been revoked yet. \\  \cline{2-4}
& \alisanotes{$\leak{revoke}{\set{S,V}}(R,x)$} & $\set{x} \cap \presentedv$ & The server learns $x$ shown to corrupted $V$. \\  \cline{2-4}
& $\leak{revoke}{\user{x}}(R,x)$ & $x$ & The user of $x$ learns $x$  \\  \hline
\hline
\textbf{Unotify} & $\leak{unotify}{\emptyset}(U,x)$ & $U$ & Do not hide the identity of the user $U$.  \\
\cline{3-4}
  & & $U \in \mathcal{U}_x$ & Leak the fact of illegitimate request. \\  \cline{2-4}
  \hline
\hline
\textbf{Vnotify} & $\leak{vnotify}{\emptyset}(V)$ & $V$ & Do not hide the identity of the verifier $V$. \\  \cline{2-4}
\hline
\hline
\textbf{Verify}& $\leak{verify}{\emptyset}(U,V,x)$ 
& $x \notin \revokedu$ & Leak whether the user has previously called \\
& & & a sufficiently fresh notification for $x$.  \\   \cline{2-4}
& $\leak{verify}{\set{V}}(U,V, x)$ & $x$ & The verifier learns $x$.\\ \cline{2-4}
& \alisanotes{$\leak{verify}{\set{S,V}}(U,V,x)$} & $\revsids{x}$ & Server and verifier learn when $x$ was revoked. \\
\hline
\end{tabular}
\caption{Leakages of $\funrevoke$. The set $\presentedv$ is the set of tokens $x$ such that $(\textsf{verify},\sid,x,V)$ has been input to $\funrevoke$.
The set $\revokedu$ is the set of tokens $x$ such that $\funrevoke$ has returned $(\textsf{unotify-resp-u},\sid,\true)$ on the \emph{latest} input $(\textsf{unotify},\sid,x)$ from the user $U$, which happened \emph{after} $\funrevoke$ sent its last $(\textsf{vnotify-resp-v},\sid,\ok)$ to $V$.
The set $\revsids{x}$ is the set of session identifiers $\sid$ such that $(\textsf{revoke},\sid,x)$ has been input to $\funrevoke$ before.
}\label{tbl:leaks:funrevoke}
\end{table}

\begin{figure}
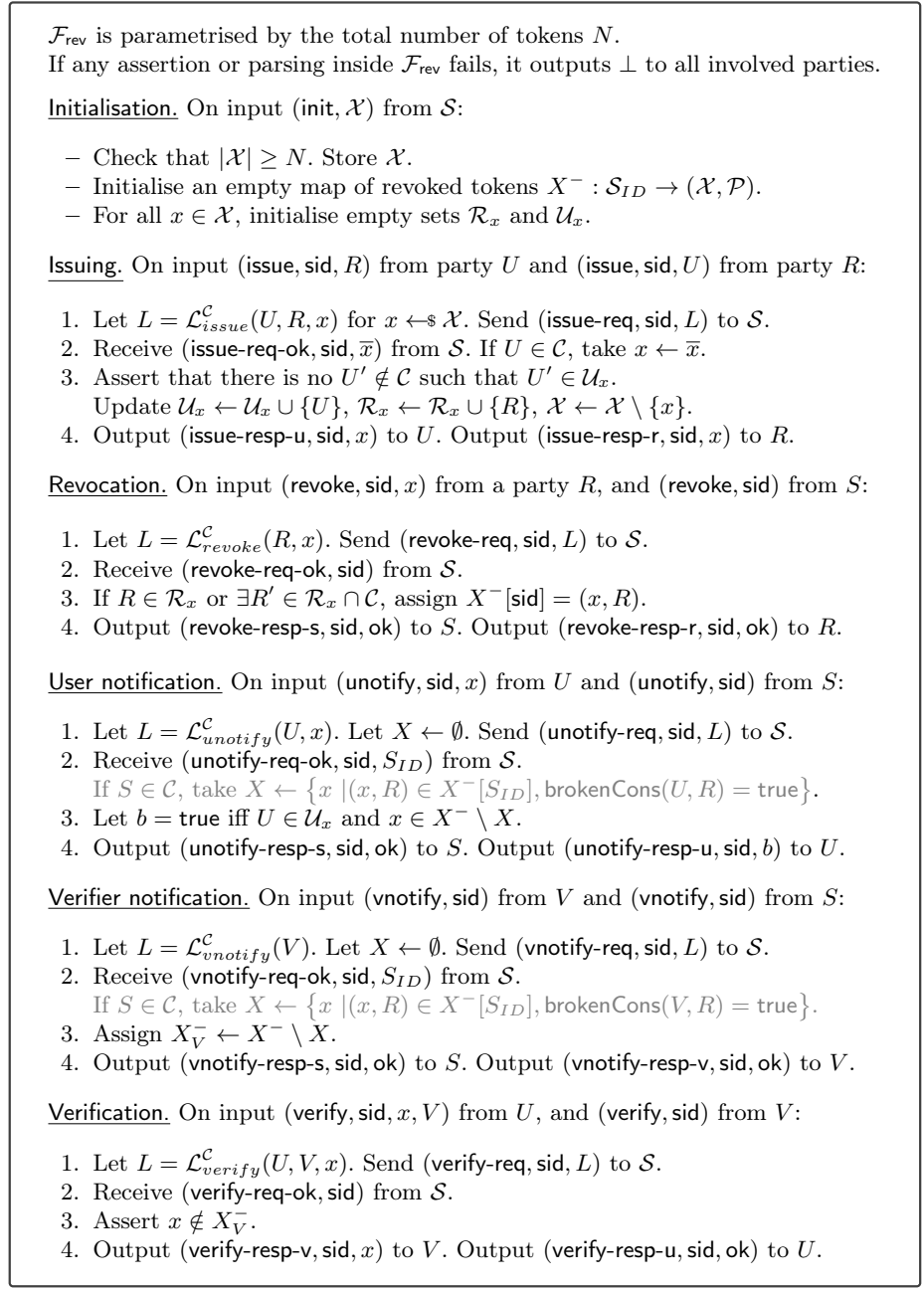

\begin{tcolorbox}[colback=white,arc=0.3mm, boxrule=0.3mm]
$\funrevoke$ is parametrised by the total number of tokens $N$.

If any assertion or parsing inside $\funrevoke$ fails, it outputs $\bot$ to all involved parties.

\vspace{0.2cm}

\underline{\textsf{Initialisation}.} On input $(\textsf{init}, \tokens)$ from $\sdv$:
\begin{itemize}
    \item Check that $|\tokens| \geq N$. Store $\tokens$.
    
    \item Initialise \alisanotes{an empty map of revoked tokens $\revoked : \sidset \to (\tokens,\parties)$}.
        
    \item For all $x \in \tokens$, initialise empty sets $\revoker{x}$ and $\user{x}$.
    
\end{itemize}

\underline{\textsf{Issuing}.} On input $(\textsf{issue}, \sid, \alisanotes{R})$ from party $U$ and $(\textsf{issue}, \sid, \alisanotes{U})$ from party $R$:
\begin{enumerate}

    \item Let $L = \leak{issue}{\corrupted}(U,R,x)$ for $x \sample \tokens$.
    Send $(\textsf{issue-req}, \sid, L)$ to $\sdv$.
    
    \item Receive $(\textsf{issue-req-ok}, \sid, \mal{x})$ from $\sdv$. If $U \in \corrupted$, take $x \gets \mal{x}$. 

    \item Assert that there is no $U' \notin \corrupted$ such that $U' \in \user{x}$.
    
    Update $\user{x} \gets \user{x} \cup \set{U}$, $\revoker{x} \gets \revoker{x} \cup \set{R}$, $\tokens \gets \tokens \setminus \set{x}$. 
    
    \item  Output $(\textsf{issue-resp-u}, \sid, x)$ to $U$. Output $(\textsf{issue-resp-r}, \sid, x)$ to $R$.
\end{enumerate}

\underline{\textsf{Revocation}.} On input $(\textsf{revoke}, \sid, x)$ from a party $R$, and $(\textsf{revoke}, \sid)$ from $S$:

\begin{enumerate}

    \item Let $L = \leak{revoke}{\corrupted}(R,x)$. Send $(\textsf{revoke-req}, \sid, L)$ to $\sdv$.
      
    \item Receive $(\textsf{revoke-req-ok}, \sid)$ from $\sdv$.

    \item If $R \in \revoker{x}$ or $\exists R' \in \revoker{x} \cap \corrupted$, \alisanotes{assign $\revoked[\sid] = (x,R)$.}
     
    \item Output $(\textsf{revoke-resp-s}, \sid, \ok)$ to $S$.  Output $(\textsf{revoke-resp-r}, \sid, \ok)$ to $R$.
\end{enumerate}

\underline{\textsf{User notification.}} On input $(\textsf{unotify}, \sid, x)$ from $U$ and $(\textsf{unotify}, \sid)$ from $S$:
\begin{enumerate}

    \item Let $L = \leak{unotify}{\corrupted}(U,x)$. Let $X \gets \emptyset$. Send $(\textsf{unotify-req}, \sid, L)$ to $\sdv$.
    
    \item Receive $(\textsf{unotify-req-ok}, \sid, \alisanotes{\sids})$ from $\sdv$.
    
    \gray{If $S \in \corrupted$, take $X \gets \set{x\ | (x,R) \in  \revoked[\sids], \brokencons(U,R) = \true}$}.
        
    \item Let $b = \true$ iff $U \in \user{x}$ and $x \in \revoked \setminus X$.    
       
    \item  Output $(\textsf{unotify-resp-s}, \sid, \ok)$ to $S$. Output $(\textsf{unotify-resp-u}, \sid, b)$ to $U$.   
    
\end{enumerate}

\underline{\textsf{Verifier notification.}} On input $(\textsf{vnotify}, \sid)$ from $V$ and $(\textsf{vnotify}, \sid)$ from $S$:
\begin{enumerate}

    \item Let $L = \leak{vnotify}{\corrupted}(V)$. Let $X \gets \emptyset$. Send $(\textsf{vnotify-req}, \sid, L)$ to $\sdv$.

    \item Receive $(\textsf{vnotify-req-ok}, \sid, \sids)$ from $\sdv$.   
    
    \gray{If $S \in \corrupted$, take $X \gets \set{x\ | (x,R) \in  \revoked[\sids], \brokencons(V,R) = \true}$.}

    \item Assign $\revoked_V \gets \revoked \setminus X$.
    
    \item Output $(\textsf{vnotify-resp-s}, \sid, \ok)$ to $S$. Output $(\textsf{vnotify-resp-v}, \sid, \ok)$ to $V$.

\end{enumerate}

\underline{\textsf{Verification.}} On input $(\textsf{verify}, \sid, x, \alisanotes{V})$ from $U$, and $(\textsf{verify}, \sid)$ from $V$:
\begin{enumerate}
    
    \item Let $L = \leak{verify}{\corrupted}(U,V,x)$. Send $(\textsf{verify-req}, \sid, L)$ to $\sdv$.
    
    \item Receive $(\textsf{verify-req-ok}, \sid)$ from $\sdv$.
        
    \item Assert $x \notin \revoked_V$.
    
    \item Output $(\textsf{verify-resp-v}, \sid, x)$ to $V$. Output $(\textsf{verify-resp-u}, \sid, \ok)$ to $U$.
\end{enumerate}
\end{tcolorbox}
\caption{Ideal functionality $\funrevoke$ (Part 1). For each subroutine $f$, the leakage function $\leak{f}{\corrupted}$ outputs the values leaked to the adversary assuming the set $\corrupted$ of corrupted parties, and the particular leakages are listed in Table~\ref{tbl:leaks:funrevoke}. Gray colour denotes commands which are never called assuming a covertly corrupted server.}
\label{fig:funrevoke:1}
\end{figure}

\begin{figure}[ht]
\begin{tcolorbox}[colback=white,arc=0.3mm, boxrule=0.3mm]


\underline{\textsf{Break consistency}.} On input $(\brokencons,(V,R))$ from $\sdv$,
    
if $V \notin \corrupted$,$R \notin \corrupted$, $S \in \corrupted$, set $\brokencons(V,R) = \true$.

\underline{\textsf{Detect cheating}.} Upon receiving $(\textsf{cheat-detect},\sid,\sid_P)$ from $P \in \set{U,V}$,\\ $(\textsf{cheat-detect},\sid,\sid_R)$ from $R$, and $(\textsf{cheat-detect},\sid)$ from $S$ and $J$:

    \begin{enumerate}
    \item If at least one involved party $P$, $R$, $S$, $J$ is corrupted:
        \begin{itemize}
        \item Send $(\textsf{dectect-cheat-req},\sid,\sid_P,\sid_R)$ to $\sdv$.
        \item Wait for $(\textsf{dectect-cheat-ok},\sid,\corrupted')$ from $\sdv$ with $\corrupted' \subseteq \corrupted$.
        \end{itemize}
        Otherwise, set $\corrupted' = \emptyset$.
        
    \item If $\brokencons(V,R) = \true$, output $(\mathsf{verdict},\sid,\verdict(\set{S} \cup \corrupted'))$ to $J$.
    
    Otherwise, output $(\mathsf{verdict},\sid,\verdict(\corrupted'))$ to $J$.
    \end{enumerate}
    

\underline{\textsf{\alisanotes{Unfairness}.}} Whenever a message of the form $(response,\sid,y)$ for $response \neq \mathsf{verdict}$ is to be output to an honest party $P \neq S$, send $(response,\sid,y)$ to $\sdv$. Wait for $(response\textsf{-ok},\sid)$ from $\sdv$ and send $(response,\sid,y)$ to $P$.

\end{tcolorbox}
\caption{Ideal functionality $\funrevoke$ (Part 2, adversary interfaces and accountability).}
\label{fig:funrevoke:2}
\end{figure}


\begin{figure*}[htbp]
\begin{tcolorbox}[colback=white,arc=0.3mm, boxrule=0.3mm,fontupper=\normalsize]

\underline{Key generation} \\
On message $(\textsf{KeyGen}, sid)$ from both client and server:
\begin{enumerate}
    \item Send $(\textsf{KeyGen},sid)$ to $\simm$ and wait for $(\textsf{Key},sid, \pk)$ from $\simm$.
    \item If $\pk \neq \bot$, create an empty table $\mathsf{T}$ for storing encrypted messages, and set decryption counter $\mathit{ctr}_\mathsf{dec}\gets 0$.
    \item\label{fideal:keygen:return} \lizanew{If the party $P \in \{ \textsf{client, server} \}$ is corrupted, wait for $\foutput{i}{y_i}$ from $\adv$ for $i \in \{ \textsf{c,s}\}$. Otherwise, take $y_i = \pk$.}
    \item Output $(\textsf{Key}, sid, \lizanew{y_{\textsf{c}}})$ to client and $(\textsf{Key}, sid, \lizanew{y_{\textsf{s}}})$ to server.
\end{enumerate}
\vspace{0.2cm}

\underline{Encryption} \\
On message $(\textsf{Encrypt},sid,m)$ from some party $P$:
\begin{enumerate}

    \item If $P$ is corrupted, send $(\textsf{Encrypt},sid,\lizanew{P},m)$ to $\simm$. Otherwise, send $(\textsf{Encrypt},sid)$ to $\simm$.
    \item Wait for $(\textsf{Encrypt},sid,c)$ from $\simm$ (the adversary chooses the ciphertext for the given plaintext).
    \item Store the pair $(m,c)$ in table $\mathsf{T}$.
    \item\label{fideal:encrypt:return} If the party $P$ is corrupted, wait for $\foutput{\lizanew{P}}{y}$ from $\adv$. Otherwise, take $y = c$.
    \item Send $(\textsf{Encrypted},sid,y)$ to $P$.

\end{enumerate}
\vspace{0.2cm}

\underline{Decryption procedure} $Decrypt(\dec, c)$ for a function $\dec : \ciphertxtset \to \plaintxtset$ and a ciphertext $c \in \ciphertxtset$:
        \begin{itemize}
        \item If there exists a pair $(m,c)$ in $\mathsf{T}$: (for correctness we need that $c$ would decrypt to $m$)
        \begin{itemize}
            \item If $m$ is not unique (there are two different $m$ for the same $c$), take $m' = \bot$.
            \item If $m$ is unique, take $m' = m$.
        \end{itemize}
        \item Otherwise, compute $m' := \dec(c)$, running it only a polynomial number of steps (the polynomial is a parameter of $\fundec$). Add $(m',c)$ to $\mathsf{T}$.
        \item Return $m'$.
        \end{itemize}
        \end{tcolorbox}
\caption{Ideal encryption functionality $\fundec$ from~\cite{threshold-decryption} (1/2)} \label{fig:fideal-dec-1}
\end{figure*}

\begin{figure*}[tbp]
\begin{tcolorbox}[colback=white,arc=0.3mm, boxrule=0.3mm,fontupper=\normalsize]

\underline{Decryption} (honest client) \\
On message $(\textsf{Decrypt},sid,c)$ from \lizanew{client} and $(\textsf{Decrypt},sid)$ from party \lizanew{server}:
\begin{enumerate}
\item Send $(\textsf{Decrypt-init}, sid)$ to $\simm$.
\item Upon receiving $(\textsf{Decrypt-init}, sid, \ver)$ from $\simm$, where $\ver : \ciphertxtset \to \set{\mathsf{true},\mathsf{false}}$, compute $b \gets \ver(c)$.
    \begin{itemize}
        \item If $b = \textsf{true}$, send $(\textsf{Decrypt-good-c}, sid)$ to $\simm$.
        \item If $b = \textsf{false}$ or $b = \bot$, send $(\textsf{Decrypt-bad-c}, sid)$ to $\simm$. Proceed to the point~\ref{fideal:decrypt-hc:return}) with \lizanew{$y_\textsf{c} = y_\textsf{c} = \bot$}.
    \end{itemize}
\item The final output \lizanew{$y_\textsf{c}$ for client} depends on whether the server is corrupt.

    If the server is \emph{honest}, the decryption can only be delayed:
    \begin{itemize}
        \item Upon receiving $(\textsf{Decrypt-complete}, sid, \dec)$ from $\simm$, take $y_\textsf{c} \gets Decrypt(\dec,c)$, and $y_\textsf{s} \gets \top$.
    \end{itemize}

    If the server is \emph{corrupted}, $\simm$ tells whether decryption succeeded. 
    \begin{itemize}
        \item Upon receiving $(\textsf{Decrypt-complete}, sid, \dec)$ from $\simm$, take $y_\textsf{c} \gets Decrypt(\dec,c)$.
        \item Upon receiving $(\textsf{Decrypt-fail}, sid)$ from $\simm$, take $y_\textsf{s} \gets \bot$. 
    \end{itemize}
    \item\label{fideal:decrypt-hc:return} If the server is corrupted, wait for $\foutput{2}{y'_\textsf{s}}$ from $\adv$ and take $y_\textsf{s} \gets y'_\textsf{s}$.
        
\item Send $(\textsf{Decrypted},sid,y_\textsf{c})$ to client and $(\textsf{Decrypted},sid,y_\textsf{s})$ to server.
\end{enumerate}
\vspace{0.2cm}

\underline{Decryption} (corrupted client) \\
On message $(\textsf{Decrypt},sid,c)$ from \lizanew{client} and  $(\textsf{Decrypt},sid)$ from \lizanew{server}:

\begin{enumerate}
    \item Set $\mathit{ctr}_\mathsf{dec} \gets \mathit{ctr}_\mathsf{dec}+1$ and send $(\textsf{Decrypt-init}, sid, c)$ to $\simm$.
    \item $\simm$ tells whether decryption succeeded:
    \begin{itemize}
        \item Upon receiving $(\textsf{Decrypt-complete}, sid)$ from $\simm$, take $y_\textsf{s} \gets \top$.
        \item Upon receiving $(\textsf{Decrypt-fail}, sid)$ from $\simm$, take $y_\textsf{s} \gets \bot$.
\end{itemize}    
    \item\label{fideal:decrypt-cc:return} Wait for $\foutput{1}{y_\textsf{c}}$ from $\adv$.
\item Send $(\textsf{Decrypted},sid,y_\textsf{c})$ to \lizanew{client} and $(\textsf{Decrypted},sid,y_\textsf{s})$ to \lizanew{server}.
\end{enumerate}

\underline{Silent decryption (corrupted client)}. On message $(\textsf{Decrypt-msg}, sid, \dec, c')$ from $\simm$:
\begin{enumerate}
    \item Compute $m = Decrypt(\dec,c')$.
    \item Set $\mathit{ctr}_\mathsf{dec}\gets\mathit{ctr}_\mathsf{dec} - 1$. If $\mathit{ctr}_\mathsf{dec}<0$, then stop.
    \item Send $(\textsf{Decrypted}, sid, m)$ to $\simm$.
\end{enumerate}

\underline{Verify decryption.} On input $(\textsf{Verify},\sid,(c,m))$ from the client and from the verifier:
\begin{enumerate}
    \item If there exists a pair $(m, c)$ in \alisanotes{$\mathsf{T}$}:
    \begin{itemize}
        \item If $m$ is not unique (there are two different $m$ for the same $c$), output $(\textsf{Verify-fail},\sid)$ to the verifier.
        \item If $m$ is unique, output $(\textsf{Verify-success},\sid)$ to the verifier.
    \end{itemize}
     \item Otherwise, output $(\textsf{Verify-fail},\sid)$ to the verifier.   
\end{enumerate}

\end{tcolorbox}
\caption{Ideal encryption functionality $\fundec$ from~\cite{threshold-decryption} (2/2), \alisa{extended with verification interface.}} \label{fig:fideal-dec-2}
\end{figure*}
 



\begin{figure}[htbp]
\begin{tcolorbox}[colback=white,arc=0.3mm, boxrule=0.3mm,fontupper=\normalsize]
$\underline{\textsf{Initialization}}$: On input $(\textsf{init})$ from $\sdv$, set $\mathsf{c}_\mathrm{S} \gets \mathsf{c}_\mathrm{C} \gets T \gets 0$; $b_\mathrm{lq} \gets b_\mathrm{OK} \gets b_\mathrm{sk} \gets 1$; \liza{$\ctr_s \gets 0$};$\ver\gets \bot$. Send $(\textsf{init})$ to $\sdv$.

$\underline{\textsf{Key Generation}}$: On input $(\textsf{keygen}, \sid, L,T_0, \pin)$ from client and $(\textsf{keygen},\sid,T_0)$ from server, if $\ver$ is already recorded or $\pin \notin \{0,\ldots,L-1\}$, ignore this query. Otherwise:
{
\begin{itemize}
    \item Send $(\textsf{keygen-init})$ to adversary $\mathcal{S}$. 
    \item Upon receiving $(\textsf{key}, \lizanew{\pk})$ from $\mathcal{S}$, where \lizanew{$\pk$} is public key, store $(\lizanew{\pk}, L, \pin, T_0)$ and send $(\textsf{key}, \lizanew{\pk})$ to both client and server.
\end{itemize} 
}
$\underline{\textsf{Corrupt server}}$: On input $(\textsf{corrupt-server})$ from $\mathcal{S}$: set $\mathsf{c}_\mathrm{S} \gets 1$ and send $(\textsf{corrupt-server})$ to server.

$\underline{\textsf{Corrupt client}}$: On input $(\textsf{corrupt-client},\ell)$ from $\sdv$, where $\ell\in\{1,2,3\}$:
{
\begin{itemize}
    \item If $\mathsf{c}_\mathrm{C}>\ell$, ignore this query.
    \item Set $b_\mathrm{lq}\gets\bot$. If \lizanew{$\pk=\bot$}, set $\mathsf{c}_\mathrm{C}\gets 3$. Otherwise, set $\mathsf{c}_\mathrm{C} \gets \ell$. Send $(\textsf{corrupt-client},\ell)$ to client.
\end{itemize} 
}

\underline{\textsf{Derived values:}}
$b_\mathrm{KG}\equiv[\pk\not=\bot]$.

\underline{\textsf{Trigger}}: If $\mathsf{c}_\textsf{S}=\mathsf{c}_\textsf{C}=1$, then set $\mathsf{c}_\textsf{C}\gets 2$.

\underline{\textsf{Trigger}}: If $\mathsf{c}_\mathrm{C}\geq 2$ and $b_\mathrm{KG}$ then send $(\textsf{corrupt-pin},\pin)$ to $\sdv$.

\underline{\textsf{Forwarding}}: If $\mathsf{c}_\mathrm{C}=3$ [resp. $\mathsf{c}_\textsf{S}=1$], then inputs from client [resp. server] are forwarded to $\sdv$; and $\sdv$ determines the outputs that client [resp. server] receives.
\end{tcolorbox}
\caption{Ideal functionality for blind server-supported signing $\funsign$ (1/3)}
\label{fig:fideal-sign-1}
\end{figure}

\begin{figure}[tbp]
\begin{tcolorbox}[colback=white,arc=0.3mm, boxrule=0.3mm,fontupper=\normalsize]
%
\underline{\textsf{Signing by client and server}}: On input $(\textsf{sign},\sid,M,\pin')$ from the client and $(\textsf{sign},\sid)$ from the server, ignore unless $b_\mathrm{KG}\wedge b_\mathrm{OK}$. Otherwise:
{
\begin{itemize}
    \item Call $\textsf{clone-check}(1)$ and $\textsf{process-pin}(\pin')$. If successful, compute $\sigma$ using $\textsf{request-sig}(\sid,M)$.
    \item If $\sigma\not=\bot$, output $(\textsf{sign-success},\sid, M, \sigma)$ to client, and $(\textsf{sign-success},\sid)$ to server. Otherwise, return $(\textsf{sign-fail},\sid)$ to client and server.
\end{itemize} 
}
\underline{\textsf{Signing by client and adversary}}: On input $(\textsf{sign},\sid,M,\pin')$ from the client and $(\textsf{sign-server},\sid)$ from $\sdv$, ignore unless $b_\mathrm{KG}\wedge [\mathsf{c}_\mathrm{S}=1]$. Otherwise:
{
\begin{itemize}
    \item Call $\textsf{process-pin}(\pin')$ and then compute $\sigma$ by querying $\textsf{request-sig}(\sid,M)$.
    \item If $\sigma\not=\bot$, output $(\textsf{sign-success},\sid, M, \sigma)$ to client. Otherwise, return $(\textsf{sign-fail},\sid)$ to both client and $\sdv$.
\end{itemize}  
}
\underline{\textsf{Signing by adversary and server}}: On input $(\textsf{sign-client},\sid,\pin')$ from $\sdv$ and $(\textsf{sign},\sid)$ from the server: ignore unless $b_\mathrm{KG}\wedge b_\mathrm{OK}\wedge [\mathsf{c}_\mathrm{C}>0]$. Otherwise go to step $\mathsf{c}_\mathrm{C}\in\{1,2,3\}$:
{\vspace{0cm}
\begin{enumerate}
\item Call $\textsf{clone-check}(0)$ and $\textsf{process-pin}(\pin')$. If successful, set $\mathsf{c}_\mathrm{C}\gets 2$; go to step 3.
\item Call $\textsf{clone-check}(0)$. If successful, go to step 3.
\item Send $(\textsf{sign-init},\sid)$ to $\sdv$ and upon receiving $(\textsf{sign-success},\sid)$, output $(\textsf{sign-success},\sid)$ to server. \liza{Set $\mathsf{ctr}_{s} \gets \mathsf{ctr}_{s} + 1$.} Otherwise, return $(\textsf{sign-fail})$ to both $\sdv$ and server.
\end{enumerate}
}
\underline{\textsf{Verification}}: On input $(\textsf{verify},\sid,M,\sigma)$ from a party $P$, {send $(\textsf{verify},\sid,M,\sigma)$ to $\sdv$. Upon receiving $(\textsf{verified},\sid,M,\sigma,\phi)$ from $\sdv$ with $\phi \in \{ 1, 0 \}$:
\begin{itemize}
    \item if $(M,\sigma,b')$ is recorded, then set $b = b'$
    \item \liza{if $\phi = 1$, $(M,\sigma,b)$ has not been recorded before, and  $b_\mathrm{sk}=1$, set $\ctr_s \gets \ctr_s - 1$:
    \begin{itemize}
        \item if $\ctr_s \geq 0$, record $(M,\sigma,b =\phi)$.
        \Comment{saving valid signature message pairs that have not been saved during "signing by adversary and server" queries}
        \item if $\ctr_s < 0$, set $b = 0$ and record $(M,\sigma,b)$
        \Comment{(one-more unforgeability) the adversary created $n+1$ valid signatures with $n$ signing queries}
    \end{itemize}}
    \item else, set $b = \phi$ and record $(M,\sigma,b)$.
    \item output $(\textsf{verified},\sid,M,\sigma,b)$.
\end{itemize}
}
\end{tcolorbox}
\caption{Ideal functionality for blind server-supported signing $\funsign$ (2/3)}
\label{fig:fideal-sign-2}
\end{figure}

\begin{figure}[htbp]
\begin{tcolorbox}[colback=white,arc=0.3mm, boxrule=0.3mm,fontupper=\normalsize]
\underline{\textsf{Supporting subroutine}} $\textsf{process-pin}(\pin')$: 
{\begin{itemize}
    \item If $\pin=\pin'$, set $T\gets0$ and give control back to the invoker.
    \item Otherwise, increment $T$. If $T\geq T_0$ then set $b_\mathrm{OK}\gets0$. Return $(\textsf{sign-fail},\sid)$ to the two parties (out of client, server, and $\sdv$) that initiated the signing.
\end{itemize} }

\underline{\textsf{Supporting subroutine}} $\textsf{clone-check}(d)$: 
{\begin{itemize}
    \item If $b_\mathrm{lq}=1-d$, set $b_\mathrm{OK}\gets0$ and return $(\textsf{sign-fail},\sid)$ to the two parties that initiated the signing.
    \item Otherwise, set $b_\mathrm{lq} \gets d$ and return to the invoker.
\end{itemize} }

\underline{\textsf{Supporting subroutine}} $\textsf{request-sig}(\sid,M)$:
\begin{itemize}
    \item {Send $(\textsf{sign-init},\sid)$ to $\mathcal{S}$.}
    \item Upon receiving $(\textsf{signature},\sid,\sigma)$ from $\mathcal{S}$, check whether $(M,\sigma,0)$ is already stored. If true, restart $\textsf{request-sig}(\sid,M)$.
    \item {Create record $(M,\sigma,1)$ and return $\sigma$.}
\end{itemize}
\end{tcolorbox}
\caption{Ideal functionality for blind server-supported signing $\funsign$ (3/3)}
\label{fig:fideal-sign-3}
\end{figure}

\input{fullproof.tex}

\end{document}

%% file: fullproof.tex
\newpage
\section{Detailed benchmarks of Prettiness protocols}

The detailed benchmarks for the full system can be found in Table~\ref{tbl:fullbench:comm:detailed} (communication) and Table~\ref{tbl:fullbench:time:detailed} (time). For each routine, we collected the total communication between all parties, including communication induced by used sub-functionalities. The total time overhead, summed up over all involved parties, is considered for the most time-consuming operations, which are (significantly) slower than the others. For clarity, we report time overheads of the all used building block functionalities even if these are negligible.

In benchmarks, we assume that the token randomness (hidden into the details of protocol implementing $\funrevoke$) is generated by the user in advance, and is stored in encrypted form as a part of credential, to ensure full backup. This ciphertext corresponds to the  value $c_{\rev_r}$ in the benchmarks.

\begin{table}
\begin{tabular}{|l|l|}\hline
\multicolumn{2}{|c|}{\textbf{Issue}}\\ \hline  \hline
$\funsign^{\uid}(\mathsf{sign}\text{ and }\mathsf{verify})$ & 7638\\
$\store \to U: I,\iid$ & 128\\
$\store \to I: U, id_{\cid},\uid, \aids, \sigma_{\req}, c_{\rev_r}$ & 3994\\
$U \to I: \rev_U$ & 32\\
$\funrevoke(\mathsf{issue})$ & 48\\
$I \to \store: \cred, \sigma_\cred$ & 38802\\
$\store \to U: \cred, \sigma_\cred $ & 38802\\
$(n + 2) \times \fundec^{\uid}(\mathsf{Dec})$ & 1272\\
\hline
 Total: & 90716\\ \hline
\end{tabular}
\\
\begin{tabular}{|l|l||l|l|}\hline
\multicolumn{2}{|c|}{\textbf{Get all Creds}} & \multicolumn{2}{|c|}{\textbf{Revocation by Issuer}}\\ \hline  \hline
$\funsign^{\uid}(\mathsf{sign}\text{ and }\mathsf{verify})$ & 674 & $I \to ST: m, \sigma_{\req}$ & 3195\\
$\store \to U: data$ & 3905800 & $\funrevoke(\mathsf{revoke})$ & 160\\
\hline
 Total: & 3906474 & Total: & 3355\\ \hline
\end{tabular}
\begin{tabular}{|l|l||l|l|}\hline
\multicolumn{2}{|c|}{\textbf{Revocation by User}} & \multicolumn{2}{|c|}{\textbf{Present}}\\ \hline  \hline
$\funsign^\uid(\mathsf{sign}\text{ and }\mathsf{verify})$ & 6422 & $\funsign^{\uid}(\mathsf{sign}\text{ and }\mathsf{verify})$ & 6424\\
$\fundec^\uid(\mathsf{Dec})$ & 106 & $(n+1) \times \fundec(\mathsf{Dec})$ & 1166\\
$\funrevoke(\mathsf{revoke})$ & 160 &  & \\
\hline
 Total: & 6688 & Total: & 7590\\ \hline
\end{tabular}
\begin{tabular}{|l|l||l|l|}\hline
\multicolumn{2}{|c|}{\textbf{DB update}} & \multicolumn{2}{|c|}{\textbf{Verify}}\\ \hline  \hline
$\funrevoke(\mathsf{(v/u)notify})$ & 32000114 & $RP \to U: \ch$ & 32\\
 &  & $\funsign^{\uid}(\mathsf{sign})$ & 80900\\
 &  & $U \to RP: data$ & 40530\\
 &  & $2 \times \funrevoke(\mathsf{verify})$ & 324\\
 &  & $(n+2) \times \fundec(\mathsf{Verify})$ & 384\\
\hline
 Total: & 32000114 & Total: & 122170\\ \hline
\end{tabular}
\caption{Communication (in bytes) for the full system for $N=100$ credentials per user, $n = 10$ attributes per credential, $m = 10^6$ revoked tokens.}\label{tbl:fullbench:comm:detailed}
\end{table}

\begin{table}
\begin{tabular}{|l|l|l|}\hline
\multicolumn{3}{|c|}{\textbf{Issue}}\\ \hline  \hline
U & Generate randomness of $\rev$ in advance, encrypt it using $\fundec^{\uid}(\mathsf{Enc})$ & 0.282\\
$U$, $\store$ & $\funsign^{\uid}(\mathsf{sign}\text{ and }\mathsf{verify})$ & 0.1\\
$I$ & $\funsign^{\uid}(\mathsf{verify})$ & 0.001\\
$I$ & $(n + 2) \times \fundec^{\uid}(\mathsf{Enc})$ & 3.384\\
$I$ & $\funs^{\iid}(\mathsf{sign})$ & 0.001\\
$U$, $I$ & $\funrevoke(\mathsf{issue})$ & 0.0\\
$\store$ & $\funs^{\iid}(\mathsf{verify})$ & 0.001\\
$U$ & $\funs^{\iid}(\mathsf{verify})$ & 0.001\\
$U$ & update $h_{\cred}$ & 0.0003\\
$U$, $\store$ & $(n + 2) \times \fundec^{\uid}(\mathsf{Dec})$ & 4.344\\
\hline
 Total: &  & 8.1143\\ \hline
\end{tabular}
\begin{tabular}{|l|l|l||l|l|l|}\hline
\multicolumn{3}{|c|}{\textbf{Get all Creds}} & \multicolumn{3}{|c|}{\textbf{Revocation by Issuer}}\\ \hline  \hline
$U$, $\store$ & $\funsign^{\uid}(\mathsf{sign}\text{ and }\mathsf{verify})$ & 0.1 & $I$ & $\fundec^\uid(\mathsf{Enc})$ & 0.282\\
$U$ & compute $h_{\cred}$ & 0.0266 & $I$ & $\funs^\iid(\mathsf{sign})$ & 0.001\\
 &  &  & $\store$ & $\funs^\iid(\mathsf{verify})$ & 0.001\\
 &  &  & $I$, $\store$ & $\funrevoke(\mathsf{revoke})$ & 0.743\\
\hline
 Total: &  & 0.1266 & Total: &  & 1.027\\ \hline
\end{tabular}
\begin{tabular}{|l|l|l||l|l|l|}\hline
\multicolumn{3}{|c|}{\textbf{Revocation by User}} & \multicolumn{3}{|c|}{\textbf{Present}}\\ \hline  \hline
$U$ & $\fundec^\uid(\mathsf{Enc})$ & 0.282 & $U$ & $\fundec^\uid(\mathsf{Enc})$ & 0.282\\
$U$, $\store$ & $\funsign^\uid(\mathsf{sign}\text{ and }\mathsf{verify})$ & 0.1 & $U$, $\store$ & $\funsign^{\uid}(\mathsf{sign}\text{ and }\mathsf{verify})$ & 0.1\\
$U$, $\store$ & $\fundec^\uid(\mathsf{Dec})$ & 0.362 & $U$, $\store$ & $(n+1) \times \fundec(\mathsf{Dec})$ & 3.982\\
$U$, $\store$ & $\funrevoke(\mathsf{revoke})$ & 0.743 & $U$, $\store$ & $2 \times \funrevoke(\mathsf{unotify})$ & 1.346\\
\hline
 Total: &  & 1.487 & Total: &  & 5.71\\ \hline
\end{tabular}
\begin{tabular}{|l|l|l||l|l|l|}\hline
\multicolumn{3}{|c|}{\textbf{DB update}} & \multicolumn{3}{|c|}{\textbf{Verify}}\\ \hline  \hline
U/RP & $\funrevoke(\mathsf{(v/u)notify})$ & 0.716 & $U$, $\store$ & $\funsign^{\uid}(\mathsf{sign})$ & 0.1\\
 &  &  & RP & $\funsign^{\uid}(\mathsf{verify})$ & 0.001\\
 &  &  & RP & $\funs^{\iid}(\mathsf{verify})$ & 0.001\\
 &  &  & $U$, RP & $2 \times \funrevoke(\mathsf{verify})$ & 4.6\\
 &  &  & $U$, RP & $(n+2) \times \fundec^{\uid}(\mathsf{Verify})$ & 0.0\\
\hline
 Total: &  & 0.716 & Total: &  & 4.702\\ \hline
\end{tabular}
\caption{Time (in seconds) for the full system for $N=100$ credentials per user, $n = 10$ attributes per credential, $m = 10^6$ revoked tokens.}\label{tbl:fullbench:time:detailed}
\end{table}

\section{UC proof for the full Prettiness system}
\label{app:mainproof}


To prove security of the proposed system, we define a simulator $\simm$ mediating communication between the $\adv$ and the ideal functionality $\ifun$, trying to convince $\adv$ that it is communicating with the real protocol $\system$. Description of the simulator is split into different phases where each phase has separate sub-description for the different corruption models. The involved party that is not listed among the "malicious" parties is treated as honest, and is being simulated by $\simm$. We only consider "interesting" corruptions, e.g. we do not describe simulation of a corrupted issuer alone if cheating is only possible in collaboration with a corrupted storage. We do not describe the case where all involved parties are corrupted, as there is nothing to be simulated in the real protocol, and their interactions with the sub-functionalities affect only corrupted parties. We also do not describe the case where all involved parties are honest, as the simulator would only need to simulate the notifications of different steps to $\adv$, and simulations of these are already covered by the cases where that party is honest.

The protocol $\system$ is using functionality $\funrevoke$ which is parametrized by leakage functions $\leak{f}{\corrupted}$ for a subroutine $f$ and the set of corrupted parties $\corrupted$. We write $L = \leak{f}{\corrupted}(x)$ to denote leakage $L$ that will be shown to the adversary, computed based on the internal state and the inputs $x$ of simulated $\funrevoke$. Throughout the proof, we write $\idfun(cmd,\sid,x)$ to denote running the subroutine $cmd$ of $\idfun$ with $\sid$ and input $x$.

For the purpose of this paper, we allow adversary to corrupt user \lizanew{either before key generation, reaching level $c_{\mathsf{U}} = 3$ or after the key generation phase, reaching level $c_{\mathsf{U}} = 1$ (and subsequently $c_{\mathsf{U}} = 2$). In the latter case, we prove that RP is protected from malicious user. } Adversary can increase its level of corruption while making silent requests to the functionality. Processing of each silent request begins with the call to authentication sub-routine and if it results in successful authentication, corruption level can increase.

\subsection{Initialisation}
In this phase, initialisation of all used sub-functionalities is simulated. $\simm$ also initializes $\ifun$ using the ideal adversary's interface. \alisanotes{We omit simulation of initialisation for a corrupted user since, as we mentioned before, for simplicity we will consider further corruption after the key generation. Other than that, initialisation affects only the AMS and the Issuer.}

\underline{Malicious AMS (storage)}:
\begin{enumerate}
    \item On input $(\textsf{keygen},\sid)$ from $\adv$ to $\funsign^{\uid}$, send $(\textsf{keygen-init},\sid)$ to $\adv$ on behalf of $\funsign^{\uid}$. Wait for $(\textsf{key},\sid, \ver)$ from $\adv$. Store $\ver$ as verification function for threshold signatures.
    
    \item On input $(\textsf{keygen},\sid)$ from $\adv$ to $\fundec^{\uid}$, send $(\textsf{keygen-init}, \sid)$ to $\adv$ on behalf of $\fundec^{\uid}$. Wait for $(\textsf{key},\sid,\pk)$ from $\adv$. Store $\pk$ as encryption key for the threshold decryption scheme.
    
    
    \item On input $(\textsf{setup}, \tokens, \sid)$ from $\adv$ to $\funrevoke$, check that $|\tokens| \geq N$ (here $N$ is a parameter of $\funrevoke$). Store $\tokens$ and initialise empty sets of revoked tokens $\revoked \gets \emptyset$ and corrupted parties $\corrupted \gets \emptyset$. Initialise empty 
    sets $\user{x}$ and $\revoker{x}$.

    \item Send $(\textsf{setup},\sid)$ to $\ifun$.
    \item Initialise internal table $\tenc^\uid$ to store all the plaintext-ciphertext pairs, $\tsign^\uid$ to store all the generated threshold signatures, $\tisign^\iid$ to store all the issuer signatures, $\tiss$ to store all the issued credentials. 
    Instantiate set of possible attribute values \alisanotes{$\mathcal{M}_{\aid}$ for an attribute identifier $\aid$.}
    \item \alisanotes{Define $\textsf{Present}(\lizanotes{\cred,\aidsd})$ as}
    \begin{enumerate}
        \item For all $\aid_i \in \aidsd$, find corresponding $\atr_i$ in $\cred$ and add them to the $\atrsd$ array.
        \item Set \alisa{$\pres =(\aidsd,\atrsd,\iid)$, where $\iid$ is extracted from $\cred$,} and return $\pres$.
    \end{enumerate} Define $\textsf{Ver}(\alisanotes{\cred},\pres)$ as
    \begin{enumerate}
        \item \alisa{Parse $\pres = (\aidsd,\atrsd,\iid)$.}
        \item For all $\aid_i \in \aidsd$ and $\atr_i \in \atrsd$, verify that there is exist $\aid'_j$ and $\atr'_j$ in $\cred$ such that $\aid_i = \aid'_j$ and $\atr_i = \atr'_j$. \alisa{Verify that $\cred$ contains $\iid$.} If these checks pass, set $b = 1$. Otherwise, set $b = 0$. Return $b$.
    \end{enumerate}
    \item On message $(\textsf{setup-ok})$ from $\ifun$, send $(\textsf{setup-ok},\sid,\textsf{Present}, \textsf{Ver})$ to $\ifun$.
\end{enumerate}

\underline{Malicious AMS (storage) and Issuer}:
\begin{enumerate}
    \item On input $(\textsf{keygen},\sid)$ from $\adv$ to $\funsign^{\uid}$, send $(\textsf{keygen-init},\sid)$ to $\adv$ on behalf of $\funsign^{\uid}$. Wait for $(\textsf{key},\sid, \ver)$ from $\adv$. Store $\ver$ as verification function for threshold signatures.
    \item On input $(\textsf{keygen},\sid)$ from $\adv$ to $\fundec^{\uid}$, send $(\textsf{keygen-init}, \sid)$ to $\adv$ on behalf of $\fundec^{\uid}$. Wait for $(\textsf{key},\sid, \pk)$ from $\adv$. Store $\pk$ as encryption key for threshold decryption scheme.
    
    
    \item On input $(\textsf{setup}, \tokens, \sid)$ from $\adv$ to $\funrevoke$, check that $|\tokens| \geq N$ (here $N$ is a parameter of $\funrevoke$). Store $\tokens$ and initialise empty sets of revoked tokens $\revoked \gets \emptyset$ and corrupted parties $\corrupted \gets \emptyset$. Initialise empty 
    sets $\user{x}$ and $\revoker{x}$.
    \item Send $(\textsf{setup},\sid)$ to $\ifun$.
    \item Initialise internal table $\tenc^\uid$ to store all the plaintext-ciphertext pairs, $\tsign^\uid$ to store all the generated threshold signatures, $\tisign^\iid$ to store all the issuer signatures, $\tiss$ to store all the issued credentials. 
    \item \alisanotes{Define $\textsf{Present}(\lizanotes{\cred,\aidsd})$ as}
    \begin{enumerate}
        \item For all $\aid_i \in \aidsd$, find corresponding $\atr_i$ in $\cred$ and add them to the $\atrsd$ array.
        \item Set \alisa{$\pres = (\aidsd,\atrsd,\iid)$, where $\iid$ is extracted from $\cred$,} and return $\pres$.
    \end{enumerate} Define $\textsf{Ver}(\alisanotes{\cred},\pres)$ as
    \begin{enumerate}
        \item Parse \alisa{$\pres = (\aidsd,\atrsd,\iid)$}.
        \item For all $\aid_i \in \aidsd$ and $\atr_i \in \atrsd$, verify that there exist $\aid'_j$ and $\atr'_j$ in $\alisanotes{\cred}$ such that $\aid_i = \aid'_j$ and $\atr_i = \atr'_j$. \alisa{Verify that $\cred$ contains $\iid$.} If these checks pass, set $b = 1$. Otherwise, set $b = 0$. Return $b$.
    \end{enumerate}
    \item On message $(\textsf{setup-ok})$ from $\ifun$, send $(\textsf{setup-ok},\sid,\textsf{Present}, \textsf{Ver})$ to $\ifun$.
\end{enumerate}

\lizanew{\underline{Malicious User:} if the adversary corrupts user before key generation, corruption reaches level $c_{\mathsf{U}} = 3$ and we cannot provide any security or privacy guarantees to the user in that case. In the analysis below, we do not assume full corruption of the device and analyse user corruption on level $c_{\mathsf{U}} = 1$ (that can increase to $c_{\mathsf{U}} = 2$ throughout the interaction with system).}

\subsection{\alisanotes{Adversary-controlled subroutine calls}}

In the real protocol, we are using functionalities with which the adversary may interact at any time, without input of any honest parties. Let us describe which cases there are, and how they are simulated. 

\begin{enumerate}
    \item The adversary may \textbf{encrypt} any message using $\fundec^{\uid}$.
    
    On message $(\textsf{Encrypt},\sid,m)$ to $\fundec^{\uid}$ from corrupted party interface, send $(\textsf{Encrypt},\sid,m)$ to $\adv$ and wait for $(\textsf{Encrypt},\sid,c)$. Store the pair $(m,c)$ in table $\tenc$. Send $(\textsf{Encrypted},\sid,c)$ to $\adv$.

    \item The adversary can \textbf{decrypt} a message using $\fundec^{\uid}$ only if both the user and the AMS are corrupted.
    
    On message $(\textsf{Decrypt},\sid,c)$ from corrupted user, and $(\textsf{Decrypt},\sid)$ from AMS, if there exists a pair $(m, c)$ in $\tenc$:
     \begin{itemize}
         \item If $m$ is not unique, set $m' = \bot$,
         \item If $m$ is unique, set $m' = \rev$.
     \end{itemize}
     Output decrypted $m'$ to the user.
    
    \item The adversary may \textbf{verify} any signature using $\funs^{\iid}$.
    
    Upon receiving $(\textsf{verify},\sid,msg,\sigma)$ from adversary, 
    \liza{if $(m, \sigma, 1)$ is recorded then set $f = 1$. Else, if the signer is not corrupted, and no entry $(m, \sigma', 1)$ for any $\sigma'$ is recorded, then set $f = 0$ and record the entry $(m, \sigma, 0)$. Else, if there is an entry $(m, \sigma, f')$ recorded, then set $f= f'$. Else, set $f= \phi$, and record the entry $(m, \sigma', \phi)$. Output $(\textsf{verified},\sid,msg,f)$ to adversary.}
    
    \item Similarly, the adversary may \textbf{verify} any signature using $\funsign^{\uid}$. 
       
    \item If the issuer is corrupted, the adversary may \textbf{sign} any message on their behalf using $\funs^{\iid}$.
    
    Upon receiving $(\textsf{sign},\sid,msg)$ to $\funs^{\iid}$ from corrupted signer interface, set $\sigma = \textsf{s}(msg)$ and verify that $\textsf{v}(\sigma,msg)=1$. If verification succeeds, output $(\textsf{signature},\sid,msg,\sigma)$ to $\adv$ and record entry $(msg,\sigma)$ to $\tisign$. Otherwise, output error message to $\adv$.

    \item \alisa{Similarly, the adversary may \textbf{sign} a message using $\funsign^{\uid}$.}  Differently from $\funs^{\iid}$, the adversary needs to corrupt \emph{both} the user and the AMS.

    \item If the AMS is corrupted, the adversary can make \textbf{requests to $\funrevoke$}, which allow to issue tokens to corrupted parties, get all revoked tokens of corrupted users, and revoke tokens by a corrupted party.  Only the last point may have an impact on an honest party and needs to be synchronised with $\ifun$.
    
    Let $\rev$ be the token that $\adv$ wants to silently revoke. $\simm$ checks whether $\rev$ has already been linked to some $\cid$ in the table  $\tiss$ of all issued credentials.
    \begin{itemize}
        
    \item If there is such $\cid$ (or even several $\cid$-s, which is possible for a corrupted user), then the adversary should have learned $\rev$ from a corrupted user, or a corrupted issuer. $\simm$ uses silent request interface of $\ifun$ to add $\cid$ to $\mathsf{T}_r$ of $\ifun$.
    
    \item \alisa{If there is no such $\cid$, then $\rev$ has not been issued yet (using $\ifun$). In this case, $\simm$ only adds $\rev$ to the set $\revoked$ of simulated $\funrevoke$.} 
    \Comment{As tokens are freshly sampled, this may happen only for a corrupted user, for whom addition to $\mathsf{T}_r$ (revocation of corrupted user's credential by a corrupted party) does not need to be simulated, and $\cid$ of a presented credential may even not exist if it does not concern any honest issuers.}
    \end{itemize}

    

    
    
\end{enumerate}

\subsection{\alisanotes{Malicious User Authentication}}\label{app:uauth}

A malicious user ($c_{\mathsf{U}} = 3$) is able to call routines of $\ifun$ at any time. However, \alisa{if the user has been honest in the beginning, the adversary does not have direct access to the user input interface of $\ifun$, and needs to use the special silent request interface, which succeeds based on the corruption level of the user.} In the real protocol, the adversary needs to get access to the secret key of $\funsign^{\uid}$, which authenticates the user, which means guessing the PIN, leading to corruption $c_{\mathsf{U}} = 2$. 
We define a separate process of the simulator, which deals with user authentication.

\begin{enumerate}


    \item \alisa{Wait for a connection request of the form $(\rtag_{routine},\sid,\uid)$ from $\adv$.}

    \item \alisa{Upon receiving input $(\textsf{sign-phone},\sid,msg,\pin')$ from $\adv$ to $\funsign^{\uid}$}, 
    
    send $(\textsf{invoke}, routine, \sid, \uid)$ to $\ifun$. Receive $(\textsf{auth},\sid)$ from $\ifun$ .
    
    
    \item Simulate threshold signature:
    \begin{enumerate}
        \item If $\pin' = \pin$, set $T \gets 0$, $c_{\mathsf{U}} \gets 2$, send $(\textsf{corrupt-party},\uid,\sid)$ to $\ifun$  and proceed with the next step \Comment{$\adv$ was able to guesses $\pin$ and hence increased its corruption level}. Otherwise, call $\textsf{clone-check}(0)$ and return $(\textsf{sign-fail})$ to $\adv$ and $(\textsf{auth-fail})$ to $\ifun$ \Comment{$\adv$ was not able to guesses $\pin$ and authentication attempt fails.}
        \item Call $\textsf{clone-check}(0)$, if successful proceed with the next step. \Comment{If adversary guesses the $\pin$ and knows clone-detection bitstring, it can successfully create at least one signature}
        \item \lizanew{Send $(\textsf{sign-init}, \sid)$ to $\adv$ through $\funsign^\uid$. Upon  receiving $(\textsf{sign-success}, \sid)$ from $\adv$, set $\ctr_s \gets \ctr_s + 1$. Otherwise, return $(\textsf{sign-fail})$.}
    \end{enumerate}

    \item \lizanew{Verify signature: after receiving $(\sigma_{\req},m)$ from $\adv$, send $(\textsf{verify},\sid,m,\sigma_{\req})$ to $\adv$. Upon receiving $(\textsf{verified},\sid, m,\sigma_{\req},\phi)$ from $\adv$ with $\phi \in \{ 1, 0 \}$:
    \begin{itemize}
        \item if $(m,\sigma_{\req},b')$ is recorded, then set $b = b'$
        \item if $\phi = 1$, $(m,\sigma_{\req},b)$ has not been recorded before, and  $b_\mathrm{sk}=1$, set $\ctr_s \gets \ctr_s - 1$:
        \begin{itemize}
        \item if $\ctr_s \geq 0$, record $(m,\sigma_{\req},b =\phi)$.
        \Comment{saving valid signature message pairs that have not been saved during "signing by adversary and server" queries}
        \item if $\ctr_s < 0$, set $b = 0$ and record $(m,\sigma_{\req},b)$
        \Comment{(one-more unforgeability) the adversary created $n+1$ valid signatures with $n$ signing queries}
    \end{itemize}
    \item else, set $b = \phi$ and record $(m,\sigma_{\req},b)$.
    \end{itemize}}

    \item 
    Send $(\textsf{auth-ok},\sid)$ to $\ifun$.  
    Assign $\req(\alisa{routine}, \alisa{\uid},\sid)=(msg,\sigma)$.

    \item As the result, whenever $\req(\alisa{routine},\alisa{\uid},\sid)$ is defined, $\ifun$ is waiting for a message $(\textsf{input}, \uid, routine,\sid,x)$ from the simulator to proceed with the routine call.
    \end{enumerate}

\subsection{Issuing}
Since there are three parties involved, the simulator needs to simulate behaviour of several parties at once. For clarity, we label each simulation step, which party is being simulated.

We do not consider the cases where AMS is \emph{honest}. This case would only be interesting in the case when we are \emph{protecting} the AMS, as in the other cases corrupting AMS only makes the adversary stronger. However, AMS has no inputs and outputs to $\zenv$, so there would be nothing to additionally simulate for an honest AMS.

\underline{Malicious AMS (storage)}:
\begin{enumerate}
    \item \usersim: On message $(\textsf{issue-req},\sid,\alisa{\uid, \iid, \aids})$ from $\ifun$, \alisa{establish $(\alisa{\issuereq}, \sid,\uid)$ connection with the server}. Generate $id_\cid \gets \pirkeys$, set $msg=(\issuereq,id_\cid,\uid,\aids,\iid)$, and wait for input $(\textsf{sign-server}, \sid)$ from $\adv$. \lizanew{Send $(\textsf{sign-init}, \sid)$ to $\adv$ on behalf of $\funsign^{\uid}$.} Upon receiving \lizanew{$(\textsf{signature}, \sid, \sigma)$} from $\adv$, check whether $(msg, \sigma,0)$ is already stored. In this case, restart signing. \lizanew{Otherwise, store $(msg, \sigma, 1)$ in table $\tsign^\uid$ and send $(\textsf{sign-success},\sid)$ to the $\adv$}. Otherwise, stop execution.

    \item \lizanew{\usersim: Send $msg=(\issuereq,id_\cid,\uid,\aids,\iid)$ and $\sigma$ to $\adv$.}
    

    \item \issuersim: Upon receiving \alisa{$(U',id_\cid',\uid',\aids',\sigma')$} from $\adv$ with $n$ entries in $\aids'$, verify the signature $\sigma'$ using $\funsign^{\uid'}$. This ensures that the message originates from the user $\uid'$.

    \item \issuersim: \alisa{Let $\cid \gets (id'_\cid,\textsf{summary}(\uid',\aids',\iid))$. Verify freshness of $\cid$ (and hence $id'_\cid$). This ensures that the corrupted AMS has not made a repetition  attack, so $\uid = \uid'$, $\aids' = \aids$ and $id'_\cid = id_\cid$.}

    \item \alisa{\usersim: Simulate $U$ generating a fresh token $\rev_U$ for itself using $\funrevoke$:}
    \begin{enumerate}
        \item Sample \alisa{$\rev_U \sample \tokens$}. Set $L = \leak{issue}{\set{\store}}((U,U),\rev_U) = (U,U)$,    and send $(\alisa{\textsf{issue-req}}, \sid, L)$ to $\adv$ on behalf of $\funrev$.
    
        \item On input $\alisa{(\textsf{issue-req-ok}, \sid)}$ from $\adv$, verify that there is no $U' \notin \corrupted$ s.t. $U' \in \user{\rev_U}$ (as $\rev_U$ is freshly sampled, it is unlikely for a sufficiently large $N$). Assign $\user{\rev_U} \gets \user{\rev_U} \cup \set{U}$, $\revoker{\rev_U} \gets \revoker{\rev_U} \cup \set{U}$.

        \item Update $\tokens \gets \tokens \setminus \set{\rev_U}$.
    \end{enumerate}

    \item \alisa{\usersim: Let $\adv$ send $(I',\iid)$ to $U$ on behalf of AMS. Simulate $U$ sending the revocation token $\rev_U$ to $I'$.}

    \item \alisa{\issuersim: Simulate $I$ receiving the revocation token $\rev'_U$ from $U'$.}
    
    \item \alisa{\usersim\ and \issuersim: If $U = U'$ and $I=I'$, jointly generate a fresh token $\rev_I$ using $\funrevoke$:}
    \begin{enumerate}
    \item Sample \alisa{$\rev_I \sample \tokens$}. Set $L = \leak{issue}{\set{\store}}(I,U,\rev_I) = (I,U)$, and send $(\alisa{\textsf{issue-req}}, \sid, L)$ to $\adv$ on behalf of $\funrev$.
    
    \item On input $(\alisa{\textsf{issue-req-ok}}, \sid)$ from $\adv$, verify that there is no $U' \notin \corrupted$  s.t. $U' \in \user{\rev_I}$ (as $\rev_I$ is freshly sampled, it is unlikely for a sufficiently large $N$). Assign $\user{\rev_I} \gets \user{\rev_I} \cup \set{U}$, $\revoker{\rev_I} \gets \revoker{\rev_I} \cup \set{I}$.

    \item Update $\tokens \gets \tokens \setminus \set{\rev_I}$. 
    \item \alisa{Output $(\textsf{issue-resp-u}, \sid, \ok)$ and $(\textsf{issue-resp-i}, \sid, \ok)$ to $\adv$ on behalf of $\funrevoke$. Now $\adv$ may additionally choose that $I$ or $U$ receive $\bot$ from $\funrevoke$. Proceed with simulating each of these parties only if it has not received $\bot$.}
    \end{enumerate}
    If $U' \neq U$ or $I \neq I'$, then the honest party interacts with the adversary impersonating $\uid$ or $\iid$. As the result, $\adv$ learns the issued $\rev_I$. An honest issuer receives a token $\rev'_I$.
    \Comment{We are \emph{not} hiding the token $\rev$ that will be assigned to the credential $\cid$ from a corrupted AMS. Hence, the adversary may know the link between $\cid$ and $\rev$ of issued credentials.}
    
    \item \alisa{\issuersim: Perform attribute retrieval and compose a credential with $\rev=(\rev'_I,\rev'_U)$:}
        \begin{enumerate}  
        \item Send $(\textsf{issue-ok},\sid,\alisanotes{\cid})$ to the $\ifun$ (start interaction with $\mathcal{Z}$).
        \item Wait for $(\textsf{issue-atr},\sid)$ from $\ifun$ ($I$ received attribute values from $\mathcal{Z}$).
        \item Send encryption request $(\textsf{Encrypt}, \sid)$ to $\adv$ through $\fundec^{\uid}$. Wait for response $(\textsf{Encrypt}, \sid, c_{\rev})$ from $\adv$ and store the pair $(\rev,c_{\rev})$ in table $\tenc^\uid$. 
        \item \liza{Set $\atr_0 \gets \pk_{\uid}$} \Comment{with $\pk_{\uid}$ coming from verification $\ver$ defined during the initialisation phase}
        \item Sample $n$ arbitrary attribute values $\atr_i \sample \mathcal{M}_{\aid_i}$. Send \liza{$n + 1$} encryption requests $(\textsf{Encrypt}, \sid)$ to $\adv$ on behalf of $\fundec^{\uid}$. Wait for responses $(\textsf{Encrypt}, \sid, c_i)$ from $\adv$ and store the pairs $(\atr_i,c_i)$ in table $\tenc^\uid$.

    \item Set $\cred = \langle \liza{(\aid_0,c_0)}, (\aid_1,c_1), \dots, (\aid_n,c_n), (\aid_{\rev}, c_{\rev}) \rangle$. 
    
    \item Generate signature on the credential $\cred$ as $\sigma_{\cred} = \textsf{s}(\cred)$ and verify that $\textsf{v}(\sigma_{\cred},\cred) = 1$ \alisanotes{(\textsf{s} and \textsf{v} are parameters of $\funs^{\iid}$)}. Save $(\sigma_{\cred},\cred)$ to table $\tisign^\iid$.
    
    \item Send $(\cred,\sigma_{\cred})$ to corrupt AMS server.
    \end{enumerate}
    

    \item \alisa{\usersim: Verify the received credential:}
        \begin{enumerate}
    \item Upon receiving $(\cred',\sigma'_{\cred})$ from corrupted AMS server, \alisanotes{check that $\cred'$ indeed contains $\aids$.}

    \item Verify issuer's signature $\sigma'_{\cred}$ on $\cred'$ \alisa{using $\funs^{\iid}$}:
    \begin{itemize}
        \item if $\textsf{v}(\sigma'_{\cred},\cred') = 1$ and no entry $(\sigma'_{\cred},\cred')$ is recorded in $\tisign^\iid$, output $(\textsf{issue-atr},\sid,\bot)$ to $\ifun$,
        \item otherwise, output is $\textsf{v}(\sigma'_{\cred},\cred')$.
    \end{itemize}
    
    \item Simulate decryption of $c'_{\rev}$ \alisanotes{(contained in $\cred'$)} by sending $(\textsf{decrypt-init},\sid)$ to $\adv$ on behalf of $\fundec^{\uid}$. 
    
    Upon receiving $(\textsf{decrypt-complete},\sid,\dec)$, if there exists a pair $(m,c'_{\rev})$ in $\tenc^\uid$:
    \begin{itemize}
        \item if $m$ is not unique, set $m' = \bot$,
        \item if $m$ is unique, set $m' = m$. 
    \end{itemize}
    Otherwise, compute $m' := \dec(c'_{\rev})$. Add $(m',c'_{\rev})$ to $\tenc^\uid$.
    
    \item Proceed if $m' = \alisa{(\rev_I,\rev_U)}$. \alisa{This implies $\rev_I' = \rev_I$ and $\rev_U' = \rev_U$.}

    \item \liza{Simulate decryption of $c'_{0}$ (contained in $\cred'$) by sending $(\textsf{decrypt-init},\sid)$ to $\adv$ on behalf of $\fundec^{\uid}$.} 
    
    \liza{Upon receiving $(\textsf{decrypt-complete},\sid,\dec)$, if there exists a pair $(m,c'_{0})$ in $\tenc^\uid$:
    \begin{itemize}
        \item if $m$ is not unique, set $m' = \bot$,
        \item if $m$ is unique, set $m' = m$. 
    \end{itemize}
    Otherwise, compute $m' := \dec(c'_{0})$. Add $(m',c'_{0})$ to $\tenc^\uid$.}
    
    \item \liza{Proceed if $m' = \pk_{\uid}$.}
    
    \item Simulate decryption of $n = |\aids|$ ciphertexts inside $\cred'$ by sending $(\textsf{decrypt-init},\sid)$ to $\adv$ on behalf of $\fundec^{\uid}$. Upon receiving $(\textsf{decrypt-complete},\sid,\dec)$, if there exists a pair $(\atr_i,c_i)$ in $\tenc^\uid$:
    \begin{itemize}
        \item If $\atr_i$ is not unique, set $m' = \bot$,
        \item If $\atr_i$ is unique, set $m' = \atr_i$.
    \end{itemize}
    Otherwise, compute $m' := \dec(c_i)$. Add $(m',c_i)$ to $\tenc^\uid$. 

     Let $\atr_i = m'$. Proceed if $\atr_i \in \mathcal{M}_{\aids_i}$. 
     
    
    
    

    \item \alisa{Compute $h_\cred \gets H(h_\cred,(\cid,\cred,\sigma_\cred))$.}

    \item Send message $(\textsf{atr-ok},\sid)$ to $\ifun$ and save $(\cid, \cred', \sigma'_{\cred}, \rev)$ to $\tiss$. \\
    \Comment{If both $U$ and $I$ finish successfully, then their views on the issued $\cid$, $\aids$, $\atrs$ match. This is what gets to $\tiss$ and the internal table $\mathsf{T}_a$ of $\ifun$. The credential data $\cred'$ and $\sigma'_\cred$ correspond to the user's view, as this is what adversary expect to see during presentation.}
    \end{enumerate}
\end{enumerate}

\underline{Malicious AMS (storage) and Issuer}:
Since the AMS does not provide additional protection for the user, we do not consider separately the case where only the issuer is corrupted, as the simulation would be the essentially same.
\begin{enumerate}
    \item On message $(\textsf{issue-req},\sid,\alisa{\uid, \iid, \aids})$ from $\ifun$, \alisa{establish $(\alisa{\issuereq},\sid,\uid)$ connection with the server}. Generate $id_\cid \gets \pirkeys$, set $msg=(\issuereq,id_\cid,\uid,\aids,\iid)$, and wait for input $(\textsf{sign-server}, \sid)$ from $\adv$. \lizanew{Send $(\textsf{sign-init}, \sid)$ to $\adv$ on behalf of $\funsign^{\uid}$.} Upon receiving \lizanew{$(\textsf{signature}, \sid, \sigma)$} from $\adv$, check whether $(msg, \sigma,0)$ is already stored. In this case, restart signing. \lizanew{Otherwise, store $(msg, \sigma, 1)$ in table $\tsign^\uid$ and send $(\textsf{sign-success},\sid)$ to the $\adv$}. Otherwise, stop execution.

    \item \lizanew{\usersim: Send $msg=(\issuereq,id_\cid,\uid,\aids,\iid)$ and $\sigma$ to $\adv$.}
    
    
    


    \item \alisa{Simulate $U$ generating a fresh token $\rev_U$ for itself using $\funrevoke$:}
    \begin{enumerate}
        \item Sample \alisa{$\rev_U \sample \tokens$}. Set $L = \leak{issue}{\set{\store}}((U,U),\rev_U) = (U,U)$,    and send $(\alisa{\textsf{issue-req}}, \sid, L)$ to $\adv$ on behalf of $\funrev$.
    
        \item On input $\alisa{(\textsf{issue-req-ok}, \sid)}$ from $\adv$, verify that there is no $U' \notin \corrupted$ s.t. $U' \in \user{\rev_U}$ (as $\rev_U$ is freshly sampled, it is unlikely for a sufficiently large $N$). Assign $\user{\rev_U} \gets \user{\rev_U} \cup \set{U}$, $\revoker{\rev_U} \gets \revoker{\rev_U} \cup \set{U}$.

        \item Update $\tokens \gets \tokens \setminus \set{\rev_U}$.
    \end{enumerate}

    \item \alisa{Let $\adv$ send $(I',\iid)$ to $U$ on behalf of AMS. Simulate $U$ sending the token $\rev_U$ to $I'$.}

    \item \alisa{Generate a fresh token $\rev_I$ using $\funrevoke$:}
    \begin{enumerate}
    \item Upon receiving $(\textsf{issue},\sid)$ from $\adv$ as the input of the issuer in $\funrevoke$, sample \alisa{$\rev_I \sample \tokens$}. Set $L = \leak{issue}{\set{I}}(I,U,\rev_I) = (I,U,\rev_I)$, and send $(\alisa{\textsf{issue-req}}, \sid, L)$ to $\adv$ on behalf of $\funrev$.
    
    \item On input $(\alisa{\textsf{issue-req-ok}}, \sid)$ from $\adv$, verify that there is no $U' \notin \corrupted$  s.t. $U' \in \user{\rev_I}$ (as $\rev_I$ is freshly sampled, it is unlikely for a sufficiently large $N$). Assign $\user{\rev_I} \gets \user{\rev_I} \cup \set{U}$, $\revoker{\rev_I} \gets \revoker{\rev_I} \cup \set{I}$.

    \item Update $\tokens \gets \tokens \setminus \set{\rev_I}$. 
    \item \alisa{Output $(\textsf{issue-resp-r}, \sid, \ok)$ to $\adv$ on behalf of $\funrevoke$.}
    \end{enumerate}

    \item \alisa{Generate $\cid \gets (id_\cid,\textsf{summary}(\uid,\aids,\iid))$}.
    
    \item Send $(\textsf{issue-ok},\sid,\alisanotes{\cid})$ to the $\ifun$.
    



    
    \item Upon receiving $(\textsf{input},\iid,\textsf{issue},\sid,\atrs')$ from $\ifun$ (input of a malicious issuer from $\mathcal{Z}$), \alisa{notify $\adv$ that $\iid$ received the input $\atrs'$}.
    
    \item Upon receiving $(\cred, \sigma_{\cred})$ from corrupted AMS, \alisanotes{check that $\cred$ indeed contains $\aids$.}

    \item Verify issuer's signature $\sigma_{\cred}$ on $\cred$ \alisa{using $\funs^{\iid}$ of $\iid$} as $b \gets \textsf{v}(\sigma_{\cred},\cred)$.\\
    Continue if $b=1$.


    \item Simulate decryption of $c_{\rev}$ \alisanotes{(contained in $\cred$)} by sending $(\textsf{decrypt-init},\sid)$ to $\adv$ on behalf of $\fundec^{\uid}$. Upon receiving $(\textsf{decrypt-complete},\sid,\dec)$, if there exists a pair $(m,c_{\rev})$ in $\tenc^\uid$:
    \begin{itemize}
        \item If $m$ is not unique, set $m' = \bot$,
        \item If $m$ is unique, set $m' = m$.
    \end{itemize}
    Otherwise, compute $m' := \dec(c_{\rev})$. Add $(m',c_{\rev})$ to $\tenc^\uid$.
    
        All cases are possible since credential has been generated by the adversary and ciphertexts may not have been queried though the encryption of $\fundec^\uid$.
        
    \item Proceed if $m' = \rev$ \alisa{for $\rev = (\rev_I,\rev_U)$}.

    \item \liza{Simulate decryption of $c_{0}$ (contained in $\cred$) by sending $(\textsf{decrypt-init},\sid)$ to $\adv$ on behalf of $\fundec^\uid$.} 
    
    \liza{Upon receiving $(\textsf{decrypt-complete},\sid,\dec)$, if there exists a pair $(m,c_{0})$ in $\tenc^\uid$:
    \begin{itemize}
        \item if $m$ is not unique, set $m' = \bot$,
        \item if $m$ is unique, set $m' = m$. 
    \end{itemize}
    Otherwise, compute $m' := \dec(c_{0})$. Add $(m',c_{0})$ to $\tenc^\uid$.}
    
    \item \liza{Proceed if $m' = \pk_{\uid}$.}

    \item Simulate decryption of $n=|\aids|$ ciphertexts from $\cred$ by sending $(\textsf{decrypt-init},\sid)$ to $\adv$ on behalf of $\fundec^\uid$.  Upon receiving $(\textsf{decrypt-complete},\sid,\dec)$, if there exists a pair $(\atr_i,c_i)$ in $\tenc^\uid$:
    \begin{itemize}
        \item If $\atr_i$ is not unique, set $m' = \bot$,
        \item If $\atr_i$ is unique, set $m' = \atr_i$.
    \end{itemize}
    Otherwise, compute $m' := \dec(c_i)$. Add $(m',c_i)$ to $\tenc^\uid$.
    
    All cases are possible since credential has been generated by the adversary and ciphertexts may not have been queried though the encryption of $\fundec^\uid$. 

     Let $\atr_i = m'$. Proceed if $\atr_i \in \mathcal{M}_{\aid_i}$.

    \item \alisanotes{Send $(\textsf{input},\iid,\textsf{issue},\sid,\atrs)$ to $\ifun$. This will be the input of malicious issuer in $\ifun$.}

    \item \alisa{Compute $h_\cred \gets H(h_\cred,(\cid,\cred,\sigma_\cred))$.}
    
    \item Wait for $(\textsf{issue-atr},\sid)$ from $\ifun$. \\ Send message $(\textsf{atr-ok},\sid)$ to $\ifun$ and save $(\cid, \cred, \sigma_{\cred}, \rev)$ to $\tiss$.\\ \Comment{Here we add to $\tiss$ the credential as viewed by the honest user.}
    
\end{enumerate}

\underline{Malicious AMS(Storage) and User}:
While AMS does provide additional protection for the issuer by authenticating the user, it still cannot prevent the user from malicious behaviour. Hence, we do not consider separately the case where only the user is corrupted, as the simulation would be essentially the same.
\begin{enumerate}

    \item \alisa{Simulate $I$ receiving $msg$ and $\sigma$ from $\adv$. Extract $\uid$, $\iid$, $\aids$ from $msg$. Verify $\sigma$ and $msg$ using $\funsign^\uid$.}
    
    \item \alisanotes{Send $(\textsf{input},\uid,\textsf{issue},\sid,(\aids, \alisa{\iid}))$ to $\ifun$.}

    \item \alisa{Receive $(\textsf{issue-req},\sid,\uid,\iid,\aids)$ from $\ifun$.}


   \item \alisa{Let $\cid \gets (id_\cid,\textsf{summary}(\uid,\aids,\iid))$. Verify freshness of $\cid$ (and hence $id_\cid$).}

    \item \alisa{Simulate $I$ receiving the revocation token $\rev_U$ from $\adv$.}
       
    \item \alisa{Generate a fresh token $\rev_I$ using $\funrevoke$:}
    \begin{enumerate}
    \item Upon receiving $(\textsf{issue},\sid)$ from $\adv$ as the input of the issuer in $\funrevoke$, sample \alisa{$\rev_I \sample \tokens$}. Set $L = \leak{issue}{\set{U}}(I,U,\rev_I) = (I,U,\rev_I)$, and send $(\alisa{\textsf{issue-req}}, \sid, L)$ to $\adv$ on behalf of $\funrev$.
    
    \item On input $(\alisa{\textsf{issue-req-ok}}, \sid, \mal{\rev_I})$ from $\adv$, take $\rev_I \gets \mal{\rev_I}$. Verify that there is no $U' \notin \corrupted$ s.t. $U' \in \user{\rev_I}$ (as $\rev_I$ is freshly sampled, it is unlikely for a sufficiently large $N$). Assign $\user{\rev_I} \gets \user{\rev_I} \cup \set{U}$, $\revoker{\rev_I} \gets \revoker{\rev_I} \cup \set{I}$.

    \item Update $\tokens \gets \tokens \setminus \set{\rev_I}$. 
    \item \alisa{Output $(\textsf{issue-resp-r}, \sid, \ok)$ to $\adv$ on behalf of $\funrevoke$.}
    \end{enumerate}
    
    \item Send $(\textsf{issue-ok},\sid,\alisanotes{\cid})$ to the $\ifun$ (start interaction with $\mathcal{Z}$).
    
    \item \alisanotes{Wait for $(\textsf{issue-atr},\sid,\atrs)$ from $\ifun$ ($I$ received attribute values from $\mathcal{Z}$).}
        
    
    \item Send encryption request $(\textsf{Encrypt}, \sid)$ to $\adv$ through $\fundec^\uid$. Wait for response $(\textsf{Encrypt}, \sid, c_{\rev})$ from $\adv$ and store the pair $(\rev,c_{\rev})$ in table $\tenc^\uid$. 

    \item \liza{Set $\atr_0 \gets \pk_{\uid}$} \Comment{with $\pk_{\uid}$ coming from verification $\ver$ defined during the initialisation phase}
    
    \item Simulate encryptions by honest issuer, sending \liza{$n=|\aids| + 1$} encryption requests $(\textsf{Encrypt}, \sid)$ to $\adv$ on behalf of $\fundec^\uid$. Wait for response $(\textsf{Encrypt}, \sid, c_i)$ from $\adv$ and store the pair $(\atr_i,c_i)$ in table $\tenc^\uid$. 

    
    \item Set $\cred = \langle \liza{(\aid_0,c_0)},(\aid_1,c_1), \dots, (\aid_n,c_n), (\aid_{\rev}, c_{\rev}) \rangle$. 
    
    \item Generate signature on the credential $\cred$ as $\sigma_{\cred} = \textsf{s}(\cred)$ and verify that $\textsf{v}(\sigma_{\cred},\cred) = 1$ \alisa{using $\funs^\iid$}. Save $(\sigma_{\cred},\cred)$ to table $\tisign^\iid$.
    
    \item Send $(\cred,\sigma_{\cred})$ to the user.    
    
    \item Send message $(\textsf{atr-ok},\sid)$ to $\ifun$ and save $(\cid, \cred, \sigma_{\cred}, \rev)$ to $\tiss$.\\ \Comment{Here we add to $\tiss$ the credential as viewed by the honest issuer.}
    
\end{enumerate}

\subsection{\alisanotes{Get all credential identifiers}}\label{app:fullproof:fetch}

\noindent\underline{Malicious AMS (storage):}


\begin{enumerate}
    \item On message $(\textsf{getcids-req},\sid,\alisa{\uid})$ from $\ifun$, \alisa{establish $(\alisa{\getcidsreq},\sid,\uid)$ connection with the server}. Set $msg=(\getcidsreq,\uid)$ and wait for input $(\textsf{sign-server}, \sid)$ from $\adv$ to $\funsign^{\uid}$. \lizanew{Send $(\textsf{sign-init}, \sid)$ to $\adv$ on behalf of $\funsign^{\uid}$.} Upon receiving \lizanew{$(\textsf{signature}, \sid, \sigma)$} from $\adv$, check whether $(msg, \sigma,0)$ is already stored. In this case, restart signing. \lizanew{Otherwise, store $(msg, \sigma, 1)$ in table $\tsign^\uid$ and send $(\textsf{sign-success},\sid)$ to the $\adv$}. Otherwise, stop execution.

    \item \lizanew{Send $msg=(\getcidsreq,\uid)$ and $\sigma$ to $\adv$.}

    \item \alisa{Wait for input $\cids,\creds$ from $\adv$ on behalf of corrupted server.} 

        \item \alisa{Let $\cids'$ be all the credential identifiers stored in $\tiss$ so far, and $\creds'$ the corresponding credentials. If $\cids = \cids$ and $\creds = \creds'$, send $(\textsf{getcids-ok},\sid)$ to $\ifun$. Note that the digest check by the user ensures that the check passes in the real protocol iff success is reported by $\ifun$ in the simulation.}

    \end{enumerate}

\noindent \underline{Malicious User}
\begin{enumerate}

\item Let $\req(\alisa{\textsf{getcids},\uid},\sid) = (msg,\sigma)$ result from a successful malicious user authentication as in Sec.~\ref{app:uauth}. Verify that $msg = (\getcidsreq, \uid)$.

\item Send $(\textsf{input},\uid,\textsf{getcids},\sid)$ to $\ifun$.

\item Wait for $(\textsf{getcids-req},\sid)$ from $\ifun$. \alisa{Let $\cids$ be all the credential identifiers stored in $\tiss$ so far, and $\creds$ the corresponding credentials. Send $\cids$ and $\creds$ to $\adv$ on behalf of AMS.} 

\item Send $(\textsf{getcids-ok},\sid)$ to $\ifun$. 

\end{enumerate}

\subsection{Revocation by issuer}\label{app:fullproof:revoke_issuer}
To simulate revocation by an honest revoker, we observe that $\ifun$ should leak $\cid$ if (1) it has been presented to some corrupted RP before, or (2) if it belongs to an honest user. Otherwise, 
\alisa{we take a random token $\rev \sample \tokens$, as in this case $\rev$ is not leaked to the adversary. The simulator still needs to simulate a successful revocation, even though the internal state of $\funrevoke$ will not change.}

\alisa{Later, upon full user corruption or presentation to a corrupted RP, it may turn out that a particular credential $\cid$ should have been revoked. 
The simulator will then silently add the corresponding token $\rev$ to the set $\revoked$.} 

\underline{Malicious AMS server (storage):} 
$\ifun$ does not leak $\cid$ to $\simm$ if both the user and the issuer are honest. In this case either $(\cid,\rev)$ has already been presented to some corrupted RP (and the simulator can reuse it), or $\adv$ has not seen $(\cid,\rev)$ pair before and we can run revocation for the randomly chosen revocation token.

\begin{enumerate}

    

    
    \item On message \alisanotes{$(\textsf{revoke-req},\sid,P=\iid,\alisa{\uid})$} from $\ifun$, \alisa{establish $(\alisa{\revokereq},\sid,\iid)$ connection with the server}. \alisa{Simulate encryption using $\fundec^{\uid}$ with a ciphertext $c_\cid$.} Generate a consent signature on behalf of $\iid$ using $\funs^\iid$.

    
    \item Come up with token $\rev$ to revoke as follows:
    \begin{itemize}
    
        \item If the actual $\cid$ has already been presented to some corrupted RP before, or it belongs to a corrupted user, then $\cid$ should have come from $\ifun$. \alisa{The token $\rev$ corresponds to entry from $\tiss$ with $\cid$.}

        \item Otherwise, \alisa{take random $\rev\sample\tokens$.} 
        \Comment{As $\rev$ is freshly sampled, collision with any issued token (in the past or in the future) is unlikely for a sufficiently large total number of tokens $N$}
    \end{itemize}


    \item On input $(\textsf{revoke},\sid)$ from $\adv$ to $\funrevoke$ on behalf of AMS:
    \begin{itemize}
        \item \alisa{if $\rev$ has been presented to a corrupted $RP$ before,} set $L = \leak{revoke}{\{\store, RP\}}(I,\alisa{\rev}) = \alisa{(I,b,\rev)}$;

         \item \alisa{if $\rev$ belongs to a corrupted user,} set $L = \leak{revoke}{\{\store, \user{\rev}\}}(I,\alisa{\rev}) = \alisa{(I,b,\rev)}$;
         
        \item otherwise, set $L = \leak{revoke}{\{\store\}}(I,\alisa{\rev}) = \alisa{(I,b)}$;
    \end{itemize}
    \alisa{where $b = (\rev \notin \revoked) \wedge (I \in \revoker{\rev}) = \true$, as in the real protocol an honest revoker only revokes a legitimately issued token that it has not revoked yet.}\\
    Send $(\alisa{\textsf{revoke-req}}, \sid, L)$ to $\adv$.

    \Comment{At this point, $\adv$ learns $\rev$ only if it corresponds to the true $\cid$ that is being revoked.}
    \item Wait for $(\alisa{\textsf{revoke-req-ok}}, \sid)$ from $\adv$.
    
    \item \alisa{Simulate successful check $I \in \revoker{\rev}$ without modifying the internal state of $\funrevoke$.} 
    
    \item Output $(\alisa{\textsf{revoke-resp-s}}, \sid, \ok)$ to $\adv$ on behalf of $\funrevoke$.

    \item Output 
    $(\textsf{revoke-ok},\sid)$ to $\ifun$.
    
    \item \alisa{If $\cid$ belongs to a corrupted user, there may be several $\cid'$-s that correspond to $\rev$. Look for all records $(\cid',\_,\_,\rev)$ in $\tiss$, and revoke all $\cid'$ on behalf $\uid$ (extracted from $\cid'$) using silent revocation interface.}
    
    \Comment{Since this case is only possible for $c_{\mathsf{U}} \geq 2$, all authentications of $\uid$ by $\ifun$ succeed.}
    
\end{enumerate}

\underline{Malicious Issuer:}
Since the malicious Issuer initiates revocation process, the adversary chooses the corresponding revocation token $\rev$. 
\begin{enumerate}


   \item \alisa{Wait for $\adv$ to establish a connection $(\alisa{\revokereq},\sid,\iid)$ with AMS on behalf of $I$.}

   \item Wait for a message \alisa{$msg = (\revokereq,\iid,c_\cid)$} and signature $\sigma_{\req}$ from $\adv$ on behalf of $I$. On behalf of AMS, verify the signature using $\funs^{\iid}$.
   \Comment{To ensure that this step succeeds only for $\iid \in \corrupted$ even in the case of replay attacks, we assume additional authentication of $\iid$.}

    \item On input $(\textsf{revoke},\sid,\alisa{x})$ from $\adv$ to $\funrevoke$ on behalf of $I$, set $L = \leak{revoke}{\{I\}}(I,\alisa{x}) = (I,b)$, \alisa{where $b = (x \notin \revoked) \wedge (I \in \revoker{x})$ is easy to simulate, as $\simm$ knows whether the corrupted $\iid$ is a revoker of $x$, and whether it has already revoked $x$.} 
    Send $(\alisa{\textsf{revoke-req}}, \sid, L)$ to $\adv$.
    
    \item Wait for $(\alisa{\textsf{revoke-req-ok}}, \sid)$ from $\adv$.

    \item If \alisa{$I \in \revoker{x}$ or $\exists R' \in \revoker{x} \cap \corrupted$}, add $x$ to $\revoked$.
    
    \item Output \alisa{$(\textsf{revoke-resp-r}, \sid, \ok)$} to $\adv$ on behalf of $\funrevoke$.

    \item Let $\cid$ be such that there is an entry $(\cid,\_,\_,(x,\_))$ or $(\cid,\_,\_,(\_,x))$ in $\tiss$. If there is none, set $\cid \gets \bot$.
    
    \item Send $(\textsf{input},\alisa{\iid},\textsf{revoke},\sid,\cid)$ to $\ifun$.

    \item \alisa{If $\cid = \bot$, then $\ifun$ outputs $(\textsf{revoke-resp},\sid,\bot)$ to $\iid$.}
    
    \alisa{Otherwise, a message $(\textsf{revoke-req},\sid,\ldots)$ comes from $\ifun$. Respond with $(\textsf{revoke-ok},\sid)$ to ensure that a record $(\uid,\cid,\iid)$ gets into $\mathsf{T}_r$.}

    \item \alisa{For a corrupted user, there may be several $\cid'$-s that correspond to $x$. Look for all records $(\cid',\_,\_,(x,\_))$ and $(\cid',\_,\_,(\_,x))$ in $\tiss$, and revoke all $\cid' \neq \cid$ on behalf $\uid$ (extracted from $\cid'$) using silent revocation interface.}
    
    \Comment{Since this case is only possible for $c_{\mathsf{U}} \geq 2$, all authentications of $\uid$ by $\ifun$ succeed.}
        
\end{enumerate}


\subsection{Revocation by user}


While the general idea behind token handling is the same, the simulation is more complicated than issuer revocation, \alisa{as it involves threshold decryption. However, it still does not leak decrypted $c_{\rev}$ (and hence $\cid$) to the adversary due to its blinding property.} 
Also, revocation requires a consent signature which needs to be generated using threshold signing $\funsign$.

\underline{Malicious AMS server (storage):}
\begin{enumerate}
    

    \item On message \alisanotes{$(\textsf{revoke-req},\sid,P=\uid)$} from $\ifun$, \alisa{establish $(\alisa{\revokereq},\sid,\uid)$ connection with the server}. \alisa{Simulate encryption using $\fundec^{\uid}$ with a ciphertext $c_\cid$.} 

    \item Set $msg = \alisa{(\revokereq,\uid,c_\cid)}$ and wait for input $(\textsf{sign-server}, \sid)$ from $\adv$. \lizanew{Send $(\textsf{sign-init}, \sid)$ to $\adv$ on behalf of $\funsign^{\uid}$.} Upon receiving \lizanew{$(\textsf{signature}, \sid, \sigma)$} from $\adv$, check whether $(msg, \sigma,0)$ is already stored. In this case, restart signing. \lizanew{Otherwise, store $(msg, \sigma, 1)$ in table $\tsign^\uid$ and send $(\textsf{sign-success},\sid)$ to the $\adv$}. Otherwise, stop execution.

    \item \lizanew{Send $msg = \alisa{(\revokereq,\uid,c_\cid)}$ and $\sigma$ to $\adv$.}

    \item Come up with token $\alisa{\rev}$ to revoke as follows:
    \begin{itemize}
    
        \item If the actual $\cid$ has already been presented to some corrupted RP before, then $\cid$ should have come from $\ifun$. \alisa{The token $\rev$ corresponds to entry from $\tiss$ with $\cid$.}

        \item Otherwise, \alisa{take random $\rev \sample \tokens$.} 
        \Comment{As $\rev$ is freshly sampled, collision with any issued token (in the past or in the future) is unlikely for a sufficiently large total number of tokens $N$}
    \end{itemize}

    \item Simulate decryption of $c_{\rev}$ by sending $(\textsf{decrypt-init},\sid)$ to $\adv$ on behalf of $\fundec^{\uid}$. 
    
    Upon receiving $(\textsf{decrypt-complete},\sid,\dec)$, if there exists a pair $(m,c_{\rev})$ in $\tenc^\uid$:
    \begin{itemize}
        \item if $m$ is not unique, set $m' = \bot$,
        \item if $m$ is unique, set $m' = m$. 
    \end{itemize}
    Otherwise, compute $m' := \dec(c_{\rev})$. Add $(m',c_{\rev})$ to $\tenc^\uid$.\\
    \Comment{As simulator has just computed $c_\rev$ itself, we have $m = m' = \rev$.}


    \item On input $(\textsf{revoke},\sid)$ from $\adv$ to $\funrevoke$ on behalf of AMS:
    \begin{itemize}
        \item \alisa{if $\rev$ has been presented to a corrupted $RP$ before,} set $L = \leak{revoke}{\{\store, RP\}}(U,\alisa{\rev}) = \alisa{(U,b,\rev)}$;
        
        \item otherwise, set $L = \leak{revoke}{\{\store\}}(U,\alisa{\rev}) = \alisa{(U,b)}$;
    \end{itemize}
    \alisa{where $b = (\rev \notin \revoked) \wedge (U \in \revoker{\rev}) = \true$, as an honest revoker only revokes a legitimately issued token that it has not revoked yet.}
    Send $(\alisa{\textsf{revoke-req}}, \sid, L)$ to $\adv$.

    \Comment{At this point, $\adv$ learns $\rev$ only if it corresponds to the true $\cid$ that is being revoked.}
    \item Wait for $(\alisa{\textsf{revoke-req-ok}}, \sid)$ from $\adv$.
    
    \item \alisa{Simulate successful check $U \in \revoker{\rev}$ without modifying the internal state of $\funrevoke$.} 
    
    \item Output $(\alisa{\textsf{revoke-resp-s}}, \sid, \ok)$ to $\adv$ on behalf of $\funrevoke$. 
    
    \item Output 
    $(\textsf{revoke-ok},\sid)$ to $\ifun$.

    \item \alisa{If $\cid$ belongs to a corrupted user, there may be several $\cid'$-s that correspond to $\rev$. Look for all records $(\cid',\_,\_,\rev)$ in $\tiss$, and revoke all $\cid'$ on behalf $\uid$ (extracted from $\cid'$) using silent revocation interface.}
    
    \Comment{Since this case is only possible for $c_{\mathsf{U}} \geq 2$, all authentications of $\uid$ by $\ifun$ succeed.}

\end{enumerate}

\underline{Malicious User}:

\begin{enumerate}

   \item Let $\req(\alisa{\textsf{revoke},\uid},\sid) = (msg,\sigma)$ result from a successful malicious user authentication as in Sec.~\ref{app:uauth}. \alisa{Verify that $msg=(\revokereq,\uid,\ldots)$.}



    


    \item Interact with the corrupted user in the threshold decryption protocol through $\fundec^{\uid}$ as follows:
    \begin{enumerate}    
    \item Wait for input $(\textsf{decrypt},\sid,c)$ from $\adv$ on behalf of corrupted $U$.
    \item Send $(\textsf{decrypt-init},\sid,c)$ to $\adv$.
    \item Upon receiving $(\textsf{decrypt-complete},\sid)$ from $\adv$, set $y_2 \gets \top$. Upon receiving $(\textsf{decrypt-fail},\sid)$ from adversary, set $y_2 \gets \bot$. \alisanotes{Here $y_2$ is the server output (success of fail).}
    \item Wait for $(\textsf{Output},1,y_1)$ from $\adv$ through $\fundec^{\uid}$ and send $(\textsf{Decrypted},\sid,y_1)$ to $\adv$. \alisanotes{Here $y_1$ is the user output chosen by $\adv$.}
    \end{enumerate}   


    
    

    
    \item On input $(\textsf{revoke},\sid,\alisa{x})$ from $\adv$ to $\funrevoke$ on behalf of $U$, set $L = \leak{revoke}{\{U\}}(U,\alisa{x}) = (U,b)$, \alisa{where $b = (x \notin \revoked) \wedge (I \in \revoker{x})$ is easy to simulate, as $\simm$ knows whether the corrupted user is a revoker of $x$, and whether it has already revoked $x$.} 
    Send $(\alisa{\textsf{revoke-req}}, \sid, L)$ to $\adv$.
    
    \item Wait for $(\alisa{\textsf{revoke-req-ok}}, \sid)$ from $\adv$.

    \item If \alisa{$U \in \revoker{x}$ or $\exists R' \in \revoker{x} \cap \corrupted$}, add $x$ to $\revoked$.
    
    \item Output \alisa{$(\textsf{revoke-resp-r}, \sid, \ok)$} to $\adv$ on behalf of $\funrevoke$.

    \item Let $\cid$ be such that there is an entry $(\cid,\_,\_,(x,\_))$ or  $(\cid,\_,\_,(\_,x))$ in $\tiss$. If there is none, set $\cid \gets \bot$.
    
    \item Send $(\textsf{input},\alisa{\uid},\textsf{revoke},\sid,\cid)$ to $\ifun$.

    \item \alisa{If $\cid' = \bot$, then $\ifun$ outputs $(\textsf{revoke-resp},\sid,\bot)$ to $\uid$.}
    
    \alisa{Otherwise, a message $(\textsf{revoke-req},\sid,\ldots)$ comes from $\ifun$. Respond with $(\textsf{revoke-ok},\sid)$ to ensure that a record $(\uid',\cid,\uid)$ gets into $\mathsf{T}_r$.}

    \item \alisa{For a corrupted user, there may be several $\cid'$-s that correspond to $x$. Look for all records $(\cid',\_,\_,(x,\_))$ and $(\cid',\_,\_,(\_,x))$ in $\tiss$, and revoke all $\cid' \neq \cid$ on behalf $\uid$ (extracted from $\cid'$) using silent revocation interface.}
    
    \Comment{Since this case is only possible for $c_{\mathsf{U}} \geq 2$, all authentications of $\uid$ by $\ifun$ succeed.}
    
\end{enumerate}

\subsection{Present}
\underline{Malicious AMS (storage) (and Issuer):}

Here, we are simulating communication that is expected by the adversary through the $\funsign^{\uid}, \fundec^{\uid}, \funrev$ functionalities with a randomly chosen credential, 
as this step does not leak any information about the $\cid$ to $\adv$.
 

\begin{enumerate}

    
    
    \item On message $(\textsf{present-req},\sid,l_{\atr}=|\aidsd|)$ from $\ifun$, \alisa{establish $(\alisa{\presentreq},\sid,\uid)$ connection with the server}. \alisa{Simulate encryption using $\fundec^{\uid}$ with a ciphertext $c_\cid$.} Set \alisa{$msg=(\presentreq,\uid,n,c_\cid)$} and wait for input $(\textsf{sign-server}, \sid)$ from $\adv$. \lizanew{Send $(\textsf{sign-init}, \sid)$ to $\adv$ on behalf of $\funsign$.} Upon receiving \lizanew{$(\textsf{signature}, \sid, \sigma)$} from $\adv$, check whether $(msg, \sigma,0)$ is already stored. In this case, restart signing. \lizanew{Otherwise, store $(msg, \sigma, 1)$ in table $\tsign^\uid$ and send $(\textsf{sign-success},\sid)$ to the $\adv$}. Otherwise, stop execution.

    \item \lizanew{Send $msg=(\presentreq,\uid,n,c_\cid)$ and $\sigma$ to $\adv$.}

    \item Simulate retrieval of credential data.
    \begin{itemize}

    \item  Select any entry $(\cid, \cred, \sigma_{\cred},\rev)$ from $\tiss$, where $\cid$ contains $\uid$.
    
    \Comment{Since presentation request has come from $\ifun$, a suitable credential must exist.}

    \item Parse $\cred = \langle \liza{(\aid_0, c_0)}, (\aid_1, c_1), \dots, (\aid_n,c_n), (\aid_{\rev}, c_{\rev}) \rangle$.


    \end{itemize}

    \item Simulate decryption of $c_{\rev}$ by sending $(\textsf{decrypt-init},\sid)$ to adversary on behalf of $\fundec^{\uid}$. Upon receiving $(\textsf{decrypt-complete},\sid,\dec)$, look for a pair $(\rev,c_{\rev})$ in $\tenc^\uid$ . If $\rev$ is not unique, set $\rev \gets \bot$.
    \Comment{$\rev \neq \bot$ always exists, since $c_{\rev}$ was taken from $\tiss$, where it has been previously computed using $\fundec^{\uid}$}
    
    \item Simulate revocation notification \alisa{for $x \in \set{\rev_I,\rev_U}$ where $(\rev_I,\rev_U) \gets \rev$}.
    \begin{itemize}
    \item Upon receiving $(\textsf{unotify},\sid)$ from the $\adv$ as input to $\funrev$. Set $L = \leak{unotify}{\{S,I\}}(U,\alisa{x}) = (U,\alisa{b'})$, \alisa{where $b' = (U \in \user{x}) = \true$, as the honest user will only make this query for $x$ issued to it.} 
    Send $(\alisa{\textsf{unotify-req}},\sid,L)$ to $\adv$. 
    
    \item Wait for $\alisa{(\textsf{unotify-req-ok},\sid)}$ from $\adv$. 
    
    \item Let $b = \true$ iff $x \notin \revoked$ and $U \in \user{x}$.
    
    \item Send $(\alisa{\textsf{unotify-resp-s}},\sid,\ok)$ to $\adv$ on behalf of $\funrevoke$.
    \end{itemize}

    
    

   

    \item Simulate decryption of $l_{\atr}$ attributes from $\cred$ by sending $(\textsf{decrypt-init},\sid)$ to adversary on behalf of $\fundec^{\uid}$. Upon receiving $(\textsf{decrypt-complete},\sid,\dec)$, look for a pair $(\atr_i,c_i)$ in $\tenc^\uid$ If $\atr_i$ is not unique, set $\atr_i = \bot$.
    
    \Comment{Here $\atr_i \neq \bot$ always exists since $\cred$ has been taken from $\tiss$, where it has been previously computed using $\fundec^{\uid}$.}
    
    \Comment{Since our decryption protocol is blind, adversary will not notice that we are decrypting some random attribute, but we do $l_{\atr}$ decryption queries as adversary expects to see this amount of queries.}
    
        

    \item Send $(\textsf{present-req},\sid,\ok)$ to $\ifun$.
\end{enumerate}

\underline{Malicious User}: 
In this case, we are simulating presentation of attributes requested by the malicious user. Malicious user is able to proceed with the presentation request only if the user is fully corrupted ($c_{\mathsf{U}} = 2$ \alisa{or $c_{\mathsf{U}} = 3$}).
\begin{enumerate}

    \item Let $\req(\alisa{\textsf{present},\uid}, \sid) = (msg,\sigma)$ result from a successful malicious user authentication as in Sec.~\ref{app:uauth}. Verify that $msg = (\presentreq,\uid,n,\ldots)$.
    
    \item Let $n$ be the reported length of $\aidsd$ which is a part of $msg$ that the server received. Interact with the corrupted user in \liza{$n+2$} decryptions using the threshold decryption protocol through $\fundec^{\uid}$ as follows:
    \begin{enumerate}    
    \item Wait for input $(\textsf{decrypt},\sid,c_i)$ from $\adv$ on behalf of corrupted $U$.
    \item Send $(\textsf{decrypt-init},\sid,c_i)$ to $\adv$.
    \item Upon receiving $(\textsf{decrypt-complete},\sid)$ from $\adv$, set $y_2 \gets \top$. Upon receiving $(\textsf{decrypt-fail},\sid)$ from adversary, set $y_2 \gets \bot$. \alisanotes{Here $y_2$ is the server output (success of fail).}
    \item Wait for $(\textsf{Output},1,y_1)$ from $\adv$ through $\fundec^{\uid}$ and send $(\textsf{Decrypted},\sid,y_1)$ to $\adv$. \alisanotes{Here $y_1$ is the user output chosen by $\adv$.}
    \end{enumerate}

    \item Interact with the corrupted user in two notifications as follows:
    \begin{itemize}
    \item Upon receiving $(\textsf{unotify},\sid,\alisa{x})$ from $\adv$ to $\funrevoke$ on behalf of $U$, set $L = \leak{unotify}{\set{U}}(U,\alisa{x}) = (U,\alisa{b'})$ \alisa{where $b' = \true$ iff $U \in \user{x}$}. Send $\alisa{(\textsf{unotify-req},\sid,L)}$ to the $\adv$. 
    
    \item Wait for \alisa{$(\textsf{unotify-req-ok},\sid)$} from $\adv$.
    
    \item Let $b = \true$ iff $x \notin \revoked$ and $U \in \user{x}$.
    
    \item Output $\alisa{(\textsf{unotify-resp-u}, \sid, b)}$ to $\adv$ on behalf of $\funrevoke$.
    \end{itemize}

    \item \alisanotes{Let the adversary choose the output. 
    Based on adversary data, we in general cannot come up with the "actual" inputs $\cid$ and $\aidsd$ for a valid presentation. 
    Hence, send $(\textsf{inputs},\uid,\textsf{present},\sid,\bot)$ to $\ifun$, causing it to fail, and choose the output $\pres$ for corrupted $U$. The outcome of $\ifun$ does not matter, as it has no impact neither on the inner state of $\ifun$ nor $\mathcal{Z}$.}

\end{enumerate}

\subsection{Verification}
Revocation functionality $\funrevoke$ assumes that the verifier party $V$ has called it at least once using the subroutine $\textsf{vnotify}$ of $\funrevoke$. It needs to remember the set $\revoked_V$ of the tokens revoked at that point. Similarly to $\funrevoke$, the simulator defines $\revoked_V$ as the current state of $\revoked$ that it maintains locally.

\underline{Malicious AMS:}
\lizanew{Since the RP is honest, the verification is fully simulated by the simulator. Therefore, we can simulate generating signature for a random challenge and random credential. Due to blindness property of the signature scheme, AMS server will not distinguish the executions.}
\begin{enumerate}

    \item \lizanew{Upon receiving $(\textsf{verify-req},\sid)$ from $\ifun$, \alisa{sample $\ch \gets \mathcal{CH}$ and} send $(\textsf{verify-req-ok},\sid,\ch)$ to $\ifun$  to confirm proceeding.}

    \item \lizanew{Take any entry from $\tiss$ with $\cid$ \alisa{for which a successful presentation has been constructed before}. Take corresponding $\cred$, $\sigma_{\cred}$, $\rev$, $\alisa{\iid}$, \alisa{and construct $\pres=(\aids,\atrs,\iid)$, which $\adv$ will not see anyway (including the number of disclosed attributes).}}
    


    \Comment{If we have reached this point, such a credential does exist.}


   \item \alisa{Establish $(\alisa{\verifyreq},\sid,\uid)$ connection with the server on behalf of user}.
   \item \lizanew{Set $\alisa{msg=(\pres,\ch',\cred,\sigma_\cred)}$ and wait for input $(\textsf{sign-server}, \sid)$ from adversary. Send $(\textsf{sign-init}, \sid)$ to adversary on behalf of $\funsign$. Upon receiving $(\textsf{signature}, \sid, \sigma_{\pres})$ from adversary, check whether $(msg, \sigma_{\pres}, 0)$ is already stored. In this case, restart signing. Otherwise, store $(msg, \sigma_{\pres}, 1)$ in table $\tsign^\uid$.}
   
    \item Simulate sending \alisa{$\langle \uid, \cred, \sigma_{{\cred}}, \pres, \rev_I, \rev_U, \sigma_{\pres} \rangle$} from the user to RP.

    \item \lizanew{Simulate verifying presentation by RP}.

    \item Send $(\textsf{verify-req-ok}, \sid)$ to $\ifun$.
\end{enumerate}

\underline{Malicious RP (and AMS):} 

\begin{enumerate}
   

    \item On message $(\textsf{verify-req}, \sid, \alisa{\uid}, \cid,\pres,\aidsd)$ from $\ifun$, parse \alisa{$(\aidsd,\atrsd,\iid) \gets \pres$}. Find entry from $\tiss$ with $\cid$ and take corresponding $\cred$, $\sigma_{\cred}$, $\rev$.

    \item \alisa{Establish a connection $(\alisa{\verifyreq}, \sid,\uid)$ with RP.}

    \item \liza{Receive $\ch$ that $\adv$ sends on behalf of RP}.

    \item For all $\atr_i$ in $\atrsd$, update the entry $(\atr_i,c_i)$ in $\tenc^\uid$, where $c_i$ are taken from $\cred$. \Comment{This is needed since $c_i$ have been encryptions of potentially garbage values $\atr'_i$ during the issuing phase. It is safe to update $\atr'_i$ that have not been used in any simulations yet, and if $\atr'_i$ has already been simulated to $\adv$, then $\atr_i = \atr'_i$, as the only place where $\atr_i$ are opened are the verification routine (here) or the adaptive user corruption (described in App.~\ref{app:fullproof:corrupt}), where $\atr_i$ are taken from $\mathsf{T}_a$ of $\ifun$.}


   \item \alisa{Establish $(\alisa{\verifyreq},\sid,\uid)$ connection with the server on behalf of user}.
   
   \item \lizanew{Set \alisa{$msg=(\pres,\ch,\cred,\sigma_\cred)$} and construct presentation signature:}
   \begin{itemize}
       \item if storage is corrupted, wait for input $(\textsf{sign-server}, \sid)$ from adversary and proceed. 
       \item if storage is not corrupted, proceed.
   \end{itemize}

   Send $(\textsf{sign-init}, \sid)$ to adversary on behalf of $\funsign$. Upon receiving $(\textsf{signature}, \sid, \sigma_{\pres})$ from adversary, check whether $(msg, \sigma_{\pres}, 0)$ is already stored. In this case, restart signing. Otherwise, store $(msg, \sigma_{\pres}, 1)$ in table $\tsign^\uid$.
   
   \item Simulate sending $\alisa{\langle \uid, \cred, \sigma_{{\cred}},\pres,\rev,\sigma_\pres \rangle}$ to corrupted RP.

    \item \alisa{Parse $(\rev_I,\rev_U) \gets \rev$.} 
    
    \item\label{step:verify:rp:begin} On message $(\textsf{verify},\sid)$ from $\adv$ to $\funrevoke$ on behalf of RP, \alisa{define leakage $L$ as follows:}
    \begin{itemize}
    \item if storage is not corrupted, take \alisa{$L = \leak{verify}{\set{RP}}(U,RP,x) = (x,b')$};
    \item if storage is corrupted, take \alisa{$L = \leak{verify}{\set{\store,RP}}(U,RP,x) = (x,b',\revsids{x})$};
    \end{itemize}
    \alisa{where $x = \rev_I$ and $b' \in \revokedu$ is true iff $U$ has received a successful notification for $x$ before $RP$.} \alisa{Since an honest $U$ does not proceed to verification without a successful prior notification, $b' = \true$.} \alisa{The set $\revsids{x}$ should contain all session identifiers for which $x$ has been revoked so far, which corresponds to $\sids$ received from $\ifun$ for a corrupted $\store$. For all $\sid' \in \sids$ that ended up successfully, the simulator also adds $\rev_P$ to $\revoked$, where $P$ is the revoker party of the session $\sid'$.}

    Send $(\alisa{\textsf{verify-req}}, \sid, L)$ on behalf of $\funrevoke$ to $\adv$.
    
    \item Wait for input $(\alisa{\textsf{verify-req-ok}}, \sid)$ to $\funrevoke$ from $\adv$.
    
    
    \item \alisa{Assert $x \notin \revoked_{RP}$.} 
    
    \item\label{step:verify:rp:end} Output \alisa{$(\textsf{verify-resp-v}, \sid, x)$}  
    on behalf of $\funrevoke$ to $\adv$.
    
    \item \alisa{Repeat steps (\ref{step:verify:rp:begin})-(\ref{step:verify:rp:end}) with $x = \rev_U$.}

    \item On message $(\textsf{Verify},\sid,(c_{\rev},\rev))$ from $\adv$ to $\fundec^\uid$ on behalf of $RP$, if there exists a pair $(\rev,c_{\rev})$ in $\tenc^\uid$ and $\rev$ is unique, output $(\textsf{Verify-success},\sid)$ to $\adv$. Otherwise, output $(\textsf{Verify-fail},\sid)$ to $\adv$.

    \item For $i = \liza{0},\ldots,|\atrs|$, on message $(\textsf{Verify},\sid,(c_i,\atr_i))$ from $\adv$ to $\fundec^\uid$ on behalf of RP, if there exists a pair $(\atr_i, c_i)$ in $\tenc^\uid$ and $\atr_i$ is unique, output $(\textsf{Verify-success},\sid)$ to $\adv$. Otherwise, output $(\textsf{Verify-fail},\sid)$ to $\adv$.

    \item Send $(\textsf{verify-req-ok}, \sid)$ to $\ifun$.
\end{enumerate}

\underline{Malicious User:} It is important that revocations of honest revokers are covered by $\funrevoke$. It is possible that, for some $\rev$ included into $\funrevoke$ (revocations by a corrupted revoker), the corresponding credential $\cid$ is \emph{not} included into $\mathsf{T}_r$ of $\ifun$. In this case, instead of letting $\ifun$ rely on the check $(\_,\cid,\_) \notin \mathsf{T}_r$, the simulator will explicitly make $\ifun$ fail using the adversary's interface. 
\begin{enumerate}


   \item \alisa{Wait for $\adv$ to establish a connection $(\alisa{\verifyreq},\sid,\uid)$ with RP on behalf of $U$.}
   

    \item \alisa{On behalf of the RP, generate a random challenge $\ch \sample \mathcal{CH}$.} \liza{Send $\ch$ to $\adv$.} 

   \item \alisa{\alisa{Wait for $\adv$ to establish $(\alisa{\verifyreq},\sid,\uid)$ connection with the AMS. On behalf of AMS, participate in generating a blind signature using $\funsign^\uid$.}}
   
    \item Let $\adv$ choose the message $\pres'$ sent to RP on behalf of $U$. 
    
    \liza{Parse $\langle \uid, \cred, \sigma_{{\cred}},\pres=(\aidsd,\atrsd,\iid),\rev,\sigma_\pres \rangle \gets \pres'$.}

    \item \alisa{Look for an entry $(\cid,\cred,\sigma_\cred,\rev)$ in $\tiss$. If there is no such entry:}
    \begin{itemize}
    \item If $\iid \in \corrupted$, use silent issuing interface to generate an entry with a suitable data $\cred$, provided with a fresh identifier $\cid$. This creates a record $\cred \in \mathsf{T}_a$ in $\ifun$, with $(\_,\cid,\_) \notin \mathsf{T}_r$. If $\rev \in \revoked$, use silent revocation interface of $\ifun$ to add $(\mathsf{U},\cid,\mathsf{U})$ to $\mathsf{T}_r$.

    \Comment{In this case, the adversary may still cause verification to fail due to presentation signature check or failures by $\funrevoke$, but such a failure will be handled separately.}
    
    \item Otherwise, take $\cid = \bot$.
    
    \Comment{In this case, then the verification will fail in both $\ifun$ and the real protocol. This is because, if the issuer is honest, the adversary is only able to come up with valid issuer signatures that are stored in $\tiss$.}
    \end{itemize}
    
    \item \alisanotes{Send $(\textsf{input}, \uid, \textsf{verify-req},\sid,  \pres, \alisa{\textsf{RP}})$ to $\ifun$  to confirm proceeding.}
    
    \item Wait for $(\textsf{verify-req}, \sid)$ from $\ifun$.  

    \item \alisanotes{On behalf of RP, perform checks that $\aidsd$ are contained in $\cred$, and $|\aidsd| = |\atrs|$. Verify the signatures $\sigma_{\cred}$ and \liza{$\sigma_{\pres}$} using $\funs^\iid$ and $\funsign^\uid$.}
    
    
    \item\label{step:verify:u:begin}On message $(\textsf{verify},\sid,\alisa{x})$ from $\adv$ to $\funrevoke$ on behalf of $U$, set 
    \alisa{$L = \leak{verify}{\set{U}}(U,RP,x) = b'$, where $b' \in \revokedu$ is true iff $U$ has received a successful notification before $RP$ (as $U$ is corrupted, both events are known to the simulator)}. Send $(\alisa{\textsf{verify-req}}, \sid, L)$ on behalf of $\funrevoke$ to $\adv$.
    
    \item Wait for input \alisa{$(\textsf{verify-req-ok}, \sid)$} from $\adv$. 

    \item \alisa{Assert $x \notin \revoked_{RP}$} 
    
    \item\label{step:verify:u:end} Output \alisa{$(\textsf{verify-resp-u}, \sid, \ok)$} 
    on behalf of $\funrevoke$ to $\adv$.

    \item \alisa{Parse $(\rev_I,\rev_U) = \rev$. On behalf of RP, check $x = \rev_I$. Repeat steps (\ref{step:verify:u:begin})-(\ref{step:verify:u:end}) and check $x = \rev_U$.}

     \item On message $(\textsf{Verify},\sid,(c_{\rev},\rev))$ from $\adv$ to $\fundec^{\uid}$ on behalf of $U$, if there exists a pair $(\rev,c_{\rev})$ in $\tenc^\uid$ and $\rev$ is unique, output $(\textsf{Verify-success},\sid)$ to RP. Otherwise, output $(\textsf{Verify-fail},\sid)$ to RP.

    \item For $i = \liza{0},\ldots,|\atrs|$, on message $(\textsf{Verify},\sid,(c_i,\atr_i))$ from $\adv$ to $\fundec^{\uid}$ on behalf of $U$, if there exists a pair $(\atr_i,c_i)$ in $\tenc^\uid$ and $\atr_i$ is unique, output $(\textsf{Verify-success},\sid)$ to RP. Otherwise, output $(\textsf{Verify-fail},\sid)$ to RP.    

    \item If all checks by RP succeed, send $(\textsf{verify-req-ok}, \sid)$ to $\ifun$.
    
\end{enumerate}

\subsection{Corruption}\label{app:fullproof:corrupt}
\alisa{In the real protocol, the adversary is corrupting \emph{parties}. Whenever a party receives a special corruption message, it is considered corrupted in every subfunctionality ($\fundec^{\uid}$, $\funs^{\iid}$, $\funrevoke$) used by the protocol. The simulator needs to propagate this corruption to $\ifun$.}On input $(\textsf{corrupt})$ from $\adv$ to a party $P$, send $(\textsf{corrupt-party},P)$ to $\ifun$.

\alisa{For the functionality $\funsign$, the corruptions are handled differently, as the adversary may explicitly send messages $(\textsf{corrupt-server})$ and $(\textsf{corrupt-user},\ell)$ to $\funsign$. For all $\uid$, the simulator constantly observers the internal state of $\funsign^\uid$, acting as follows:}
\begin{itemize}
\item \alisa{Whenever $c_{\mathsf{S}}=1$ is set in $\funsign$, send $(\textsf{corrupt-party},\store)$ to $\ifun$.}
\item \lizanew{If setup has not been performed, \alisa{only $c_{\mathsf{U}} = 3$ can be set in $\funsign$} (full corruption). This implies that adversary has full control over the user's device and there are no privacy guarantees for user any more.} \alisa{Send $(\textsf{corrupt-party},\uid)$ to $\ifun$, so that it will set $c_{\mathsf{U}} = 3$ as well.}
\item \lizanew{If the setup has been completed} 
\alisa{and $c_{\mathsf{U}} = 1$ is set in $\funsign$, send $(\textsf{corrupt-party},\uid)$ to $\ifun$, so that it will increase $c_{\mathsf{U}} = 1$ as well.}
This happens when adversary has managed to learn phone’s memory sometime between threshold signing sessions and has copied encrypted memory.
\item \alisa{If the setup has been completed}, 
\alisa{and $c_{\mathsf{U}}=2$ is set in $\funsign$, send $(\textsf{corrupt-party},\uid)$ to $\ifun$, so that it will increase $c_{\mathsf{U}} = 2$ as well.} This happens when $\adv$ guesses PIN during the authentication, having managed to gain access to the secret key storage. \lizanew{In practice, this means that adversary can be able to initiate multiple interactions with AMS server (by creating correct authentication signature) before the legitimate user tries to initiate signing transaction. At the legitimate user's signing attempt, clone detection mechanism will trigger and the user will be blocked, meaning that neither the legitimate user nor the adversary will be able to initiate any further request.} 

\end{itemize}

Whenever corruption level $c_{\mathsf{U}} = 2$ is reached, $\ifun$ sends a special message $(\textsf{full-leakage},\sid,\uid,(\mathsf{T}^{\uid}_a, \mathsf{T}^{\uid}_r))$, which contains projections of tables to the user $\uid$. \alisa{As discussed above, the practical leakage will be limited due to clone detection mechanism, but for safety we are assuming the worst-case here, as we do not know what exactly the adversary will manage to decrypt before being detected.} Update the internal tables as follows:
\begin{itemize}
\item For all $(\uid,\cid,\aids,\atr) \in \mathsf{T}_a$, find an entry in $\tiss$ with the corresponding $\cid$ and $\cred = \langle (\aid_0, c_0),  (\aid_1, c_1), \dots, (\aid_n,c_n), (\aid_{\rev}, c_{\rev}) \rangle$ . For every pair $(\aid_i,c_i)$, update the entry $(\atr_i,c_i)$ in $\tenc^\uid$ for the given $c_i$. If some $(\atr_i,c_i)$ has already been opened to $\adv$ before (during presentation), then it is the same $\atr_i$, taken from $\mathsf{T}_a$.

\item For all $(\uid,\cid,P) \in \mathsf{T}_r$, find an entry  in $\tiss$ with the $\cid$ and the corresponding revocation token $\rev \alisa{=(\rev_I,\rev_U)}$. Add $\rev_P$ to the set $\revoked$ of the simulated $\funrevoke$ (it is possible that $\rev_P$ is already there). This ensures that $\cid$ will be treated as revoked during subsequent presentations.




\end{itemize}

\subsection{Indistinguishability of the real and the simulated protocol.}

\paragraph{Issuing}
In the case of honest issuer, simulator indistinguishably simulates issuance of credential 

\[ \cred = \langle (\aid_0,c_0), (\aid_1,c_1), \dots, (\aid_n,c_n), (\aid_{\rev},c_{\rev}) \rangle \] by \alisa{including the public key $\pk_{\uid}$ of the user $\alisa{\uid}$ into $c_0$}, taking attribute values $\atr_i$ from the pre-defined set $\mathcal{M}_{\aid_i}$ (that is the same as in the real protocol) \alisa{for $i = 1..n$}, sampling revocation token from $\tokens$, assembling information together and creating signature. Indistinguishability of the signing and encryption processes is inherited from the underlying used constructions. Number of required attributes $n$ and the attribute identifiers $\aids$ are received from the $\ifun$. \alisanotes{The values  $\rev$ are allowed to repeat for a corrupted user.}

In the case of malicious issuer, adversary itself generates credential for the user. Before saving issued credential to the internal tables, simulator verifies that credential contains correct revocation token $\rev \in \tokens$ and correct attribute values $\atr_i \in \mathcal{M}_{\aid_i}$. Since $\rev$ is freshly sampled, we have that $\rev \notin \revoked$ and $\cid$ is unique with overwhelming probability.

In the case of malicious user, the issuer generates the credential for the user identity $\uid$ that is eligible to receive the credential, even if the request is sent on behalf of a malicious party. If $\uid$ is honest, the issuer will only generate a credential if $\uid$ has indeed requested it, which is proven by the fresh signature of $\uid$. If AMS is corrupted, it is possible that the user $\uid$ will not receive the credential, but the adversary does not learn the attribute values, since they are encrypted by the public key of $\uid$.

The simulator saves the credential that passes all checks to the internal table $\tiss$. A credential with $\cid$ containing $\uid$ and $\iid$ is only added to $\tiss$ if the following conditions are satisfied:
\begin{itemize}
\item The credential has indeed been requested by the user $\uid$ (if $\uid$ is honest).
\item The credential has indeed been issued by the issuer $\iid$ (if $\iid$ is honest).
\end{itemize}
Here unfairness w.r.t. user is allowed, i.e. the user will not necessarily receive the credenital due to network delay.

\paragraph{Presentation}

If the user is honest, presentation process is all simulated and adversary does not get to see credential presentation $\pres$. This allows simulator to simulate sub-protocols of the presentation phase with randomly selected credential. 
Usage of the blind threshold decryption ensures that the adversary does not learn which attribute is being decrypted. Thus, in this case simulated protocol is indistinguishable from the real protocol.


In the case of malicious user, adversary is able to initiate presentation process only if it has guessed $\pin$ from the underlying threshold signature scheme. If the $\pin$ check is successful, simulator simulates all the answers to the sub-protocol queries using its internal tables for the decryption and revocation using ciphertexts $c_{\atr_i}$ and $c_{\rev}$ supplied by adversary \alisa{(note that adversary may decrypt \emph{any} ciphertexts of $\uid$ at this point, and this is why there are no privacy guarantees for the user when achieving $c_{\mathsf{U}} = 2$).} The adversary expects to see the correct notification from $\funrevoke$. This is indeed so, as the simulated $\revoked$ in this case is consistent with the actually revoked tokens (either by the user or some issuer), both honest and corrupted. In the end of presentation, the output is produced by the simulator, and there is no effect on $\ifun$.


\paragraph{Verification}  
If the RP is malicious, it sees the presentation that is output of the ideal functionality, correctly constructed from the attributes requested by the user. 
We need to additionally ensure that the simulated revocation token is consistent with the previous revocations. During the previous protocol phases, whenever credential $\cid$ of an honest user has been revoked, the following has happened:
\begin{itemize}
\item In the cases when $\ifun$ leaked $\cid$ to the simulator, the corresponding $\rev$ was added to $\revoked$ 

\item Otherwise, adding $\rev$ to $\revoked$ was postponed (a random $\rev$ has been revoked, which coincides with at least one real token with probability \alisa{$\frac{n}{N}$ for $n$ issued credentials}). During the verification, the simulator has added all relevant tokens $\rev$ to $\revoked$ as soon as it received the corresponding information from $\ifun$, before simulating verification using $\funrevoke$.
\end{itemize}

If user is malicious and RP is honest, the simulated challenge $\ch$ is unique with probability $m/|\mathcal{CH}|$ for $m$ verification sessions. The simulator performs checks of credential presentation as an honest verifier. The simulated protocol accepts the credential only if
\begin{itemize}
    \item credential signature $\sigma_{\cred}$ verifies,
    \item revocation token is not among the revoked tokens: $\rev \notin \revoked$,
    \item disclosed attributes $\atr_i$ correspond to ciphertexts from credential $\cred$.
    \item \alisa{the user could respond to the fresh challenge, proving ownership of $\pk_{\uid}$}
\end{itemize}
As malicious verifications are only allowed in the case $c_{\mathsf{U}}=2$ and \alisa{$c_{\mathsf{U}}=3$}, it holds that either (1) the user has been fully corrupted from the beginning or (2) the attacker guessed PIN of the user. In the case (1), the simulator has all the data of this user included into internal tables, so all previous checks can be done against them. In the case (2), some data of this user stored in the simulator tables (up to the first corruption) is garbage. However, the true data is provided by $\ifun$ upon corruption, and the simulator fills its local tables accordingly, ensuring consistency of simulated tables with the tables of $\ifun$.

\paragraph{Revocation by issuer}
In case of malicious issuer, adversary knows the connection between the credential identifier and the revocation token and expects correct credential to be revoked. Simulator uses its internal table $\tiss$ to ensure that correct credential is marked as being revoked.

If AMS is corrupted, and $\cid$ has already been presented to a corrupted RP, or it belongs to a corrupted user, then $\rev_I$ should be leaked to the adversary by $\funrevoke$. In both cases, $\cid$ is leaked by $\ifun$, and the simulator takes the corresponding $\rev_I$ from $\tiss$.

If AMS is corrupted, but the credential has not been presented to a corrupted RP yet, and it belongs to an honest user, then the simulator does not receive $\cid$ from $\ifun$. However, as $\funrevoke$ does not leak $\rev_I$ neither, using random $\rev_I$ is indistinguishable from using the actual revocation token $\rev_I$ from $\tiss$ \alisa{(unless $\rev_I$ coincides by chance with some legitimate token, which is $\frac{n}{N}$ for $n$ issued credentials)}.


\paragraph{User revocation}
If the user is corrupted, adversary is able to initiate revocation process only if it has guessed $\pin$ from the underlying threshold signature scheme. If the $\pin$ check is successful, then either the user has been corrupted from the beginning \alisa{$(c_{\mathsf{U}} = 3)$}, or we have $c_{\mathsf{U}} = 2$, and in both cases the simulator has filled its internal tables according to $\mathsf{T}_a$ and $\mathsf{T}_r$ of $\ifun$, so it has all necessary data for the simulation. The simulator simulates all the answers to the sub-protocol queries using its internal tables for the decryption and revocation. 

If AMS is corrupted, and $\cid$ has already been presented to a corrupted RP, then $\rev_U$ should be leaked to the adversary by $\funrevoke$. In this case, $\cid$ is leaked by $\ifun$, and the simulator takes the corresponding $\rev_U$ from $\tiss$.

If AMS is corrupted, but the credential has not been presented to a corrupted RP yet, then the simulator does not receive $\cid$ from $\ifun$. However, as $\funrevoke$ does not leak $\rev_U$ neither, and decryption by means of $\fundec^{\uid}$ is blind, using a random $\rev_U$ is indistinguishable from using the actual revocation token $\rev_U$ from $\tiss$ \alisa{(unless $\rev_U$ coincides by chance with some legitimate token, which is $\frac{n}{N}$ for $n$ issued credentials)}.

%% file: ref2.bib
@article{covert,
  title={Security against covert adversaries: Efficient protocols for realistic adversaries},
  author={Aumann, Yonatan and Lindell, Yehuda},
  journal={Journal of Cryptology},
  volume={23},
  number={2},
  pages={281--343},
  year={2010},
  publisher={Springer}
}

@misc{funrevoke,
      author = {Alisa Pankova and Jelizaveta Vakarjuk},
      title = {Credential Revocation Assisted by a Covertly Corrupted Server},
      howpublished = {Cryptology {ePrint} Archive, Paper 2025/1854},
      year = {2025},
      url = {https://eprint.iacr.org/2025/1854}
}

@inproceedings{cacs,
  title={CACS: A Cloud Privacy-Preserving Attribute Management System},
  author={Kalu, Aivo and Kus, Burak Can and Laud, Peeter and Leung, Kin Long and Snetkov, Nikita and Vakarjuk, Jelizaveta},
  booktitle={Proceedings of the 18th International Conference on Availability, Reliability and Security},
  pages={1--9},
  year={2023}
}

@inproceedings{reqs,
  title={Cryptographic requirements of verifiable credentials for digital identification documents},
  author={Richter, Maximilian and Bertram, Magdalena and Seidensticker, Jasper and Margraf, Marian},
  booktitle={2023 IEEE 47th Annual Computers, Software, and Applications Conference (COMPSAC)},
  pages={1663--1668},
  year={2023},
  organization={IEEE}
}

@article{frigo2024anonymous,
  title={Anonymous credentials from ECDSA},
  author={Frigo, Matteo and others},
  journal={Cryptology ePrint Archive},
  year={2024}
}

@article{ramic2024selective,
  title={Selective disclosure in digital credentials: A review},
  author={Rami{\'c}, {\v{S}}eila Be{\'c}irovi{\'c} and Cogo, Ehlimana and Prazina, Irfan and Cogo, Emir and Turkanovi{\'c}, Muhamed and Mulahasanovi{\'c}, Razija Tur{\v{c}}inhod{\v{z}}i{\'c} and Mrdovi{\'c}, Sa{\v{s}}a},
  journal={ICT Express},
  year={2024},
  publisher={Elsevier}
}

@inproceedings{clonewars,
  title={How to win the clonewars: efficient periodic n-times anonymous authentication},
  author={Camenisch, Jan and Hohenberger, Susan and Kohlweiss, Markulf and Lysyanskaya, Anna and Meyerovich, Mira},
  booktitle={Proceedings of the 13th ACM conference on Computer and communications security},
  pages={201--210},
  year={2006}
}

@inproceedings{splitkey,
  title={Server-supported RSA signatures for mobile devices},
  author={Buldas, Ahto and Kalu, Aivo and Laud, Peeter and Oruaas, Mart},
  booktitle={Computer Security--ESORICS 2017: 22nd European Symposium on Research in Computer Security, Oslo, Norway, September 11-15, 2017, Proceedings, Part I 22},
  pages={315--333},
  year={2017},
  organization={Springer}
}

@String{Computing = "Computing" }

@String{Computer = "{IEEE} Computer" }

@String{Springer = "Springer-Verlag" }

@ArtifactSoftware{R,
    title = {R: A Language and Environment for Statistical Computing},
    author = {{R Core Team}},
    organization = {R Foundation for Statistical Computing},
    address = {Vienna, Austria},
    year = {2019},
    url = {https://www.R-project.org/},
}

@misc{sd-jwt,
    series =    {Request for Comments},
    number =    9901,
    howpublished =  {RFC 9901},
    publisher = {RFC Editor},
    doi =       {10.17487/RFC9901},
    url =       {https://www.rfc-editor.org/info/rfc9901},
    author =    {Daniel Fett and Kristina Yasuda and Brian Campbell},
    title =     {{Selective Disclosure for JSON Web Tokens}},
    pagetotal = 88,
    year =      2025,
    month =     nov,
}

@inproceedings{idemix,
  author       = {Jan Camenisch and
                  Anna Lysyanskaya},
  editor       = {Birgit Pfitzmann},
  title        = {An Efficient System for Non-transferable Anonymous Credentials with
                  Optional Anonymity Revocation},
  booktitle    = {Advances in Cryptology - {EUROCRYPT} 2001, International Conference
                  on the Theory and Application of Cryptographic Techniques, Innsbruck,
                  Austria, May 6-10, 2001, Proceeding},
  series       = {Lecture Notes in Computer Science},
  volume       = {2045},
  pages        = {93--118},
  publisher    = {Springer},
  year         = {2001},
  url          = {https://doi.org/10.1007/3-540-44987-6\_7},
  bibsource    = {dblp computer science bibliography, https://dblp.org}
}

@misc{u-prove,
author = {Paquin, Christian and Zaverucha, Greg},
title = {U-Prove Cryptographic Specification V1.1 (Revision 3)},
year = {2013},
month = {December},
publisher = {Microsoft},
url = {https://www.microsoft.com/en-us/research/publication/u-prove-cryptographic-specification-v1-1-revision-3/},
note = {Released under the Open Specification Promise (http://www.microsoft.com/openspecifications/en/us/programs/osp/default.aspx)},
}

@inproceedings{UC,
  author       = {Ran Canetti},
  title        = {{Universally Composable Security: {A} New Paradigm for Cryptographic
                  Protocols}},
  booktitle    = {42nd Annual Symposium on Foundations of Computer Science, {FOCS} 2001,
                  14-17 October 2001, Las Vegas, Nevada, {USA}},
  pages        = {136--145},
  publisher    = {{IEEE} Computer Society},
  year         = {2001},
  url          = {https://doi.org/10.1109/SFCS.2001.959888},
  bibsource    = {dblp computer science bibliography, https://dblp.org}
}

@inproceedings{delegatable,
  author       = {Jan Camenisch and
                  Manu Drijvers and
                  Maria Dubovitskaya},
  editor       = {Bhavani Thuraisingham and
                  David Evans and
                  Tal Malkin and
                  Dongyan Xu},
  title        = {{Practical UC-Secure Delegatable Credentials with Attributes and Their Application to Blockchain}},
  booktitle    = {Proceedings of the 2017 {ACM} {SIGSAC} Conference on Computer and
                  Communications Security, {CCS} 2017, Dallas, TX, USA, October 30 -
                  November 03, 2017},
  pages        = {683--699},
  publisher    = {{ACM}},
  year         = {2017},
  url          = {https://doi.org/10.1145/3133956.3134025},
  bibsource    = {dblp computer science bibliography, https://dblp.org}
}

@inproceedings {bind,
author = {Julia Hesse and Nitin Singh and Alessandro Sorniotti},
title = {How to Bind Anonymous Credentials to Humans},
booktitle = {32nd USENIX Security Symposium (USENIX Security 23)},
year = {2023},
isbn = {978-1-939133-37-3},
address = {Anaheim, CA},
pages = {3047--3064},
url = {https://www.usenix.org/conference/usenixsecurity23/presentation/hesse},
publisher = {USENIX Association},
month = aug
}

@misc{wallet-arf,
title = {{The European Digital Identity Wallet Architecture and Reference Framework V2.7.3 }},
year = {2025},
url = {https://eudi.dev/2.7.3/architecture-and-reference-framework-main/},
}

@inproceedings{threshold-decryption,
  title={Privacy-preserving server-supported decryption},
  author={Laud, Peeter and Pankova, Alisa and Vakarjuk, Jelizaveta},
  booktitle={2025 IEEE 38th Computer Security Foundations Symposium (CSF)},
  pages={127--142},
  year={2025},
  organization={IEEE}
}

@inproceedings{atr-with-pairings,
  author       = {Jan Camenisch and
                  Anna Lysyanskaya},
  editor       = {Matthew K. Franklin},
  title        = {{Signature Schemes and Anonymous Credentials from Bilinear Maps}},
  booktitle    = {Advances in Cryptology - {CRYPTO} 2004, 24th Annual International
                  CryptologyConference, Santa Barbara, California, USA, August 15-19,
                  2004, Proceedings},
  series       = {Lecture Notes in Computer Science},
  volume       = {3152},
  pages        = {56--72},
  publisher    = {Springer},
  year         = {2004},
  url          = {https://doi.org/10.1007/978-3-540-28628-8\_4},
  bibsource    = {dblp computer science bibliography, https://dblp.org}
}

@inproceedings{atr-with-pairings-2,
  author       = {Sietse Ringers and
                  Eric R. Verheul and
                  Jaap{-}Henk Hoepman},
  editor       = {Aggelos Kiayias},
  title        = {{An Efficient Self-blindable Attribute-Based Credential Scheme}},
  booktitle    = {Financial Cryptography and Data Security - 21st International Conference,
                  {FC} 2017, Sliema, Malta, April 3-7, 2017, Revised Selected Papers},
  series       = {Lecture Notes in Computer Science},
  volume       = {10322},
  pages        = {3--20},
  publisher    = {Springer},
  year         = {2017},
  url          = {https://doi.org/10.1007/978-3-319-70972-7\_1},
  bibsource    = {dblp computer science bibliography, https://dblp.org}
}

@inproceedings{atr-with-pairings-3,
  author       = {Amira Barki and
                  Solenn Brunet and
                  Nicolas Desmoulins and
                  Jacques Traor{\'{e}}},
  editor       = {Roberto Avanzi and
                  Howard M. Heys},
  title        = {Improved Algebraic MACs and Practical Keyed-Verification Anonymous
                  Credentials},
  booktitle    = {Selected Areas in Cryptography - {SAC} 2016 - 23rd International Conference,
                  St. John's, NL, Canada, August 10-12, 2016, Revised Selected Papers},
  series       = {Lecture Notes in Computer Science},
  volume       = {10532},
  pages        = {360--380},
  publisher    = {Springer},
  year         = {2016},
  url          = {https://doi.org/10.1007/978-3-319-69453-5\_20},
  bibsource    = {dblp computer science bibliography, https://dblp.org}
}

@inproceedings{towards-cloud,
  author       = {Stephan Krenn and
                  Thomas Lor{\"{u}}nser and
                  Anja Salzer and
                  Christoph Striecks},
  editor       = {Srdjan Capkun and
                  Sherman S. M. Chow},
  title        = {{Towards Attribute-Based Credentials in the Cloud}},
  booktitle    = {Cryptology and Network Security - 16th International Conference, {CANS}
                  2017, Hong Kong, China, November 30 - December 2, 2017, Revised Selected
                  Papers},
  series       = {Lecture Notes in Computer Science},
  volume       = {11261},
  pages        = {179--202},
  publisher    = {Springer},
  year         = {2017},
  url          = {https://doi.org/10.1007/978-3-030-02641-7\_9},
  bibsource    = {dblp computer science bibliography, https://dblp.org}
}

@inproceedings{breaking-fixing,
  author       = {Ulrich Hab{\"{o}}ck and
                  Stephan Krenn},
  editor       = {Yi Mu and
                  Robert H. Deng and
                  Xinyi Huang},
  title        = {Breaking and Fixing Anonymous Credentials for the Cloud},
  booktitle    = {Cryptology and Network Security - 18th International Conference, {CANS}
                  2019, Fuzhou, China, October 25-27, 2019, Proceedings},
  series       = {Lecture Notes in Computer Science},
  volume       = {11829},
  pages        = {249--269},
  publisher    = {Springer},
  year         = {2019},
  url          = {https://doi.org/10.1007/978-3-030-31578-8\_14},
  bibsource    = {dblp computer science bibliography, https://dblp.org}
}

@misc{regulation,
      title = {{Regulation (EU) 2024/1183 of the European Parliament and of the Council of 11 April 2024 amending Regulation (EU) No 910/2014 as regards establishing the European Digital Identity Framework}},
      year = {2024},
      url = {https://eur-lex.europa.eu/eli/reg/2024/1183/oj}
}

@techreport{mDL,
address = {Geneva, CH},
type = {Standard},
title = {{Personal identification — ISO-compliant driving licence — Part 5: Mobile driving licence (mDL) application}},
shorttitle = {{ISO}/{IEC} 18013-5:2021},
url = {https://www.iso.org/standard/69084.html},
language = {en},
number = {ISO/IEC 18013-5:2021},
institution = {International Organization for Standardization},
author = {{ISO Central Secretary}},
year = {2021}
}

@techreport{mDL-whitepaper,
type = {Standard},
title = {{The Mobile Driver’s License (mDL) and Ecosystem}},
url = {https://www.securetechalliance.org/wp-content/uploads/Mobile-Drivers-License-WP-FINAL-Update-March-2020-4.pdf},
language = {en},
institution = {Secure Technology Alliance},
author = {{Secure Technology Alliance Identity Council}},
year = {2020}
}

@article{cryptographers,
  title={Cryptographers’ Feedback on the EU Digital Identity’s ARF},
  author={Baum, Carsten and Blazy, Olivier and Camenisch, Jan and Hoepman, Jaap-Henk and Lee, Eysa and Lehmann, Anja and Lysyanskaya, Anna and Mayrhofer, Ren{\'e} and Montgomery, Hart and Nguyen, Ngoc Khanh and others},
  journal={Tech. Rep.},
  year={2024}
}

@misc{blind-splitkey,
      author = {Nikita Snetkov and Mihkel Jaas Karu and Jelizaveta Vakarjuk and Alisa Pankova},
      title = {Coppercloud: Blind Server-Supported {RSA} Signatures},
      howpublished = {Cryptology {ePrint} Archive, Paper 2025/1824},
      year = {2025},
      url = {https://eprint.iacr.org/2025/1824}
}

@inproceedings{UC-cert,
  author       = {Ran Canetti},
  title        = {Universally Composable Signature, Certification, and Authentication},
  booktitle    = {17th {IEEE} Computer Security Foundations Workshop, {(CSFW-17} 2004),
                  28-30 June 2004, Pacific Grove, CA, {USA}},
  pages        = {219},
  publisher    = {{IEEE} Computer Society},
  year         = {2004},
  url          = {https://doi.ieeecomputersociety.org/10.1109/CSFW.2004.24},
  doi          = {10.1109/CSFW.2004.24},
  bibsource    = {dblp computer science bibliography, https://dblp.org}
}

@inproceedings{canetti-pke,
  author       = {Ran Canetti and
                  Shai Halevi and
                  Jonathan Katz},
  editor       = {Joe Kilian},
  title        = {Adaptively-Secure, Non-interactive Public-Key Encryption},
  booktitle    = {Theory of Cryptography, Second Theory of Cryptography Conference,
                  {TCC} 2005, Cambridge, MA, USA, February 10-12, 2005, Proceedings},
  series       = {Lecture Notes in Computer Science},
  volume       = {3378},
  pages        = {150--168},
  publisher    = {Springer},
  year         = {2005},
  url          = {https://doi.org/10.1007/978-3-540-30576-7\_9},
  doi          = {10.1007/978-3-540-30576-7\_9},
  bibsource    = {dblp computer science bibliography, https://dblp.org}
}
